\documentclass[11pt]{article}%

\ifx\SODAVer\undefined%
\newcommand{\SODA}[1]{}
\newcommand{\notSODA}[1]{#1}
\else
\newcommand{\SODA}[1]{#1}
\newcommand{\notSODA}[1]{}
\fi

\SODA{%
   \def\CiteCrazy{1}%
}%

\def\UseBibLatex{1}

\makeatletter
\def\input@path{{styles/}}
\makeatother

\providecommand{\BibLatexMode}[1]{}
\providecommand{\BibTexMode}[1]{}

\renewcommand{\BibLatexMode}[1]{#1}
\renewcommand{\BibTexMode}[1]{}

\ifx\UseBibLatex\undefined%
  \renewcommand{\BibLatexMode}[1]{}
  \renewcommand{\BibTexMode}[1]{#1}
\fi

\BibLatexMode{%
   \usepackage[bibencoding=utf8,style=alphabetic,backend=biber]{biblatex}%
   \usepackage{sariel_biblatex}%
   \usepackage{biblatex_authors}
}

\usepackage{amsmath}%
\usepackage{amssymb}%
\usepackage{mathtools}
\usepackage[table]{xcolor}%
\usepackage{wrapfig}

\usepackage{salgorithm}%

\usepackage[amsmath,thmmarks]{ntheorem}%

\usepackage{titlesec}%
\usepackage{xcolor}%
\usepackage{mleftright}%
\usepackage{xspace}%
\usepackage{graphicx}
\usepackage{hyperref}%
\usepackage[inline]{enumitem}
\usepackage{hyperref}%
\usepackage[ocgcolorlinks]{ocgx2}

\usepackage{mathcalb}

\usepackage{microtype}

\usepackage{caption}

\hypersetup{%
      unicode,
      breaklinks,%
      colorlinks=true,%
      urlcolor=[rgb]{0.25,0.0,0.0},%
      linkcolor=[rgb]{0.5,0.0,0.0},%
      citecolor=[rgb]{0,0.2,0.445},%
      filecolor=[rgb]{0,0,0.4},
      anchorcolor=[rgb]={0.0,0.1,0.2}%
}

\titlelabel{\thetitle. }%

\theoremseparator{.}%

\theoremstyle{plain}%
\newtheorem{theorem}{Theorem}[section]

\newtheorem{lemma}[theorem]{Lemma}

\newtheorem{corollary}[theorem]{Corollary}

\newtheorem{observation}[theorem]{Observation}

\theoremstyle{plain}%
\theoremheaderfont{\sf} \theorembodyfont{\upshape}%
\newtheorem*{remark:unnumbered}[theorem]{Remark}%
\newtheorem{remark}[theorem]{Remark}%

\newtheorem{defn}[theorem]{Definition}

\theoremheaderfont{\em}%
\theorembodyfont{\upshape}%
\theoremstyle{nonumberplain}%
\theoremseparator{}%
\theoremsymbol{\myqedsymbol}%
\newtheorem{proof}{Proof:}%

\definecolor{blue25emph}{rgb}{0, 0, 0.4}
\definecolor{red25emph}{rgb}{0.4, 0, 0}

\newcommand{\emphi}[1]{\emphw{#1}}

\definecolor{almostblack}{rgb}{0, 0.0, 0.3}

\providecommand{\emphw}[1]{}%
\renewcommand{\emphw}[1]{{\textcolor{almostblack}{\emph{#1}}}}%

\providecommand{\emphOnly}[1]{}%
\renewcommand{\emphOnly}[1]{\emph{\textcolor{blue25emph}{\textbf{#1}}}}

\newcommand{\myqedsymbol}{\rule{2mm}{2mm}}
\newcommand{\SarielThanks}[1]{%
   \thanks{%
      School of Computing and Data Science; %
      University of Illinois; %
      201 N. Goodwin Avenue; %
      Urbana, IL, 61801, USA; %
      \href{mailto:spam@illinois.edu}{sariel@illinois.edu}; %
      \url{http://sarielhp.org/}.%
   #1%
   }%
}

\newcommand{\HLink}[2]{\hyperref[#2]{#1~\ref*{#2}}}
\newcommand{\HLinkY}[2]{\hyperref[#2]{#1}}
\newcommand{\HLinkSuffix}[3]{\hyperref[#2]{#1\ref*{#2}{#3}}}

\newcommand{\figlab}[1]{\label{fig:#1}}
\newcommand{\figref}[1]{\HLink{Figure}{fig:#1}}

\newcommand{\thmlab}[1]{{\label{theo:#1}}}
\newcommand{\thmref}[1]{\HLink{Theorem}{theo:#1}}

\providecommand{\deflab}[1]{\label{def:#1}}
\newcommand{\defref}[1]{\HLink{Definition}{def:#1}}
\newcommand{\defrefY}[2]{\HLinkY{#2}{def:#1}}

\newcommand{\corlab}[1]{\label{cor:#1}}
\newcommand{\corref}[1]{\HLink{Corollary}{cor:#1}}%

\newcommand{\obslab}[1]{\label{observation:#1}}
\newcommand{\obsref}[1]{\HLink{Observation}{observation:#1}}

\newcommand{\tbllab}[1]{\label{table:#1}}
\newcommand{\tblref}[1]{\HLink{Table}{table:#1}}

\newcommand{\lemlab}[1]{\label{lemma:#1}}
\newcommand{\lemref}[1]{\HLink{Lemma}{lemma:#1}}%

\newcommand{\itemlab}[1]{\label{item:#1}}
\newcommand{\itemref}[1]{\HLinkSuffix{}{item:#1}{}}

\providecommand{\eqlab}[1]{}%
\renewcommand{\eqlab}[1]{\label{equation:#1}}

\newcommand{\seclab}[1]{\label{sec:#1}}
\newcommand{\secref}[1]{\HLink{Section}{sec:#1}}

\providecommand{\eqlab}[1]{}%
\renewcommand{\eqlab}[1]{\label{equation:#1}}
\newcommand{\Eqref}[1]{\HLinkSuffix{Eq.~(}{equation:#1}{)}}

\providecommand{\remove}[1]{}%
\newcommand{\Set}[2]{\left\{ #1 \;\middle\vert\; #2 \right\}}

\newcommand{\pth}[1]{\mleft(#1\mright)}%

\newcommand{\cardin}[1]{\left\lvert {#1} \right\rvert}%

\renewcommand{\th}{-th\xspace}
\renewcommand{\th}{th\xspace}
\renewcommand{\th}{\ensuremath{\hphantom{}^{\,\mathrm{th}}}\xspace}

\newcommand{\ZZ}{\mathbb{Z}}%
\renewcommand{\Re}{\mathbb{R}}%
\newcommand{\N}{\mathbb{N}}

\newlist{compactenumA}{enumerate}{5}%
\setlist[compactenumA]{itemsep=-0.5ex,topsep=0.5ex,partopsep=1ex,parsep=1ex,%
   label=(\Alph*)}%

\newlist{compactenuma}{enumerate}{5}%
\setlist[compactenuma]{itemsep=-0.5ex,topsep=0.5ex,partopsep=1ex,parsep=1ex,%
   label=(\alph*)}%

\newlist{compactenumI}{enumerate}{5}%
\setlist[compactenumI]{itemsep=-0.25ex,topsep=0.7ex,partopsep=1ex,parsep=1ex,%
   label=(\Roman*)}%

\newlist{compactenumi}{enumerate}{5}%
\setlist[compactenumi]{itemsep=-0.5ex,topsep=0.5ex,partopsep=1ex,parsep=1ex,%
   label=(\roman*)}%

\newlist{compactenumi*}{enumerate*}{5}%
\setlist[compactenumi*]{label=(\roman*)}%

\newlist{compactitem}{itemize}{5}%
\setlist[compactitem]{itemsep=-0.5ex,topsep=0.5ex,partopsep=1ex,parsep=1ex,%
   label=\ensuremath{\bullet}}%

\providecommand{\etal}{\textit{et~al.}\xspace}

\numberwithin{figure}{section}%
\numberwithin{table}{section}%
\numberwithin{equation}{section}%

\usepackage{stmaryrd}%

\newcommand{\IRX}[1]{\left[#1 \right]}%
\newcommand{\IRY}[2]{\left[#1 : #2 \right]}%

\newcommand{\IOCX}[1]{\left( #1 \right]}%
\newcommand{\ICOX}[1]{\left[ #1 \right)}%

\newcommand{\suppX}[1]{{\mathsf{S}}\pth{#1}}

\newcommand{\LL}{\mathcal{L}}%

\newcommand{\dY}[2]{\left\| #1 - #2 \right\|}

\newcommand{\normX}[1]{\left\| #1 \right\|}
\newcommand{\simpY}[2]{\mathrm{simp}_{#1}\pth{#2}}

\newcommand{\F}{\mathcal{F}}%

\newcommand{\distC}{\mathcalb{d}}
\newcommand{\dFmC}{\distC_m}
\newcommand{\dFmY}[2]{\dFmC\pth{#1,#2}}

\newcommand{\DistFrechetC}{d_F}

\newcommand{\dFY}[2]{\DistFrechetC\pth{#1,#2}}

\newcommand{\dFC}{\DistFrechetC}
\newcommand{\dFdY}[2]{\dFC\pth{#1,#2}} %

\newcommand{\passX}[1]{\Gamma_{#1}}
\newcommand{\passY}[2]{\Gamma(#1, #2)}

\newcommand{\rad}{\delta}%

\newcommand{\G}{\mathsf{G}}%
\newcommand{\eps}{\varepsilon}%
\newcommand{\epsA}{\xi}%
\newcommand{\Frechet}{Fr\'{e}chet\xspace}

\newcommand{\BFS}{\Term{BFS}\index{BFS}\xspace}

\newcommand{\Term}[1]{\textsf{#1}}

\newcommand{\SETH}{\Term{SETH}\xspace}

\newcommand{\DAG}{\Term{DAG}\xspace}

\newcommand{\gapX}[1]{\mathrm{gap}\pth{#1}}%

\newcommand{\vOpt}{v^\star}

\newcommand{\decider}{\AlgorithmI{decider}\xspace}

\newcommand{\fseq}{sequential\xspace}

\renewcommand{\fseq}{marching\xspace}

\newcommand{\spread}{\Psi}
\newcommand{\spreadX}[1]{\spread\pth{#1}}

\usepackage{cancel}

\usepackage{fullpage}

\newcommand{\lotte}[1]{\textcolor{orange}{Lotte: #1}}
\newcommand{\marena}[1]{\textcolor{violet}{Marena: #1}}
\newcommand{\sariel}[1]{\textcolor{blue!25}{Sariel: #1}}
\newcommand{\anne}[1]{\textcolor{olive}{Anne: #1}}

\usepackage[most]{tcolorbox}
\usepackage{marginnote}

\newcommand{\NewAuthor}[3]{%
    \expandafter\newcommand\csname #1margin\endcsname[1]{%
        \marginnote{\begin{tcolorbox}[
            enhanced, sharp corners, boxrule=0.5pt,
            colback=#2!10, colframe=#2!80!black,
            fontupper=\tiny\sffamily, width=\marginparwidth,
            left=2pt, right=2pt, top=2pt, bottom=2pt
        ] \textbf{#3:} ##1 \end{tcolorbox}}%
    }

    \newenvironment{#1block}[1]
    {\begin{tcolorbox}[
        colback=#2!5, colframe=#2!75!black,
        fonttitle=\bfseries, title=#1's Note: ##1,
        breakable, enhanced
    ]}
    {\end{tcolorbox}}
}

\NewAuthor{sariel}{blue}{SH}
\NewAuthor{lotte}{violet}{LB} %
\NewAuthor{marena}{orange}{MR} %
\NewAuthor{anne}{olive}{AD} %

\renewcommand{\marena}[2][]{\begin{marenablock}{#1}
       #2
   \end{marenablock}
}

\renewcommand{\lotte}[2][]{\begin{lotteblock}{#1}
       #2
   \end{lotteblock}
}

\renewcommand{\sariel}[2][]{\begin{sarielblock}{#1}
       #2
   \end{sarielblock}
}

\renewcommand{\anne}[2][]{\begin{anneblock}{#1}
       #2
   \end{anneblock}
}

\usepackage[ruled, linesnumbered, vlined]{algorithm2e}
\DontPrintSemicolon

\newcommand{\distY}[2]{D({#1}, {#2})}   %

\newcommand{\A}{\textup{A}}     %
\newcommand{\B}{\textup{B}}     %

\newcommand{\elev}{{\gamma}}
\newcommand{\elevS}{\widehat{\elev}}

\newcommand{\elevY}[2]{\elev \pth{#1, #2}}
\newcommand{\esX}[1]{\elevS\pth{#1}}

\newcommand{\Tp}{T_P}           %
\newcommand{\Tq}{T_Q}           %
\newcommand{\Tbsp}{\mathcal{T}}          %
\newcommand{\Tcont}{\Tbsp}          %

\newcommand{\Decomp}{\mathcal{M}}    %

\newcommand{\cDecomp}{\contX{\Decomp}}%

\newcommand{\rect}{\mathsf{r}}%

\newcommand{\rectY}[2]{\rect^{}_{#1,#2}}%

\newcommand{\Neighbors}[1]{\mathcal{N}(#1)}

\newcommand{\cP}{\contX{P}}
\newcommand{\cQ}{\contX{Q}}

\providecommand{\contX}[1]{\overline{#1}}

\newcommand{\repX}[1]{\psi_{#1}}

\newcommand{\closureX}[1]{\mathrm{cl}\pth{#1}}

\newcommand{\WSPD}{\Term{WSPD}\xspace}
\newcommand{\eWSPD}{\ensuremath{\tfrac{1}{\eps}}-\WSPD{}\xspace}

\newcommand{\rootX}[1]{\mathrm{root}\pth{#1}}%

\newcommand{\dmX}[1]{\Delta\pth{#1}}
\newcommand{\dsY}[2]{\bar{d}\pth{#1, #2}}

\newcommand{\cPY}[2]{\cP_{#1:#2}}%
\newcommand{\cQY}[2]{\cQ_{#1:#2}}

\newcommand{\VX}[1]{\sigma_{#1}}
\newcommand{\cVX}[1]{\contX{\sigma}_{#1}}

\newcommand{\segX}[1]{\overline{\mathsf{s}}_{#1}}

\newcommand{\IVX}[1]{I_{#1}}

\newcommand{\XP}{\mathcal{X}}%
\newcommand{\cXP}{\contX{\mathcal{X}}}%
\newcommand{\YQ}{\mathcal{Y}}%
\newcommand{\cYQ}{\contX{\mathcal{Y}}}%

\newcommand{\EXP}{\EitherX{\XP}}
\newcommand{\EYQ}{\EitherX{\YQ}}

\newcommand{\InodeX}[1]{I\pth{#1}}

\newcommand{\first}[1]{{#1}.first}
\newcommand{\last}[1]{{#1}.last}

\usepackage{booktabs} %
\usepackage{tabularx} %

\usepackage{silence}
\newcommand{\vp}{\ensuremath{\mathsf{p}}}
\newcommand{\vq}{\ensuremath{\mathsf{q}}}

\usepackage{stmaryrd}%

\newcommand{\parentX}[1]{#1^{\shortuparrow}}
\newcommand{\pX}[1]{\parentX{#1}}

\providecommand{\contX}[1]{\overline{#1}}%

\newcommand{\ballC}{{B}}%
\newcommand{\ballY}[2]{\ballC\pth{#1,#2}}

\newcommand{\lenX}[1]{\mathsf{L}\pth{ #1 }}%

\newcommand{\remlab}[1]{\label{rem:#1}}
\newcommand{\remref}[1]{\HLink{Remark}{rem:#1}}%

\newcommand{\alignment}{\nu}

\newcommand{\EitherX}[1]{\widetilde{#1}}
\newcommand{\EP}{\EitherX{P}}
\newcommand{\EQ}{\EitherX{Q}}

\newcommand{\cPQ}{\overline{\Pi}}
\newcommand{\PQ}{\Pi}

\newcommand{\OO}{\mathcal{O}}

\newcommand{\Kunnemann}{K\"{u}nnemann\xspace}

\newcommand{\NoAuthors}[1]{}

\newcommand{\Path}{S}%
\newcommand{\aprxPath}{\widehat{S}}

\SODA{%
   \renewcommand{\lotte}[1]{}%
   \renewcommand{\marena}[1]{}%
   \renewcommand{\sariel}[1]{}%
   \renewcommand{\anne}[1]{}%
}

\bibliography{quick_dog}

\begin{document}

\SODA{\setcounter{page}{0}}

\title{The Quick Dog Jumps the Log}

\notSODA{%
   \author{%
      Lotte Blank%
      \thanks{University of Bonn. Funded by the Deutsche Forschungsgemeinschaft (DFG, German Research Foundation) – 459420781 (FOR AlgoForGe).}
      \and%
      Anne Driemel%
      \thanks{University of Bonn and Lamarr Institute for Machine Learning and Artificial Intelligence.}
      \and%
      Sariel Har-Peled%
      \SarielThanks{}
      \and%
      Marena Richter%
      \thanks{University of Bonn. Funded by the Deutsche Forschungsgemeinschaft (DFG, German Research Foundation) – 459420781 (FOR AlgoForGe).}}  }%

\date{\today}

\maketitle

\begin{abstract}
    We give linear-time, and thus optimal, $(1+\eps)$-approximation algorithms for numerous variants of the \Frechet distance between $c$-packed curves (where $c \in O(1)$), removing an additional log factor that was present in previous algorithms. The key to our new algorithms is a linear-size approximation of the elevation function, which uses a decomposition of the domain into rectangles, and a careful implicit dynamic programming on this decomposition. The algorithm extends to the strong, weak, discrete, and continuous \Frechet distances with a running time of roughly $O(cn/\eps)$. The $c$-packedness assumption is used only in the analysis, and the algorithm is simple and should work efficiently for other inputs.
\end{abstract}

\SODA{%
   \thispagestyle{empty}
   \newpage
   \setcounter{page}{1}%
}

\section{Introduction}

In his seminal work, \Frechet \cite{f-sqpdc-06} defined a distance measure between curves in Euclidean $\Re^3$ (in the same paper, he introduced the notion of metric spaces). {\citeau{ag-cfdbt-95}} provided an algorithm for computing this distance between polygonal curves, and it is widely used as a standard metric between curves \cite{tbsbs.ea-catsm-21}. \citeau{ag-cfdbt-95} gave the following analogy for the \Frechet distance: Imagine walking a dog with a fixed-length leash, where the owner is walking along one curve, and the dog is walking along the other curve.  Both have to walk along their respective curve from the beginning to the end while coordinating to stay close together.  The length of the shortest possible leash that permits such a walk is the \Frechet distance between the two curves.

\citeau{ag-cfdbt-95} showed how to compute the distance measure for polygonal curves with a total of $n$ vertices in $\OO(n^2 \log n)$ time by considering a path-finding problem in the sublevel set of the distance function defined on the joint parametric space of the two curves. For this, the critical threshold value when such a path exists is the \Frechet distance, and the algorithm performs a parametric search for this value. Most follow-up work uses a similar framework of performing an implicit or explicit binary search over the set of critical values, thereby retaining a logarithmic factor in the running time of the decision algorithm. Alternatively, one can compute the distance function over the parametric space explicitly and propagate through this space the minimal elevation (the leash length) required to reach that point. Following this approach, one can obtain a $(1+\eps)$-approximation algorithm that runs in $O(n^2 \log{\frac{1}{\eps}})$ time \citeau{bblmm-cfdrl-16}, thus removing the logarithmic factor in $n$.

To avoid the quadratic running time, various approximation algorithms were suggested for special cases of curves, such as approximate shortest paths and locally bounded curves \cite{ahkww-fdcr-06, akw-cdmpc-04}.  The authors of Driemel \etal \cite{dhw-afdrc-10,dhw-afdrc-12} introduced the notion of packedness: a curve is $c$-packed if its length inside any ball is at most $c$ times the radius of the ball. They showed that a $(1+\eps)$-approximation to the \Frechet distance can be computed in $\OO(cn/\eps+cn\log n)$ time for $c$-packed curves. This was later improved to $\OO(\frac{cn}{\sqrt{\eps}}\log \tfrac{1}{\eps} +cn\log n)$ by Bringmann and \Kunnemann \cite{bk-iafdc-17} which is (essentially) tight in the parameters $c$ and $\eps$, assuming \SETH (i.e., strong exponential time hypothesis).

Our work revisits the setting of $c$-packed curves and aims to remove the logarithmic factor in $n$ from the running time. The resulting running time is linear in $n$ and thus optimal in this setting for constant $c$. Our proposed algorithms avoid passing to a decision algorithm by solving the path-finding problem on an approximated elevation function, which is computed based on a partition that is reminiscent of a bipartite well-separated pair decomposition. We show how to apply our new techniques to several prevalent variants of the \Frechet distance, obtaining a similar running time in all cases. Since $c \in O(n)$ for any polygonal curve of $n$ edges, our algorithms are also competitive with the best known algorithms in the general case, while at the same time being conceptually simple.

\subsection{Related work}
\seclab{related_work}

We review the related work for computing the \Frechet distance of two polygonal curves of $n$ vertices in total.  While there has been some recent work on the asymmetric case~\cite{b-fdic-26}, we focus our discussion on the symmetric setting where both curves have $\Omega(n)$ vertices.  For the exact bounds in the asymmetric case, please refer to the original papers.  The currently best-known algorithm for computing the (strong continuous) \Frechet distance of two polygonal curves is by Cheng and Huang~\cite{ch-fdst-25}. Their algorithm runs in $\OO(n^2 \pth{\log\log n}^{2+\mu}/\log^{\mu} n)$ expected time for some constant $\mu \in (0,1)$, which is barely subquadratic.  \Authors{aaks-cdfds-14} showed the first (barely) subquadratic-time algorithm for any variant of the \Frechet distance.  Specifically, they showed that one can compute the (strong) discrete \Frechet distance in $\OO(\frac{n^2\log\log n}{\log n})$ time.  It is believed that there is no strongly subquadratic time algorithm, since the existence of a strongly subquadratic algorithm would refute the Strong Exponential Time Hypothesis (SETH), as was shown by Bringmann~\cite{b-wwdtt-14}. This also holds for any $\alpha$-approximation algorithms for $\alpha < 3$~\cite{bos-sswfd-18}.  Recently, Cheng, Huang, and Zhang \cite{chz-cafds-25} showed that there does exist a randomized strongly subquadratic-time constant-factor $(7+\eps)$-approximation algorithm for the \Frechet distance. The running time is in $\OO(n^{1.99}\log \tfrac{n}{\eps} )$.  For larger approximation factors, algorithms have been proposed with running-time tradeoffs of the form $\OO\pth{\frac{\mathrm{poly}(n)}{\mathrm{poly}(\alpha)}}$ \cite{hkos-sn-cf-23, cf-acfd-21, hkos-ffdat-25, cr-iaadf-18}.

After \citeau{dhw-afdrc-12} introduced the notion of $c$-packedness, there was a sequence of follow-up papers, improving the running time in specific settings. As mentioned above, Bringmann and \Kunnemann~\cite{bk-iafdc-17} improved the running time dependency on $\eps$, which becomes relevant in the range $\frac{1}{\eps} \gg \log n$. If only one of the two polygonal curves is $c$-packed, one can compute a $(1+\eps)$-approximation in $\OO(\frac{cn}{\eps}\log \tfrac{n}{\eps} )$ \cite{gmw-afdwo-24, cvho-cfdwj-26}.

These algorithms still retain the logarithmic factors in $n$ from calls to the decision algorithm.  Our general approach of solving the shortest path problem on the distance terrain, and thereby avoiding such calls to the decision algorithm, has previously been used by the authors of~\citeau{bblmm-cfdrl-16}.  Concretely, their algorithm computes the \Frechet distance under several polyhedral distance functions in $O(n^2)$ time. In the standard setting of the Euclidean distance, their approach does not improve upon the known exact algorithms, but yields a $(1+\eps)$-approximation in $O(n^2 \log \frac{1}{\eps})$. However, the running time of the algorithm is still quadratic for $c$-packed curves, even if $c \ll n$.

A bipartite well-separated pair decomposition for $c$-packed curves, as is used in our paper, has previously been proposed by the authors of \cite{ypfa-seaad-16} and \cite{afpy-adtwe-16} for approximating a different distance measure, namely the Dynamic time warping distance (DTW). However, their algorithms do not avoid logarithmic factors in $n$.  Concretely, the results are $(1+\eps)$-approximation algorithms using $\OO(\frac{nc^2}{\eps}\log\sigma)$ time for two curves with spread $\sigma$ \cite{ypfa-seaad-16} and $\OO(\frac{c n}{\eps}\log n)$ time if the spread is not bounded \cite{afpy-adtwe-16}.  They also study $\kappa$-bounded curves and obtain $(1+\eps)$-approximation algorithms with running time in $\OO(\frac{\kappa^d}{\eps^d}n\log n)$.  We discuss and compare our techniques in more detail in the next section.

\subsection{Results and technical overview}

\paragraph*{Main results and outline of ideas.}
Our main results are summarized in \tblref{frechet_refs}. We present
(roughly) $O(cn/\eps)$ time algorithms for approximating the \Frechet
distance between two polygonal curves that are $c$-packed.

\begin{table}[h]
    \centering
    \begin{tabular}{l c c cc} %
      \toprule
      \textbf{Type}
      & \textbf{Discrete}
      &
      & \textbf{Continuous}
      & \\
      \midrule
      Weak
      & $\OO(\frac{cn}{\eps})$
      & \thmref{w_discrete_frechet}
      & $\OO(\frac{cn}{\eps})$
      & \thmref{w_cont_frechet}
      \\[0.1cm]
      Strong
      &  $\OO(\frac{cn}{\sqrt{\eps}} \log \tfrac{1}{\eps})$
      &  \corref{s_discrete_frechet_faster}
      &  $\OO(\frac{cn}{\sqrt{\eps}}\log^2 \tfrac{1}{\eps})$
      &
        \thmref{s_cont_frechet}
      \\[0.1cm]
      \bottomrule
    \end{tabular}
    \caption{Main results for $(1+\eps)$-approximation in this paper.  }
    \tbllab{frechet_refs}
\end{table}

Our new algorithms rely on several key insights: (I) the parametric space
can be approximated by a decomposition of linear size; (II) one can
propagate the computation of the \Frechet distance through this
non-uniform parametric space in linear time; and (III) this propagation
is relatively easy in the weak case, but in the strong (monotone) case,
one needs to carefully maintain the approximate distances computed so far
by maintaining an implicit, simplified representation, which introduces
significant technical challenges.

In particular, the monotonicity constraints, coupled with the
non-uniformity of the decomposition, require splitting the ``wavefront''
into several subparts and then merging some of them, even when handling a single cell.
Somewhat surprisingly, these operations can all be done in overall linear time.

This yields a linear-time constant-factor approximation algorithm. We can
then bootstrap these algorithms into a $(1+\eps)$-approximation by
performing a ``binary search'' over a constant-spread interval guaranteed
to contain the optimum, using an approximate decider. Because the spread
of this interval is constant, the overall running time remains linear for
a fixed $\eps$. By carefully implementing this search, we obtain an improved
dependence on $\eps$.

\bigskip

Ultimately, we obtain algorithms for a $(1+\eps)$-approximation of the
\Frechet distance that run in the same time required to decide whether
this distance lies in an interval of the form $[\alpha, (1+\eps)\alpha]$ for some real
number $\alpha$. Namely, our new algorithms avoid the additional $\log$
factors that almost always plague reductions of the optimization problem
to the decision problem. (We discuss this issue further in the
conclusions, see \secref{conclusions}.)

\paragraph*{Assumption.}

In the following, we assume that $\eps \in (0,1)$, the parameter determining the approximation quality, is some fixed constant independent of $n$. In particular, our results are less interesting in the regime $\eps \ll 1/\log^2 n$.  Similarly, while no assumption is made on the packedness parameter, conceptually, the results are more interesting if one considers it to be a constant. We assume that $n$ is the total complexity of the input curves (number of vertices).

\paragraph*{Preliminaries and setup.}
\secref{prelims} establishes basic notation and definitions, including
definitions of the main variants of the \Frechet distance. Each variant
is mapped to a bottleneck shortest path problem in a weighted domain
$\cPQ$. Known algorithms solve a variant of this problem in the form of a
path feasibility problem within a sublevel set of the weighted domain. In
contrast, the new algorithms solve a path-finding problem in the weighted
domain directly. To get an improved running time, we describe, in
\secref{faster}\footnote{This part is somewhat orthogonal to our main
   result and was moved towards the end of the paper for readability.},
how to turn a constant-factor approximation into a
$(1+\eps)$-approximation, while incurring only a constant-factor overhead
in the total running time, when comparing to the running time of the
decider.  In the main part of the paper, we focus on computing a
$(1+\eps)$-approximation in $\OO(\frac{cn}{\eps})$ time. We later use
this algorithm with $\eps=1/2$ to get a constant-factor approximation in
$\OO(cn)$ time that can then be turned into a $(1+\eps)$-approximation by
the algorithm from \secref{faster}.

\paragraph{The improved approximate decomposition.}
In \secref{decomposing-free-space}, we show how to decompose the domain $\cPQ$ into rectangles such that for almost all rectangles the weight within the rectangle is the same up to a factor of $(1+\eps)$. For the other rectangles, the weight function can still be $(1+\eps)$-approximated by a function that is simple enough for our purposes. The latter is only relevant for the continuous variants of the \Frechet distance. We refer to such functions below as \emph{simplified elevation functions}. The decomposition into rectangles is conceptually similar to a bipartite well-separated pair decomposition (WSPD). A similar data structure was used in the work~\cite{ypfa-seaad-16, afpy-adtwe-16} for approximating the Dynamic time warping distance. However, their algorithm and analysis were not sufficient for our purposes. Their bounds also include an extra logarithmic factor.  The size of the decomposition is $\OO(\frac{cn}{\eps})$ for two $c$-packed curves of $n$ vertices in total. The construction time, including computing all neighboring relations between rectangles, is in $\OO(\frac{cn}{\eps})$ (\corref{neighbors}).

\paragraph*{The basic algorithms.} We structure the presentation of our algorithms in two parts: \secref{frechet} and \secref{strong}. First, in \secref{frechet}, we develop approximation algorithms for the main variants of the \Frechet distance where no monotonicity is required.  Here, we first discuss the discrete \Frechet distance (\secref{weak-discrete}) and then the continuous \Frechet distance (\secref{weak-continuous}). This is relatively easy with the results of \secref{decomposing-free-space} at hand: In both of these settings, the computation of the \Frechet distances reduces to computing a bottleneck shortest path in a weighted undirected graph formed by the rectangles of the decomposition. In particular, the cost of an edge between two neighboring rectangles is determined by the minimum possible elevation at which the boundary between them can be crossed.  The resulting running time is in $\OO(\frac{cn}{\eps})$ for both discrete and continuous variants of the weak \Frechet distance (\thmref{w_discrete_frechet} and \thmref{w_cont_frechet}).  Note that these results do not require the framework of \secref{faster}.  Secondly, \secref{strong} presents our main result: an algorithm for the strong continuous \Frechet distance. The presentation is structured in several layers, each layer adding more complexity. In the following paragraphs, we discuss the challenges and techniques.

\paragraph*{Monotonicity.}
Handling monotonicity is a major challenge, as more information has to be
propagated across the domain. As mentioned above, \secref{strong}
describes the procedure for the (strong [i.e., monotone]) \Frechet
distance.  Our domain decomposition gives rise to a natural topological
order of the rectangles that respects monotonicity.  The algorithm
processes the rectangles (of the decomposition) in this topological
order. It maintains (approximately) the \Frechet distance between the
corresponding prefix curves for each point on the boundary between the
processed and the unprocessed rectangles, which is the \emph{propagation
   front}. Since the complexity of this front can be non-constant on the
boundary of a rectangle, even for the discrete \Frechet distance, lazy
updates are used to ensure constant amortized update time per processed
rectangle. For the (strong) discrete \Frechet distance, this results in a
$\OO(\frac{cn}{\eps})$ time $(1+\eps)$-approximation
algorithm~(\thmref{s_discrete_frechet}). The running time can be improved
to $\OO(\frac{cn}{\sqrt{\eps}}\textup{polylog} \tfrac{1}{\eps})$
(\corref{s_discrete_frechet_faster}) using the framework of
\secref{faster} in combination with the respective decision algorithm by
Bringmann and \Kunnemann~\cite{bk-iafdc-17}.

\paragraph*{Fragmentation.}
For the continuous \Frechet distance, the algorithm extends naturally using continuous extensions of discrete intervals, but there are yet other challenges. In this case, the weight of a rectangle of the decomposition is not necessarily approximated by a constant. Instead, more complex functions have to be used. In particular, an update to the cost function along the propagation front may lead to fragmentation (refer to \figref{Continuous_without_root2} for an example), such that it cannot easily be maintained in constant amortized time. \secref{strong-continuous} deals with tackling these challenges. We solve this by observing that the distance function between a point and an edge can be approximated nicely, avoiding the worst fragmentation behavior. We show that we can indeed compute a $(\sqrt{2}+\eps)$-approximation of the \Frechet distance in $\OO(\frac{cn}{\eps})$ time (\lemref{sqrt2_approx}). This can then be combined with the framework of \secref{faster} to obtain a $(1+\eps)$-approximation with improved running time. Here, again, we can plug in the respective decision algorithm by Bringmann and \Kunnemann~\cite{bk-iafdc-17} within the framework of \secref{faster} (\corref{s_discrete_frechet_faster}, \thmref{s_cont_frechet}).

This phenomenon that the distance function gets more complicated and fragments into many parts as it propagates through the domain is quite common. For example, this phenomenon is the main difficulty in getting efficient shortest-path algorithms on the boundary of convex polytopes in 3 dimensions \cite{mmp-dgp-87, ss-otasp-08}, and also in the shortest-path problem in the plane among obstacles \cite{hs-oaesp-99}. In the context of \Frechet distance, this fragmentation behavior was observed by \citeau{bblmm-cfdrl-16}. (However, the algorithm of \citeau{bblmm-cfdrl-16} runs in quadratic time, so their result does not seem to be directly relevant to our settings.)

\paragraph*{A postscript: A simpler algorithm for marching \Frechet.}
In addition to our main results, we present in \secref{marching} a simple approximation algorithm for the marching \Frechet distance, a variant of the discrete \Frechet distance that is, on the surface, very similar. This variant turns out to be much easier to approximate, and the algorithm can avoid the use of the data structure from \secref{decomposing-free-space} altogether. The running time of the algorithm is in $O(\frac{cn}{\eps})$ (\thmref{marching}). Note that this variant has been used in earlier work, e.g.,~\cite{avraham2015discrete}, but was not named in any particular way.

\section{Preliminaries}%
\seclab{prelims}

Let $\IRX{n}$ be the set of integers ${1,\ldots,n}$.  A continuous function $f: I \rightarrow \Re$ on an interval $I$ is a \emphi{valley function} if there exists an $x \in I$ such that $f$ is (not necessarily strictly) monotone decreasing on $I \cap \IOCX{-\infty, x}$ and (not necessarily strictly) monotone increasing on $I \cap \ICOX{x, \infty}$.

Let $P=p_1,\ldots,p_n$ be a sequence of points (i.e., \emphw{vertices}) in $\Re^d$ (for simplicity, we assume no two consecutive points in these sequences are equal), where $n = \cardin{P}$. The polygonal curve associated with $P$, denoted by $\cP$, is formed by connecting consecutive vertices of $P$ by segments. The \emphw{length} and \emphi{domain} of $\cP$ are, respectively,
\begin{equation*}
    L
    =%
    \lenX{\cP}=\sum_{i=1}^{n-1}\dY{p_{i+1}}{p_i}
    \qquad\text{and}\qquad%
    \cXP =  [0,L].
\end{equation*}
The associated function $\cP\colon\cXP\rightarrow\Re^d$ uniformly parameterizes this curve, where $\cP(0) = p_1$, $\cP(L) = p_n$, and $\cP(t)$ is the point in distance $t$ from $p_1$ along $\cP$, for any $t \in \cXP$. In particular, for $i \leq j$, let $\cPY{i}{j}$ denote the subcurve of $\cP$ from $p_i$ to $p_j$.  For a vertex $p_i \in P$, its \emphi{curve coordinate} is $x_i = \lenX{\cPY{1}{i}} = \sum_{j=1}^{i-1}\dY{p_j}{p_{j+1}}$, and $\XP = ( x_1, \ldots, x_{n})$ is the \emphi{coordinate sequence} of $P$.  Similarly, for the other input curve $Q = q_1, \ldots, q_m$, let $\cYQ = [0,\lenX{\contX{Q}}]$ and its corresponding coordinate sequence is $\YQ = (y_1, \ldots, y_m)$, with $y_j = \lenX{\cQY{1}{j}}$, for all $j$.

\begin{defn}
    \deflab{c_packed}%
    For $c > 0$, a polygonal curve $\cP$ is \emphw{$c$-packed} \cite{dhw-afdrc-12}, if for all $p \in \Re^d$, and $r > 0$, we have that $\lenX{ \cP \cap \ballY{p}{r}} \leq c r$, where $\ballY{p}{r}$ denotes the ball of radius $r$ centered at~$p$. A sequence~$P$ is \emphi{$c$-packed} if $\cP$ is $c$-packed.
\end{defn}

\subsection{\Frechet distance definitions}%

\subsubsection{The continuous \Frechet distance}
\seclab{continuous}

Let $\cPQ = [0,\lenX{P}] \times [0, \lenX{Q}]$ denote the \emphw{parametric space} associated with $P$ and $Q$. The \emphi{elevation} is the function $\elev$, defined over $\cPQ$, where $\elev(x,y)=\dY{\cP(x)}{\cQ(y)}$.  An \emphi{alignment} is a curve $\alignment \subseteq \cPQ$ that starts at $(0,0)$ and ends at $(\lenX{P},\lenX{Q})$. An alignment is \emphw{monotone} if $\alignment$ is $x$-monotone and $y$-monotone.  The \emphi{pass} of an alignment $\alignment$ is
\begin{equation}
    \passX{\alignment}
    =%
    \max_{(x,y) \in \alignment }\elev\bigl(x,y \bigr),
    \eqlab{pass}
\end{equation}
which is the maximum elevation that $\alignment$ attains anywhere.  The (strong continuous) \emphi{\Frechet distance} between $P$ and $Q$ is the minimum pass of any monotone alignment between $P$ and $Q$. Formally, it is
\[
    \dFY{\cP}{\cQ}%
    =%
    \min_{\alignment \in {\mathcal{A}}} \passX{\alignment},
\]
where ${\mathcal{A}}$ denotes the set of all monotone alignments in $\cPQ$.  The \emphi{weak} variant of this distance removes the requirement for monotonicity, minimizing over the set of all alignments.

\begin{remark}
    The above formulation is an alternative way of stating the standard definition \cite{ag-cfdbt-95}.  The formalization of the \Frechet distance is via reparametrizations of the curves that are required to be bijections~\cite{dd-ed-09}. However, to enable one of the entities to stay in place in the continuous case, one has to consider the optimal reparameterization as the limit of such bijections. The above formalization avoids the technicality by directly allowing portions of the alignment to be horizontal or vertical. For a formal proof that the definitions are equivalent, see e.g.~\cite{bdr-aklcp-23}.
\end{remark}

\begin{defn}
    \deflab{cell}%
    A subset of the domain of the form $I\times J$ with intervals $I \subseteq \cXP$, $J \subseteq \cYQ$ is a \emphi{rectangle}. A \emphi{cell} of the elevation function is a rectangle that corresponds to one edge of one curve and an edge of the other curve.  It is known that the sublevel set of the elevation function inside a cell is a (clipped) ellipse \cite{ag-cfdbt-95}. It is not hard to show that the elevation function is also convex.
\end{defn}

\subsubsection{The discrete \Frechet distance}%
\seclab{discrete_f}

A \emphi{(discrete) interval} $J = ( x_{i}, x_{i+1}, \ldots, x_j )$ of $P$ is a consecutive subsequence of $\XP$. It will be convenient to interpret such discrete intervals as being associated with continuous intervals.

\begin{defn}
    \deflab{extension}%
    The \emphi{extension} of a discrete interval $J = ( x_{i}, x_{i+1}, \ldots, x_j )$ is the interval $\suppX{J} = \IOCX{x_{i-1}, x_j}$ with $x_0 = 0^-$. For a continuous interval $J$, we have $\suppX{J} = J$.
\end{defn}

In the discrete setting, the \emphw{parametric space} associated with $P$ and $Q$ is $\PQ = \XP \times \YQ$.  The set $\PQ$ is an embedding of the natural grid $\IRX{\cardin{P}} \times \IRX{\cardin{Q}}$ so that the pair of vertices $(p_i,q_j)$ (i.e., $(i,j)$) are now located at their respective curve coordinates $(x_i,y_j)$.  For two discrete intervals $I$ and $J$, of $\XP$ and $\YQ$, respectively, their associated (discrete) \emphi{rectangle} is $\rect = I \times J \subseteq \PQ$. The \emphi{extension} of~$\rect$ is $\suppX{\rect} = \suppX{I} \times \suppX{J}$. For a continuous rectangle $\rect$, we have $\suppX{\rect} = \rect$.

A \emphi{discrete alignment} $\alignment$ between sequences $P$ and $Q$ of $n$ and $m$ vertices, respectively, is a sequence of pairs $(i_1, j_1), \dots, (i_k, j_k)$ where $(i_1, j_1)=(1, 1)$, $(i_k, j_k)=(n,m)$, and $\dY{(i_{\ell}, j_{\ell})}{(i_{\ell+1}, j_{\ell+1})}_\infty=1$ for all $\ell$. A discrete alignment is \emphw{monotone} if $i_\ell\leq i_{\ell+1}$ and $j_\ell\leq j_{\ell+1}$, for all $\ell$.  The \emphw{pass} of $\alignment$, denoted by $\passX{\alignment}$ is its maximum elevation.  The (strong) \emphi{discrete \Frechet distance} between $P$ and~$Q$ is
\begin{math}
    \dFdY{P}{Q} =%
    \min_{\alignment}\passX{\alignment},
\end{math}
where the minimum is over all monotone discrete alignments.  The \emphw{weak} variant minimizes over the set of all discrete alignments, instead of monotone alignments.

\begin{remark}
    A \emphw{marching alignment} requires that $\dY{(i_{\ell}, j_{\ell})}{(i_{\ell+1}, j_{\ell+1})}_1=1$ (namely, only horizontal or vertical moves are allowed in the parametric space).  The \emphw{marching \Frechet distance} is the resulting distance when minimizing over marching alignments. In contrast, the (default) discrete \Frechet distance also allows diagonal moves. The distance under the marching version might be (significantly) larger than the (default) discrete \Frechet distance, but it is easier to approximate. See \secref{marching} for details.
\end{remark}

Interpreting the \Frechet distance as a graph problem (in the discrete case) is a bottleneck shortest path problem, where the weights are on the vertices.

\section{Approximating the elevation function}
\seclab{decomposing-free-space}

In this section, we describe how to decompose the domain of the elevation function into rectangles wherein the distance function can be approximated nicely.

Given a sequence $P = p_1, \ldots, p_n $ of points, build a binary tree bottom up on this sequence, where the leaves are the points (stored in their sequence order in the tree), and every internal node has two children. For every point in the sequence, create a singleton tree containing this point.  Then, having a list of such trees $T_1, \ldots, T_n $, the algorithm repeatedly scans this list, creating a higher tree by merging two adjacent trees in this list (leaving potentially the last tree in the list as is if the list has odd length), and repeating the process till there is a single tree left, denoted by $\Tp$.  The subsequence of $P$ stored in the subtree of $\vp$ is denoted by $\VX{\vp} = (p_i, \ldots, p_j)$, and its corresponding interval is $\InodeX{\vp} = (x_i, \ldots, x_j)$. The corresponding continuous subcurve is $\cVX{\vp} = \cP\!\IOCX{x_{i-1}, x_j}$, where $\cP\!\IOCX{x_{i-1},x_j}$ denotes the curve $\cP$ restricted to the interval $\IOCX{x_{i-1}, x_j} = \suppX{\bigl. \InodeX{\vp}}$.

\subsection{The information stored in the tree}

We store for each node~$\vp \in \Tp$, the following information:
\begin{compactenumA}
    \item $\InodeX{\vp} \subseteq \XP$, $\suppX{\vp} = \suppX{\InodeX{\vp}} \subseteq \cXP$ (all defined above).
    \item $\VX{\vp}$, $\cVX{\vp}$: The subcurves (also defined above). These are stored implicitly by (A).
    \item $\dmX{\vp} = \lenX{\cVX{\vp}}$: The total length of the subcurve.

    \item $\repX{\vp}$: The last vertex of the subcurve, it serves as its \emphw{representative}.

    \item $\pX{\vp}$: The parent node of $\vp$ in $\Tp$.
    \item Its two children, if they exist.
\end{compactenumA}

Observe that if $\vp$ is a leaf, then $\cVX{\vp}$ is a segment. In this case, it is denoted by $\segX{\vp}$. By storing the relevant indices, the above information requires $O(1)$ space in the node.

\begin{observation}
    Computing the two trees $\Tp$ and $\Tq$ with all the information takes $\OO(n)$ time, where $n$ is the total number of vertices of $P$ and $Q$.
\end{observation}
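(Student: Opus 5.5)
The plan is to bound separately the cost of building the tree shape and the cost of filling in the stored fields, and to show each is linear. I treat $\Tp$; the argument for $\Tq$ is identical, and the two costs add to $\OO(\cardin{P}+\cardin{Q}) = \OO(n)$.

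\textbf{Tree shape.} First I would compute the coordinate sequence $\XP$ by a single prefix-sum scan, $x_1 = 0$ and $x_{i+1} = x_i + \dY{p_{i+1}}{p_i}$, which takes $\OO(n)$ time. Next I would bound the construction itself. Each round of the bottom-up merging scans a list of $k$ trees and produces a list of $\lceil k/2\rceil$ trees. It spends $\OO(1)$ time per merge, since a merge only allocates a new node and sets child and parent pointers. Hence the total work is $\OO\bigl(\sum_{t\ge 0} \lceil n/2^t\rceil\bigr) = \OO(n)$, counting $\OO(1)$ extra per round for the odd leftover tree. Since there are $\OO(\log n)$ rounds, this extra term is dominated by the geometric sum. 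Equivalently, the tree has $n$ leaves and fewer than $n$ internal nodes, each created once.

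\textbf{Node fields.} I would fill in the fields at the moment each node is created, and show this costs $\OO(1)$ per node.
\begin{compactenumi}
    \item For a leaf storing $p_i$, set $\InodeX{\vp} = (x_i)$, so that $\suppX{\vp} = \IOCX{x_{i-1}, x_i}$. Its length is $\dmX{\vp} = x_i - x_{i-1}$, with the convention $x_0 = 0^-$ giving length $0$ for the first leaf. Its representative is $\repX{\vp} = p_i$.
    \item For an internal node $\vp$ with left child $\vp_1$ and right child $\vp_2$, the leaves are stored in sequence order and only adjacent trees are merged. Hence $\InodeX{\vp}$ is the concatenation of the two children's index ranges, stored as a pair of indices $(i,j)$.
    \item The extension $\suppX{\vp} = \IOCX{x_{i-1}, x_j}$ and the subcurve $\cVX{\vp}$ are represented by these same indices.
    \item The length satisfies $\dmX{\vp} = \dmX{\vp_1} + \dmX{\vp_2}$, or directly $\dmX{\vp} = x_j - x_{i-1}$.
    \item The representative is $\repX{\vp} = \repX{\vp_2}$, the last vertex.
    \item Parent pointers of the children are set during the merge.
\end{compactenumi}
All of these are constant-time operations, so summed over the $\OO(n)$ nodes the cost is linear.

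No step is a genuine obstacle. The only point needing care is the claim that $\InodeX{\vp}$ is a contiguous index range. This follows by induction on the rounds, since merges combine only list-adjacent trees and the list order is preserved. It is exactly what lets every field be stored as $\OO(1)$ indices rather than as an explicit subsequence.
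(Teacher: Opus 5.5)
Your proof is correct and follows the paper's approach. The paper states this as an observation with no proof beyond noting that each node stores $\OO(1)$ indices, so your explicit accounting fills in the routine details the paper leaves implicit: the prefix sums for $\XP$, the geometric-sum bound on the merging rounds, the contiguity of $\InodeX{\vp}$, $\dmX{\vp} = x_j - x_{i-1}$, and $\repX{\vp} = \repX{\vp_2}$.
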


The above algorithm computes two trees $\Tp$ and $\Tq$ for the two input curves. The next step is to use these trees to compute a decomposition of the domain $\cPQ = \cXP \times \cYQ$.%

\begin{defn}
    \deflab{dist_sets}%
    For two non-empty sets $X,Y \subseteq \Re^d$, let $\dsY{X}{Y} = \inf_{x \in X, y \in Y} \dY{x}{y}$. If $X = \{x\}$ is a singleton, we abuse notations and write $\dsY{x}{Y} = \dsY{X}{Y}$.
\end{defn}

\begin{defn}
    \deflab{separated}%
    A pair of nodes $(\vp, \vq)$ with $\vp \in \Tp$ and $\vq \in \Tq$ is \emphi{$\eps$-separated}, for $\eps \in (0,1)$, if one of the following holds:
    \begin{compactenumI}
        \item \itemlab{W_I} Both $\vp$ and $\vq$ are leaves.

        \item \itemlab{W_II} $\vp$ is a leaf and $\dsY{\segX{\vp}}{\repX{\vq}} \geq \frac{4}{\eps} \dmX{\vq}$.

        \item \itemlab{W_III} $\vq$ is a leaf and $\dsY{\segX{\vq}}{\repX{\vp}} \geq \frac{4}{\eps} \dmX{\vp}$.

        \item \itemlab{W_IV} $\dY{\repX{\vp}}{\repX{\vq}} \geq \frac{4}{\eps} \max\bigl( \dmX{\vp}, \dmX{\vq}\bigr)$.
    \end{compactenumI}
\end{defn}

The algorithm uses the trees $\Tp$ and $\Tq$ to decompose the domain $\cPQ$ into rectangles by using an axis-aligned binary space partition. The idea is to emulate the \WSPD construction algorithm of Callahan and Kosaraju \cite{ck-dmpsa-95} by constructing a tree $\Tbsp$. Each node of this tree is a pair of nodes in $\Tp\times\Tq$, and each leaf $(\vp,\vq) \in \Tbsp$ is a pair that is $\eps$-separated.  The rectangle associated with such a node~$(\vp,\vq)$ is $\rectY{\vp}{\vq} = \suppX{\vp} \times \suppX{\vq}$. Thus, the leaves of $\Tbsp$ induce a decomposition of the set $\cPQ$ into these rectangles.  Initially, the root of $\Tbsp$ is $(\rootX{\Tp}, \rootX{\Tq})$.  The refinement algorithm maintains a (FIFO) queue of (some of the current) leaves of $\Tbsp $ that have not yet been handled.  The algorithm fetches the first pair $u=(\vp, \vq)$ from the queue.

The algorithm applies the first option that fits:
\begin{compactenumA}
    \item If the pair $u$ is $\eps$-separated, then this pair is \emphw{final} (this includes the case that both $\vp$ and $\vq$ are leaves), and the algorithm moves on to the next pair in the queue. As a result, $u$ is a leaf of the final tree~$\Tbsp$.

    \item $\dmX{\vp} \leq \dmX{\vq}$ and $\vq$ is not a leaf: The algorithm replaces $\vq$ by its two children $\vq_1$ and $\vq_2$, by pushing into the queue the newly created pairs $u_1 = (\vp, \vq_1)$ and $u_2 = (\vp,\vq_2)$. Next, it creates two new leaves $u_1$ and $u_2$, and inserts them as children of $u$ in $\Tbsp$.

    \item $\dmX{\vp} > \dmX{\vq}$ and $\vp$ is not a leaf: Same as the above, with the two children of $\vp$ being used.%

    \item \itemlab{split_the_other} Either $\vp$ or $\vq$ is a leaf (but not both). The algorithm splits the pair using the children of the node that is not a leaf and adds them to the queue.
\end{compactenumA}
The algorithm then continues to the next pair in the queue. The algorithm continues this process till the queue is empty.

\bigskip

The output of this algorithm is the \emphi{decomposition} $\cDecomp=\Set{\rectY{\vp}{\vq}}{ (\vp,\vq) \text{ leaf of }\Tcont }$, which provides a disjoint cover of the domain $\cPQ = [0, \lenX{P}] \times [ 0, \lenX{Q}]$.

\subsection{Analysis}

\begin{lemma}
    \lemlab{wspd}%
    We have the following:
    \begin{compactenumi}
        \item For any $(x,y) \in \cPQ$, and the corresponding points $p = \cP(x)$ and $q = \cQ(y)$, there is a unique leaf $(\vp,\vq)\in \Tbsp$, such that $(x,y ) \in \rect_{(\vp,\vq)}$.

        \item If $(\vp,\vq)$ is a final pair, then for any pair of vertices $p\in \VX{\vp}$ and $q\in \VX{\vq}$, we have that
        \[
            (1-\eps)\dY{p}{q}\leq\dY{\repX{\vp}}{\repX{\vq}} \leq (1+\eps)\dY{p}{q}.
        \]
    \end{compactenumi}
\end{lemma}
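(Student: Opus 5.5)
For part (i), I plan an induction over the refinement algorithm, using the invariant that the rectangles of the current leaves of $\Tbsp$ form a disjoint cover of $\cPQ$.

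Initially the only leaf is $(\rootX{\Tp},\rootX{\Tq})$. Its rectangle is $\suppX{\rootX{\Tp}}\times\suppX{\rootX{\Tq}} = \IOCX{0^-,\lenX{P}}\times\IOCX{0^-,\lenX{Q}}$, which, because of the convention $x_0=0^-$, is exactly $\cPQ$.

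For the inductive step, let $\vp$ be an internal node with $\InodeX{\vp}=(x_i,\ldots,x_j)$. Its children store consecutive subsequences $(x_i,\ldots,x_k)$ and $(x_{k+1},\ldots,x_j)$, since leaves are kept in sequence order. So their extensions are $\IOCX{x_{i-1},x_k}$ and $\IOCX{x_k,x_j}$, which partition $\IOCX{x_{i-1},x_j}=\suppX{\vp}$. The same holds in $\Tq$. Hence every split step (B), (C) or \itemref{split_the_other} replaces one rectangle $\rectY{\vp}{\vq}$ by two disjoint rectangles with the same union, and the invariant is preserved.

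The algorithm terminates because each split strictly descends in one of the two finite trees. At termination every leaf is final, so every $(x,y)\in\cPQ$ lies in exactly one rectangle of a leaf of $\Tbsp$.

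For part (ii), I will use two facts. First, if $\vp$ is a leaf, then $\VX{\vp}$ consists of the single vertex $\repX{\vp}$. Second, for any vertex $p\in\VX{\vp}$, both $p$ and $\repX{\vp}$ lie on the subcurve $\cVX{\vp}$, which has length $\dmX{\vp}$, so $\dY{p}{\repX{\vp}}\le\dmX{\vp}$. By the triangle inequality,
\[
  \bigl|\dY{p}{q}-\dY{\repX{\vp}}{\repX{\vq}}\bigr| \le \dY{p}{\repX{\vp}}+\dY{q}{\repX{\vq}} =: E .
\]
I now go through the four cases of \defref{separated}.

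In case \itemref{W_I}, we have $p=\repX{\vp}$ and $q=\repX{\vq}$, so the claim is trivial.

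In case \itemref{W_II}, we have $p=\repX{\vp}$, so $E\le\dmX{\vq}$. Since $\repX{\vp}\in\segX{\vp}$, we also get $\dY{\repX{\vp}}{\repX{\vq}}\ge\dsY{\segX{\vp}}{\repX{\vq}}\ge\tfrac{4}{\eps}\dmX{\vq}$. Together these give $E\le\tfrac{\eps}{4}\,D$, where $D=\dY{\repX{\vp}}{\repX{\vq}}$. Case \itemref{W_III} is symmetric.

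In case \itemref{W_IV}, we have $E\le\dmX{\vp}+\dmX{\vq}\le 2\max(\dmX{\vp},\dmX{\vq})\le\tfrac{\eps}{2}D$.

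So in every case $|\dY{p}{q}-D|\le\tfrac{\eps}{2}D$. The remaining step, which is the only place needing care, converts this error relative to $D$ into an error relative to $\dY{p}{q}$.

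From $D\le\dY{p}{q}+\tfrac{\eps}{2}D$ we get $D\le\dY{p}{q}/(1-\eps/2)\le(1+\eps)\dY{p}{q}$, using $\eps\in(0,1)$. From $D\ge\dY{p}{q}-\tfrac{\eps}{2}D$ we get $D\ge\dY{p}{q}/(1+\eps/2)\ge(1-\eps)\dY{p}{q}$.

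The main subtlety is in case \itemref{W_II}: we must use that the separation there is measured from the whole segment $\segX{\vp}$, which contains $\repX{\vp}$, so that it lower-bounds $D$ itself.
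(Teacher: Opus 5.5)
Your proposal is correct and follows essentially the paper's proof. Part (i) is the same induction over the refinement; you add the explicit checks that the root rectangle is $\cPQ$ and that the children's extensions partition the parent's. Part (ii) is the same triangle-inequality bound $\bigl|\dY{p}{q}-\dY{\repX{\vp}}{\repX{\vq}}\bigr|\le\tfrac{\eps}{2}\dY{\repX{\vp}}{\repX{\vq}}$, inverted via $\tfrac{1}{1-\eps/2}\le 1+\eps$ and $\tfrac{1}{1+\eps/2}\ge 1-\eps$; the only cosmetic difference is that you merge the cases into one error bound, whereas the paper proves it for case \itemref{W_IV} and reuses it for case \itemref{W_II} by setting $D=\dmX{\vq}/2$.
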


\begin{proof}
    (i) This follows by induction on the construction algorithm, tracking the unique current leaf of $\Tbsp$ whose rectangle contains $(x,y)$.

    (ii) Assume that $(\vp,\vq)$ is a final pair and $p\in\vp$ and $q\in\vq$ are vertices. If this pair was finalized by condition \itemref{W_IV} (when the algorithm checks if the pair is $\eps$-separated), this implies, for $D = \max\bigl( \dmX{\vp}, \dmX{\vq}\bigr)$, that $D \leq \frac{\eps}{4}\dY{\repX{\vp}}{\repX{\vq}}$. Thus, by the triangle inequality and $\dmX{\vp} \leq D$ and $\dmX{\vq} \leq D$, we have
    \begin{equation}
        (1-\tfrac{\eps}{2})\dY{\repX{\vp}}{\repX{\vq}}
        \leq
        \dY{\repX{\vp}}{\repX{\vq}}-2D
        \leq
        \dY{p}{q}
        \leq
        \dY{\repX{\vp}}{\repX{\vq}}+2D
        \leq
        (1+\tfrac{\eps}{2})\dY{\repX{\vp}}{\repX{\vq}}.
        \eqlab{yo}
    \end{equation}
    Using $\tfrac{1}{1-\eps/2}\leq(1+\eps)$ and $\tfrac{1}{1+\eps/2}\geq(1-\eps)$ for $\eps\in(0,1)$, this yields the desired inequality.

    If the pair is finalized by condition \itemref{W_II}, then $\vp$ is a leaf, $\repX{\vp} = p$, and $\dY{\repX{\vp}}{\repX{\vq}} \geq \dsY{\segX{\vp}}{\repX{\vq}} \geq \frac{4}{\eps} \dmX{\vq}$. Setting $D = \dmX{\vq}/2$, and for all $q \in \VX{\vq}$, \Eqref{yo} applies and implies the claim.  The same applies if the pair is finalized by condition \itemref{W_III}.

    The only remaining case is that both nodes are leaves, but then $p = \repX{\vp}$ and $q = \repX{\vq}$, and the claim trivially holds.
\end{proof}

\begin{remark}
    \remlab{wspd_like}
    \begin{compactenumi*}%
        \item \lemref{wspd} (ii) implies that the decomposition
        \begin{math}
            \mathcal{W} = \Set{ \{ \VX{\vp}, \VX{\vq} \} }{ (\vp,\vq) \text{ leaf of }\Tbsp}
        \end{math}
        is a bichromatic $1/\eps$-\WSPD of $P$ and $Q$. Below we show that this decomposition is of size $O( cn/\eps)$, which is better than the regular bounds on such decompositions (i.e., $O(n /\eps^d)$). However, this improved bound is only for the case that $P$ and $Q$ are $c$-packed. In particular, unlike standard \WSPD constructions, this decomposition can have quadratic size in the worst case (e.g., for $c = \Omega(n)$).

        \indent%
        \item \itemlab{its_a_constant} %
        Another implication of \lemref{wspd} (ii), for a rectangle $\rect$ defined by a leaf of $\Tbsp$, is that the elevation function over $\rect \cap ( \XP \times \YQ)$ is a constant up to a factor of $1 \pm \eps$. That is, conceptually, on the discrete domain, the elevation function is (approximately) a constant inside each rectangle.
    \end{compactenumi*}
\end{remark}

A node $(\vp,\vq) \in \Tbsp$ that was created by the algorithm at step \itemref{split_the_other} in the refinement is a \emphi{minor} node. The children of a minor node are also minor. Our argument will first bound the number of non-minor nodes. Before, we prove a simple helper lemma. In the following, we denote by $\parentX{\vp}$ the parent of a node $\vp$.

\begin{lemma} \lemlab{stolen} Consider a pair $(\vp,\vq) \in \Tbsp$:
    \begin{compactenumi}
        \item If $(\vp,\vq)$ is non-minor, then $\min \bigl( \dmX{\parentX{\vp}}, \dmX{\parentX{\vq}} \bigr)\geq \max\bigl( \dmX{\vp}, \dmX{\vq} \bigr)$, and

        \item If $(\vp,\vq)$ is minor and $\vq$ is an internal node of $\Tq$, then $\dmX{\parentX{\vp}} \geq \dmX{\vq}$.
    \end{compactenumi}
\end{lemma}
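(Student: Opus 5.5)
The plan is to argue by induction along the construction of $\Tbsp$, using two elementary facts. The first is monotonicity of lengths along root-to-leaf paths: if $\vp'$ is a descendant of $\vp$ in $\Tp$, then $\suppX{\vp'} \subseteq \suppX{\vp}$, so $\cVX{\vp'}$ is a subcurve of $\cVX{\vp}$ and $\dmX{\vp'} \leq \dmX{\vp}$. The same holds in $\Tq$. The second is that every node of $\Tbsp$ other than the root is created from its parent pair by exactly one of the steps (B), (C), or \itemref{split_the_other}. For the roots of $\Tp$ and $\Tq$, I adopt the convention $\dmX{\parentX{\rootX{\Tp}}} = \dmX{\parentX{\rootX{\Tq}}} = \infty$, so both statements hold trivially at the root pair of $\Tbsp$.

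For (i), I use induction on the depth in $\Tbsp$. Let $u=(\vp,\vq)$ be a non-minor node that is not the root. Since children of minor nodes are minor, its parent $u'$ is also non-minor, and $u$ was created from $u'$ by step (B) or (C). In case (B) we have $u' = (\vp, \parentX{\vq})$ with $\dmX{\vp} \leq \dmX{\parentX{\vq}}$. There are four inequalities to check.
\begin{compactitem}
    \item $\dmX{\parentX{\vp}} \geq \dmX{\vp}$, by monotonicity.
    \item $\dmX{\parentX{\vq}} \geq \dmX{\vq}$, by monotonicity.
    \item $\dmX{\parentX{\vq}} \geq \dmX{\vp}$, which is exactly the split condition of (B).
    \item $\dmX{\parentX{\vp}} \geq \dmX{\parentX{\vq}} \geq \dmX{\vq}$. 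This follows from the induction hypothesis applied to $u'$, which gives $\dmX{\parentX{\vp}} \geq \max\bigl(\dmX{\vp},\dmX{\parentX{\vq}}\bigr)$, followed by monotonicity.
\end{compactitem}
Together these give $\min\bigl(\dmX{\parentX{\vp}},\dmX{\parentX{\vq}}\bigr) \geq \max\bigl(\dmX{\vp},\dmX{\vq}\bigr)$. Case (C) is symmetric, with the roles of $\vp$ and $\vq$ exchanged. The only real care needed is the base case, which is covered by the $\infty$ convention.

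For (ii), let $u=(\vp,\vq)$ be minor with $\vq$ internal. I walk up $\Tbsp$ to the first non-minor ancestor $w$. The child $w_1$ of $w$ on the path to $u$ was created by step \itemref{split_the_other}, so exactly one component of $w$ is a leaf.

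The key structural observation is that along a chain of minor nodes, \itemref{split_the_other} always splits the non-leaf component and leaves the leaf component unchanged. Hence the leaf component of $w$ persists unchanged down to $u$. If the $\Tq$-component of $w$ were that leaf, then $\vq$ would be a leaf, contradicting our assumption. So the leaf is on the $P$ side: $w=(\vp,\vq')$ with $\vp$ a leaf and $\vq$ a proper descendant of $\vq'$.

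Now $w$ is non-minor, so part (i) gives $\dmX{\parentX{\vp}} \geq \max\bigl(\dmX{\vp},\dmX{\vq'}\bigr) \geq \dmX{\vq'}$. Monotonicity then gives $\dmX{\vq'} \geq \dmX{\vq}$, which proves the claim. If $w$ is the root of $\Tbsp$, the convention $\dmX{\parentX{\rootX{\Tp}}}=\infty$ again makes the claim trivial.

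The main obstacle is the structural bookkeeping in (ii). One must verify that minor descendants keep the same leaf component, and that the ancestor $w$ at which the chain of minor nodes starts is non-minor, so that (i) applies to it. Both follow directly from the description of step \itemref{split_the_other} and the definition of minor nodes.
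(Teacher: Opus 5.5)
Your proof is correct and follows the paper's own argument. For (i), you induct along the path to the root of $\Tbsp$, using that non-minor splits always split the side with the larger $\dmX{\cdot}$ together with monotonicity of $\dmX{\cdot}$ along $\Tp$ and $\Tq$; for (ii), you pass to the lowest non-minor ancestor, which must share the leaf $\vp$, apply (i) there, and finish with monotonicity. One small imprecision, which does not affect correctness: descendants of a minor node may be split by step (B) or (C) (whichever option applies first) rather than by \itemref{split_the_other}, but since a leaf has no children the leaf component is never split, so your conclusion stands.
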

\begin{proof}
    (i) Consider two nodes $(\vp_1,\vq_1), (\vp_2,\vq_2) \in \Tbsp$ that are adjacent, with say, $\vp_1 = \vp_2$ and $\vq_1 = \pX{\vq_2}$.  As the algorithm always splits the larger side of non-minor pairs, and the diameters of nodes are monotonically decreasing along paths in $\Tp$ and $\Tq$ (as one goes down), we have that $\max\bigl( \dmX{\vp_2}, \dmX{\vq_2} \bigr) \leq \dmX{\vq_1}$. Similarly, if $\vp_1 = \pX{\vp_2}$ and $\vq_1 = \vq_2$, then $\max\bigl( \dmX{\vp_2}, \dmX{\vq_2} \bigr) \leq \dmX{\vp_1}$. The claim now readily follows by induction, by observing that $\pX{\vp}$ and $\pX{\vq}$ both appear somewhere in the pairs (i.e., nodes) of the path from $(\vp,\vq)$ to the root of $\Tbsp$, see \figref{stolen}.

    \begin{figure}[h]
        \centerline{%
           \includegraphics{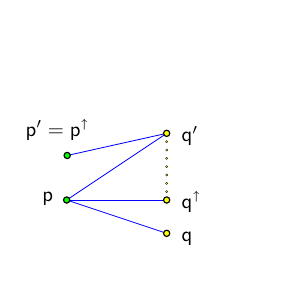}%
        }%
        \caption{}
        \figlab{stolen}
    \end{figure}

    (ii) Let $(\vp_0,\vq_0) \in \Tbsp$ be the lowest ancestor of $(\vp,\vq)$ that is not minor. By construction, $\vp$ is a leaf of $\Tp$, and $\vp_0 = \vp$. By (i) we have that $\dmX{\pX{\vp}} \geq \dmX{\vq_0} \geq \dmX{\vq}$.
\end{proof}

\begin{lemma}
    \lemlab{num_pairs}%
    Let $n$ be the total number of vertices of $P$ and $Q$.  If $\cP$ and $\cQ$ are $c$-packed, then the tree $\Tbsp$ constructed above contains $\OO(\frac{cn}{\eps})$ non-minor nodes.
\end{lemma}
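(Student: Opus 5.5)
We charge every non-minor node of $\Tbsp$ to a node of one of the two input trees and then use packedness to bound the number of charges each such node receives. Fix a non-minor pair $u=(\vp,\vq)\in\Tbsp$ that is not the root, and let $(\vp',\vq')$ be its parent in $\Tbsp$. The parent was split and is therefore not $\eps$-separated. Since every internal node of $\Tbsp$ has exactly two children, it suffices to bound the number of internal non-minor pairs, which are exactly the non-separated non-minor pairs. So let $(\vp,\vq)$ be a non-minor pair that is not $\eps$-separated, and assume without loss of generality that $\dmX{\vp}\le \dmX{\vq}$, so the algorithm splits $\vq$. We charge the pair to $\vq$, the node whose diameter is largest. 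The claim then reduces to showing that each node $\vq$ (and symmetrically each $\vp$) is charged by $O(c/\eps)$ pairs. Because the trees have $O(n)$ nodes, this gives the bound $O(cn/\eps)$.

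Now fix $\vq$ and take any pair $(\vp,\vq)$ charged to it. By \lemref{stolen}~(i), $\dmX{\pX{\vp}}\ge \dmX{\vq}\ge \dmX{\vp}$. Set $r=\dmX{\vq}$. Since the pair is not separated by \itemref{W_IV}, we have $\dY{\repX{\vp}}{\repX{\vq}}< \frac{4}{\eps}r$. The subcurve $\cVX{\vp}$ has length at most $r$ and contains $\repX{\vp}$, so it lies inside the ball $\ballY{\repX{\vq}}{(4/\eps+1)r}$. (A leaf $\vp$ is handled the same way, with the segment $\segX{\vp}$ in place of the subcurve.) The nodes $\vp$ charging $\vq$ are pairwise disjoint in $\Tp$: if one were an ancestor of another, the same $\vq$ would be paired with both, and along a root path of $\Tbsp$ this would force $\vq$ to be split while a larger $\vp$ was kept, contradicting the rule that the larger side is split. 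Hence their subcurves are interior-disjoint pieces of $\cP$, all inside a ball of radius $O(r/\eps)$. By $c$-packedness their total length is $O(cr/\eps)$.

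The main obstacle is that these subcurves may be much shorter than $r$, so a total-length bound alone does not bound their number. Here the parent bound helps: $\dmX{\pX{\vp}}\ge r$. The parents $\pX{\vp}$ lie within distance $O(r/\eps)$ of $\repX{\vq}$, and each is charged by at most two children. Distinct maximal parents are disjoint, and each parent that has length at least $r$ contributes length at least $r$ inside a slightly enlarged ball of radius $O(r/\eps)$. I would first trim each parent's subcurve to its length-$r$ piece near $\vp$ so that the piece stays in the ball. If the parents are nested, I would pass to the minimal ones and then argue that the nested ones form a chain that also has bounded multiplicity. The enlarged ball has length capacity $O(cr/\eps)$, so there are $O(c/\eps)$ disjoint length-$r$ pieces, which bounds the number of pairs charged to $\vq$ by $O(c/\eps)$. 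The leaf cases \itemref{W_II} and \itemref{W_III} follow identically, with segment distances replacing representative distances. The delicate part is making the disjointness and nesting argument among the parents rigorous.
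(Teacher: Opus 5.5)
\textbf{Verdict: genuine gap, and it cannot be closed.} Your reduction is the paper's own argument. You count internal non-minor pairs and charge each to its larger node $\vq$. You then use that the nodes $\vp$ charged to $\vq$ form an antichain in $\Tp$ with $\dmX{\pX{\vp}}\geq r=\dmX{\vq}$, and pack the parents into a ball of radius $O(r/\eps)$.

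The gap is exactly the step you flagged. The parents can form a chain of length $\Theta(\log n)$ whose members all have essentially the same length and lie on the same stretch of $\cP$. Trimming them to length-$r$ pieces therefore does not make them disjoint, and chains do not have bounded multiplicity. The paper's proof has the same hole: it asserts that distinct parents $\pX{x_1},\pX{x_2}$ are either equal or have disjoint subcurves, and that is false in the following example.

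Take $\eps=1/2$ (threshold factor $8$) and $N=K=2^h$. Let $p_1,\ldots,p_{N-1}$ be equally spaced on a vertical segment of length $\delta\leq 2^{-2K}$ starting at the origin, let $p_N=(0,2)$, and let $q_j=(2^{-(j-1)},0)$ for $j=1,\ldots,K$. Both curves are straight segments, hence $2$-packed.
\begin{itemize}
\item Every node $s_i$ on the right spine of $\Tp$ (with $s_0=\rootX{\Tp}$) contains the long last edge. So $\dmX{s_i}\approx 2>\dmX{\rootX{\Tq}}$, while $\dY{\repX{s_i}}{\repX{\rootX{\Tq}}}\approx 2<8\dmX{s_i}$. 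Rule (C) therefore peels off all left children $x_1,\ldots,x_h$ of the spine as non-minor pairs $(x_i,\rootX{\Tq})$, each $x_i$ of length at most $\delta$.
\item Let $w\in\Tq$ be internal, spanning $q_a,\ldots,q_b$ with $a<b$. Then $\dmX{w}\geq 2^{-a}$, whereas $\repX{x_i}$ and $\repX{w}=q_b$ are within distance $2^{-a}+\delta\leq 2\cdot 2^{-a}$ of each other. So $(x_i,w)$ is never separated, and since $\dmX{x_i}\leq\delta<\dmX{w}$, rule (B) splits $w$.
\end{itemize}
Hence every internal $w$ is charged by all of $x_1,\ldots,x_h$, whose parents $s_0\supset s_1\supset\cdots\supset s_{h-1}$ are nested.

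This is more than a local failure. The example yields at least $h(K-1)=\Omega(n\log n)$ non-minor pairs with $c=2$ and $\eps=1/2$. So the statement itself is false for the index-balanced trees $\Tp,\Tq$, and no refinement of the disjointness or nesting bookkeeping can prove it.

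What your argument does establish is a width-times-height bound:
\begin{itemize}
\item Pairwise disjoint parents number $O(c/\eps)$, by your trimming argument: each parent contains a length-$r$ piece within distance $(\frac{4}{\eps}+1)r$ of $\repX{\vq}$.
\item Any chain of parents is no longer than the depth $O(\log n)$ of $\Tp$.
\end{itemize}
So each node is charged $O(\frac{c}{\eps}\log n)$ times, for a total of $O(\frac{cn}{\eps}\log n)$. Reaching $O(\frac{cn}{\eps})$ would require changing the hierarchy, not the proof. One option is a hierarchy in which nested nodes shrink geometrically in length, as cells do in a quadtree.
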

\begin{proof}
    Let $X$ be the set of all nodes in $\Tbsp$ that are not minor.  Let $Y$ be the set of all nodes $v \in X$ that are either a leaf or whose children are minor in $\Tbsp$. Clearly $|X| = O(|Y|)$. So, bounding $|Y|$ is sufficient to bound the number of non-minor nodes in $\Tbsp$. In the following, we describe a charging scheme to charge the nodes in $Y$ to nodes in the trees $\Tp$ and $\Tq$. We then argue by using a packing argument that the number of pairs (nodes of $\Tbsp$) charged to a node in $\Tp$ or $\Tq$ is small.

    Assume that $(\vp, \parentX{\vq})\in X$ (the other case is $(\parentX{\vp}, \vq)\in X$, which is symmetric by switching the role of $\vp$ and $\vq$).  We charge the pair $(\vp,\vq)$ to $\parentX{\vq}$. Let $C$ be the set of all the vertices $x \in \Tp$ such that $(x, \vq) \in Y$ was charged to $\parentX{\vq}$.  Note that at most $2\cardin{C}$ charges are made to $\parentX{\vq}$ because for $(x,\vq)$ to be charged to $\parentX{\vq}$, $\vq$ has to be one of the two children of $\parentX{\vq}$. Observe that no pair of vertices $x_1, x_2 \in C$ has the property that $x_1$ is an ancestor of $ x_2$. Indeed, if $(x_1, \vq) \in Y$, then its children are minor, and the pair $(x_2,\vq)$ is minor as well, and is thus not in $Y$. In addition, for any $x \in C$ we have that $(x,\pX{\vq}) \in X \setminus Y$.  Therefore the pair is not $\eps$-separated and thus by \defref{separated} and \lemref{stolen}, for any $x \in C$, we have that
    \begin{equation*}
        \dY{\repX{x}}{\repX{\parentX{\vq}}}
        \leq
        \tfrac{4}{\eps}  \max\bigl( \dmX{x}, \dmX{\smash{\pX{\vq}}} \bigr)
        =
        \tfrac{4}{\eps} \dmX{\pX{\vq}}.
    \end{equation*}
    Let $F = \Set{ \cVX{\!\parentX{\,x}}\!}{ x \in C}$.  Note that for two distinct $x_1, x_2\in C$, we either have $\parentX{x_1}=\parentX{x_2}$ or $\cVX{\parentX{x_1}}$ and $\cVX{\parentX{x_2}}$ are disjoint, since $(x, \parentX{\vq})\in X$. So, the set $F$ contains only disjoint subcurves of $\cP$, and $\cardin{F}\geq\cardin{C}/2$. All of the curves in $F$ contain points within distance at most $R=\frac{4}{\eps}\dmX{\pX{\vq}}$ from $\repX{\pX{\,\vq}}$.  Each one of these fragments have length at least $\dmX{\pX{\vq}}$ because $\dmX{\pX{x}}\geq\dmX{\pX{\vq}}$ for all $x\in C$ by the proof of \lemref{stolen}.  Note that each of these fragments either fully lies inside $\ballY{\repX{\vq}}{2 R}$ or is longer than $R>\dmX{\pX{\vq}}$. As such, each fragment contributes at least a length of $\dmX{\pX{\vq}}$ to the length of the curve inside the ball.  But then, as~$\cP$ is \defrefY{c_packed}{$c$-packed} in the ball $\ballY{\repX{\vq}}{2R}$, we have that $\cardin{F}\cdot \dmX{\pX{\vq}} \leq\lenX{\cP\cap\ballY{\repX{\vq}}{2R}}\leq2cR$ implying
    \begin{equation*}
        \cardin{F}
        \leq
        2c R/\dmX{\pX{\vq}}
        =
        8c/\eps.
    \end{equation*}
    This implies $|C| = O(c/\eps)$, which bounds the number of charges $\pX{\vq}$ pays for. As~$\Tp$ has size $\OO(n)$, it follows that $|Y| = O( cn/\eps)$, and this bounds the number of non-minor pairs considered by the algorithm.
\end{proof}

\begin{lemma}
    \lemlab{cont_time_decomp}%
    Let $n$ be the total number of vertices of $P$ and $Q$.  If $\cP$ and $\cQ$ are $c$-packed, the tree~$\Tbsp$ contains at most $\OO( \frac{cn}{\eps} )$ nodes.
\end{lemma}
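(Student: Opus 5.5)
The plan is to reduce to counting minor nodes. \lemref{num_pairs} already bounds the non-minor nodes of $\Tbsp$ by $\OO(\frac{cn}{\eps})$, so only the minor nodes remain. Every minor node lies in a subtree of $\Tbsp$ hanging off a non-minor node. Such a subtree is created by step \itemref{split_the_other}: one coordinate, say $\vp$, is a fixed leaf of $\Tp$ with segment $\segX{\vp}$, and the other coordinate $\vq$ is refined down $\Tq$. The symmetric case is identical. Each minor subtree is a full binary tree, so its size is at most twice its number of internal nodes plus one. The $+1$ terms are paid for by the non-minor parents. Hence it suffices to bound the number of \emph{internal} minor nodes $(\vp,\vq)$, that is, minor pairs that are not $\eps$-separated.

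For an internal minor node $(\vp,\vq)$, condition \itemref{W_II} fails, so
\[
\dsY{\segX{\vp}}{\repX{\vq}} < \tfrac{4}{\eps}\dmX{\vq},
\]
and $\vq$ is internal in $\Tq$. Moreover, \lemref{stolen}~(ii) gives $\dmX{\pX{\vp}} \geq \dmX{\vq}$. I would charge $(\vp,\vq)$ to the node $\vq \in \Tq$. A fixed pair $(\vp,\vq)$ occurs at most once in $\Tbsp$, by the uniqueness in \lemref{wspd}~(i) together with the fact that the pairs along a root-to-leaf path of $\Tbsp$ are distinct. So it suffices to show that, for a fixed $\vq$, only $\OO(c/\eps)$ leaves $\vp$ of $\Tp$ can form an internal minor pair with $\vq$. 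Summing over the $\OO(n)$ nodes of $\Tq$, and symmetrically of $\Tp$, then gives $\OO(\frac{cn}{\eps})$.

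The counting for fixed $\vq$ is a packing argument modelled on the proof of \lemref{num_pairs}. Set $R = \frac{4}{\eps}\dmX{\vq}$. Each relevant leaf $\vp$ has its segment within distance $R$ of $\repX{\vq}$, and its parent subcurve $\cVX{\pX{\vp}}$ has length at least $\dmX{\vq}$ and contains $\segX{\vp}$. Hence the portion of $\cVX{\pX{\vp}}$ inside $\ballY{\repX{\vq}}{2R}$ has length at least $\min(\dmX{\vq}, R) = \dmX{\vq}$. If these parent subcurves were pairwise disjoint, or each shared by $\OO(1)$ leaves, then $c$-packedness in $\ballY{\repX{\vq}}{2R}$ would give
\[
\#\{\vp\} \cdot \dmX{\vq} \leq \OO(1)\cdot 2cR,
\]
that is, $\OO(c/\eps)$ leaves, exactly as in \lemref{num_pairs}.

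The main obstacle is the disjointness of the parent subcurves. Because of the bottom-up pairing, a leaf left over at the end of an odd-length list is carried up a level. Its parent can then be a high node, so parent subcurves of distinct leaves may be nested rather than disjoint. I would first try to replace $\pX{\vp}$ by a canonical witness that is automatically disjoint, namely the lowest ancestor $u$ of $\vp$ with $\dmX{u} \geq \dmX{\vq}$. Alternatively, I would exploit that the tree is nearly balanced, so that the parent of any leaf covers at most $\OO(1)$ leaves of the next level. If that fails, I would separate two cases. Leaves whose own segment satisfies $\lenX{\segX{\vp}} \geq \dmX{\vq}$ are packed directly using their (disjoint) segments. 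For short segments, I would charge to the parent and bound the multiplicity of any parent subcurve using the balanced bottom-up structure of $\Tp$.
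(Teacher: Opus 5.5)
Your reduction and packing argument are the paper's proof. You restrict to internal minor pairs and charge $(\vp,\vq)$ to the internal node $\vq$. You use $\dmX{\pX{\vp}}\geq\dmX{\vq}$ from \lemref{stolen}~(ii) and the failure of \itemref{W_II}, and then apply $c$-packedness in $\ballY{\repX{\vq}}{2R}$.

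There is a gap at the step you flag as the main obstacle: the fragments $\cVX{\pX{\vp}}$ must be disjoint (up to bounded multiplicity) for the packing inequality, and your proposal never establishes this.
\begin{itemize}
\item Your first candidate fix does not work. By \lemref{stolen}~(ii), the lowest ancestor $u$ with $\dmX{u}\geq\dmX{\vq}$ is always $\vp$ or $\pX{\vp}$. So a leaf with a high parent still gets a witness that contains the witnesses of many other leaves.
\item ``The parent of any leaf covers $\OO(1)$ leaves'' is false for exactly such a leaf.
\item In your case split, the short-segment leaves charged to parents face the same problem. The difficulty is nesting, not multiplicity, since a parent has at most two leaf children anyway.
\end{itemize}

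The paper sidesteps this by assuming all leaves of $\Tp$ lie at the same depth (padding $P$ to a power of two). Then all leaf parents lie on one level of a laminar family and are pairwise disjoint or equal. Without padding, the missing observation is about the bottom-up pairing. Every leaf except possibly the last vertex $p_n$ is merged in the first round, so its parent has exactly two leaf children, and these parents are pairwise disjoint or equal. Only $p_n$ can be carried over and acquire a high parent. So for each fixed $\vq$ you discard at most one leaf and get at most $16c/\eps+1$ leaves $\vp$ per $\vq$. This still sums to $\OO(cn/\eps)$ over the $\OO(n)$ nodes of $\Tq$, and symmetrically of $\Tp$.
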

\begin{proof}
    For simplicity of exposition, assume that $\Tp$ has all its leaves at the same distance from its root\footnote{One could make the number of vertices in $P$ a power of two by adding fake vertices, and then this property holds.}.  Let $U$ be the set of all minor nodes in $\Tbsp$ that are not leaves. The task is to bound $\cardin{U}$ as all minor leaves can be charged directly to nodes of $U$.

    Fix an internal node $\vq \in \Tq$, and let $F_\vq = \Set{ \vp \in \Tp }{ (\vp,\vq) \in U}$ and $C_\vq = \Set{ \cVX{\pX{\vp}}\!}{ \vp \in F_{\vq} }$. All the nodes in $F_\vq$ are leaves of $\Tp$. This, together with the fact that $\Tp$ is a binary tree, implies that at most two nodes of $F_\vq$ might be mapped to the same fragment of $C_\vq$. The fragments of $C_\vq$ are interior disjoint and are relatively long.  Indeed, for each pair $(\vp,\vq)\in U$, we have $\dmX{\pX{\vp}} \geq \dmX{\vq}$ by \lemref{stolen}. Similarly, since $(\vp,\vq)\in U$ is not final by definition, it must be that the separation check (see \defref{separated}) failed (i.e., \itemref{W_II} does not hold). This implies that
    \begin{equation*}
        \dsY{\cVX{\pX{\vp}}}{\repX{\vq}} %
        \leq
        \dsY{\segX{\vp}}{\repX{\vq}} < \frac{4}{\eps} \dmX{\vq}.
    \end{equation*}

    So, the fragments of $C_\vq$ intersect the ball $\ballY{\repX{\vq}}{r}$, and thus the enlarged ball $\ballC = \ballY{\repX{\vq}}{2r}$, where $r = {\tfrac{4}{\eps} \dmX{\vq}}$.  By the $c$-packed property of $\cP$, the total length of $\cP$ inside $\ballC$ is at most $2cr$, which readily implies that $\cardin{C_\vq} \leq 8c/\eps$. Thus, any internal node of $\Tp$ or $\Tq$ pays for at most $O(c/\eps)$ minor nodes, which implies the desired bound by \lemref{num_pairs}, as $\cardin{U} = \sum_{\vq \in \Tq}O( \cardin{C_\vq})$.
\end{proof}

The above implies the following result.

\begin{theorem}
    \thmlab{number_pairs}
    Given two $c$-packed polygonal curves $\cP$ and $\cQ$, with a total of $n$ vertices, and a parameter $\eps \in (0,1)$, the above algorithm computes, in $\OO(cn/ \eps)$ time, a decomposition of $\cPQ = [0,\lenX{P}] \times [0, \lenX{Q}]$ into $\OO(cn/\eps)$ disjoint rectangles.  A rectangle $\rectY{\vp}{\vq}$ in this decomposition is induced by a pair of nodes~$\vp \in \Tp$ and~$\vq \in \Tq$, where $\Tp$ and $\Tq$ are two binary trees (computed by the algorithm) of linear size on~${\cP}$ and ${\cQ}$, respectively, such that $\vp$ and $\vq$ are \defrefY{separated}{$\eps$-separated}.
\end{theorem}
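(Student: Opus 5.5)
The plan is to assemble the theorem from the pieces already established: the construction of $\Tp,\Tq$, the refinement algorithm that builds $\Tbsp$, \lemref{wspd}~(i), and the size bound of \lemref{cont_time_decomp}. I would argue three things in turn: the output is a disjoint decomposition of $\cPQ$, every rectangle comes from an $\eps$-separated pair, and both the size and the running time are $\OO(cn/\eps)$.

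For correctness, I would note first that the refinement only ever replaces a rectangle $\suppX{\vp}\times\suppX{\vq}$ by the two rectangles obtained from the children of one of its nodes. The extensions of the two children of a node partition the extension of that node into half-open intervals, and $\suppX{\rootX{\Tp}}\times\suppX{\rootX{\Tq}}$ covers $\cPQ$. An induction over the steps of the algorithm, which is exactly \lemref{wspd}~(i), then shows that the current leaves of $\Tbsp$ always form a disjoint cover of $\cPQ$.

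I also need termination, and that every final leaf is $\eps$-separated. Each step either finalizes a pair or strictly descends in $\Tp$ or $\Tq$. If neither node can be split, both are leaves, and then condition \itemref{W_I} applies. So the queue empties, and each leaf of $\Tbsp$ was finalized in option (A), which means it is $\eps$-separated by definition.

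For size and time, \lemref{cont_time_decomp} bounds the total number of nodes of $\Tbsp$ by $\OO(cn/\eps)$, so the number of rectangles (leaves) is bounded by the same quantity. For the running time, building $\Tp$ and $\Tq$ with their stored information (lengths $\dmX{\cdot}$, representatives, extensions) takes $\OO(n)$ time by the Observation. Each queue operation handles one node of $\Tbsp$ in $O(1)$ time: the choice among options (A)--(D) only compares stored quantities, and creating the children is constant work.

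The one step I expect to need care is the separation test. Conditions \itemref{W_II} and \itemref{W_III} require $\dsY{\segX{\vp}}{\repX{\vq}}$, the distance from a point to a segment. In fixed dimension $d$ this is computable in $O(1)$ time by projecting onto the segment's line and clamping, and \itemref{W_IV} is a single point distance. Hence the total work is proportional to the number of nodes of $\Tbsp$ plus $\OO(n)$, which is $\OO(cn/\eps)$.
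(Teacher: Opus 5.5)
Your proposal is correct and follows the paper's route: the paper presents the theorem as an immediate consequence of \lemref{wspd}~(i), the linear-time construction of $\Tp$ and $\Tq$, and the node bound of \lemref{cont_time_decomp}, and these are exactly the ingredients you combine. You also spell out what the paper leaves implicit, namely termination, why every leaf of $\Tbsp$ is $\eps$-separated, and the $O(1)$ cost per queue step (including the point--segment distance in the separation test).
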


\begin{remark}
    (i) The only properties of $\Tp$ (and $\Tq$) used in the above construction are that they have linear size in the input curves, and they form a hierarchical decomposition of the underlying curve~$\cP$. Thus, what exact binary tree to use in the above construction is somewhat arbitrary. We choose the simplest trees that work for our purposes.

    (ii) One can view the above algorithm as computing a bichromatic \eWSPD of~$P \otimes Q$, similar to the data structure in~\citeau{ypfa-seaad-16} (and the follow-up work \cite{afpy-adtwe-16}).
\end{remark}

\subsection{Using the decomposition}
\seclab{refinement}

\begin{defn}
    \deflab{simp_elev}%
    For a rectangle $\rect=\rectY{\vp}{\vq}\subseteq\cPQ$, a \emphi{simplified elevation function} $\elevS:\rect \rightarrow \Re$, is a function of the form $\esX{x,y} = \normX{x u + y v + w }$ where $u,v, w$ are arbitrary vectors in $\Re^d$ (which might be $0$).
\end{defn}

If $\rect$ is a \defrefY{cell}{cell}, then the original elevation function is $\elev(x,y) = \normX{ \cP(x)-\cQ(y)}$, and it can be interpreted as being a simplified elevation function.  The main useful property of the decomposition of \thmref{number_pairs} is that the elevation function, for every rectangle in it, can be well approximated by a simplified elevation function.

\begin{lemma}
    \lemlab{cont_fs_approx}%
    For each rectangle of the decomposition of \thmref{number_pairs}, there is an associated \defrefY{simp_elev}{simplified elevation} function $\elevS:\rect\rightarrow\Re$, such that for all $(x,y) \in \rect$,
    \begin{equation*}
        (1-\eps)\elevY{x}{y} \leq \esX{x,y} \leq (1+\eps) \elevY{x}{y}.
    \end{equation*}
\end{lemma}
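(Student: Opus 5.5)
We construct $\elevS$ by cases according to which condition of \defref{separated} finalized the pair $(\vp,\vq)$ inducing $\rect=\rectY{\vp}{\vq}$. The guiding idea is that on a final rectangle, each side is either a single segment or a short subcurve far from the other side. A single segment has an affine parameterization. A short far subcurve can be replaced by its representative point at a relative error of $O(\eps)$. A simplified elevation function $\normX{xu+yv+w}$ can encode exactly these three situations: affine in both coordinates, affine in one and constant in the other, or constant in both.

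\emph{Case \itemref{W_I}: both are leaves.} Then $\rect$ is contained in a single cell. On the segment $\segX{\vp}$ we have $\cP(x)=p_{i-1}+(x-x_{i-1})\frac{p_i-p_{i-1}}{\normX{p_i-p_{i-1}}}$, and similarly for $\cQ(y)$. So $\elev(x,y)=\normX{\cP(x)-\cQ(y)}$ is already of the form $\normX{xu+yv+w}$, and we take $\elevS=\elev$ with no error. (Closure and endpoint issues of the half-open intervals are harmless, by continuity.)

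\emph{Case \itemref{W_IV}.} Take $\elevS\equiv\dY{\repX{\vp}}{\repX{\vq}}$, i.e.\ $u=v=0$ and $w=\repX{\vp}-\repX{\vq}$. For $(x,y)\in\rect$, the point $p=\cP(x)$ lies on $\cVX{\vp}$, so $\dY{p}{\repX{\vp}}\le\dmX{\vp}\le D$ (arc length bounds distance); likewise $\dY{q}{\repX{\vq}}\le D$. This is exactly the setting of \Eqref{yo}, except that $p,q$ are now arbitrary curve points rather than vertices. It gives $(1-\eps/2)\elevS\le\elev\le(1+\eps/2)\elevS$, and inverting yields the claimed $(1\pm\eps)$ bounds. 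One small check: the point $\cP(x)$ for $x$ in the extension $\IOCX{x_{i-1},x_j}$ lies on $\cVX{\vp}$, whose length is $\dmX{\vp}$ and which ends at $\repX{\vp}=p_j$. So the distance from $\repX{\vp}$ is indeed at most $\dmX{\vp}$.

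\emph{Case \itemref{W_II}, the main obstacle.} Here $\vp$ is a leaf, so $\cP(x)=a+xu$ is affine on $\rect$. We set $\elevS(x,y)=\normX{\cP(x)-\repX{\vq}}$, i.e.\ $v=0$ and $w=a-\repX{\vq}$. For $q=\cQ(y)$ we have $\dY{q}{\repX{\vq}}\le\dmX{\vq}=:\Delta$. For the point $p=\cP(x)\in\segX{\vp}$, condition \itemref{W_II} gives $\dY{p}{\repX{\vq}}\ge\dsY{\segX{\vp}}{\repX{\vq}}\ge\frac4\eps\Delta$. This lower bound holds uniformly in $x$, and that is precisely why the affine part is kept: it lets the constant-point approximation work along the entire segment. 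The triangle inequality then gives
\[
|\elev(x,y)-\elevS(x,y)|\le\Delta\le\tfrac{\eps}{4}\elevS(x,y),
\]
hence $(1-\eps/4)\elevS\le\elev\le(1+\eps/4)\elevS$, and inverting as before gives the statement. Case \itemref{W_III} is symmetric, with $u=0$ and $v$ the direction of $\segX{\vq}$.

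The only real care needed is in Case \itemref{W_II}. The separation bound must be applied to every point of the segment, not only to the vertex $\repX{\vp}$; this is exactly what the definition using $\dsY{\segX{\vp}}{\cdot}$ provides. Finally, after \thmref{number_pairs}, every rectangle of the decomposition falls into one of these cases.
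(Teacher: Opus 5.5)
Your proof is correct and follows the paper's argument essentially verbatim. You use the same case split on the separation condition, the same simplified functions ($\elev$ itself, $\dY{\cP(x)}{\repX{\vq}}$, $\dY{\repX{\vp}}{\cQ(y)}$, and the constant $\dY{\repX{\vp}}{\repX{\vq}}$), and the same triangle-inequality estimates. The only cosmetic difference is in case (II): you bound the error by $\frac{\eps}{4}\elevS$ and then invert, whereas the paper bounds it directly by $\frac{\eps}{2}\elev(x,y)$, using $\elev(x,y) \geq \dsY{\segX{\vp}}{\repX{\vq}}/2$.
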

\begin{proof}
    Consider a leaf~$(\vp,\vq)$ of the tree $\Tbsp $ and its associated rectangle $\rect = \rectY{\vp}{\vq}=\suppX{\vp} \times \suppX{\vq}$ in the decomposition.  By construction, $(\vp,\vq)$ is $\eps$-separated. Namely, $(\vp,\vq)$ fulfills one of the conditions of \defref{separated}.

    If $(\vp,\vq)$ was finalized because of condition \itemref{W_I}, then $\rect$ is a cell, and its elevation function is already in its simplified form.

    If $(\vp,\vq)$ was finalized because of condition \itemref{W_II}, then $\vp$ is a leaf and $\dsY{\segX{\vp}}{\repX{\vq}} \geq \frac{4}{\eps} \dmX{\vq}$. In this case, let
    \begin{equation*}
        \esX{x,y}
        =
        \dY{\cP(x)}{\repX{\vq}},
    \end{equation*}
    and observe that $\cP(x)$ is an affine function (as it is moving linearly along its respective edge), and thus $\esX{x,y}$ is indeed simplified. As for the approximation, by the triangle inequality, we have that
    \begin{equation*}
        \nabla
        =
        \cardin{\esX{x,y} - \elev(x,y)}
        =%
        \cardin{\Bigl. \dY{\cP(x)}{\repX{\vq}} -
           \dY{\cP(x)}{\cQ(y)} }
        \leq
        \dY{\repX{\vq}}{\cQ(y)}
        \leq
        \dmX{\vq}
        \leq
        \frac{\eps}{4}
        \dsY{\segX{\vp}}{\repX{\vq}}.
    \end{equation*}
    However,
    \begin{math}
        \elev(x,y) \geq \dsY{\segX{\vp}}{\repX{\vq}} - \dmX{\vq} \geq \dsY{\segX{\vp}}{\repX{\vq}}/2.
    \end{math}
    This implies that $\nabla \leq (\eps/2) \elev(x,y)$, as desired.

    \smallskip%
    The case \itemref{W_III} is similar, with the simplified function being
    \begin{math}
        \esX{x,y} = \dY{\repX{\vp}}{\cQ(y)},
    \end{math}

    Finally, if \itemref{W_IV} finalized the pair, then $\dY{\repX{\vp}}{\repX{\vq}} \geq \frac{4}{\eps} \max\bigl( \dmX{\vp}, \dmX{\vq}\bigr)$.  Setting $\esX{x,y} = \dY{\repX{\vp}}{\repX{\vq}}$, we have that
    \begin{equation*}
        \elev(x,y)
        \geq
        \dY{\repX{\vp}}{\repX{\vq}}
        - \dmX{\vp} - \dmX{\vq}
        \geq
        (1-\tfrac{\eps}{2}) \dY{\repX{\vp}}{\repX{\vq}}
        =
        (1-\tfrac{\eps}{2})\esX{x,y}
        .
    \end{equation*}
    And similarly, $\elev(x,y) \leq (1+\tfrac{\eps}{2}) \esX{x,y}$.  The claim now readily follows by observing that $\frac{1}{1+\eps/2} \geq 1-\eps$ and $\frac{1}{1-\eps/2} \leq 1+\eps $.
\end{proof}

\begin{remark}
    The proof of \lemref{cont_fs_approx} provides a characterization of the simplified elevation function for a rectangle $\rect_{\vp,\vq} \in \cDecomp$ and any point $(x,y) \in \rect_{\vp,\vq}$:
    \begin{align*}
        \esX{x,y} =
        \begin{cases}
          \dY{\repX{\vp}}{\repX{\vq}}
          & \text{if }
            \dY{\repX{\vp}}{\repX{\vq}} \geq \frac{4}{\eps} \max\bigl(
            \dmX{\vp}, \dmX{\vq}\bigr)\text{,}
          \\
          \dY{\cP(x)}{\repX{\vq}}
          & \text{if } \vp \text{ is a leaf and } \vq \text{ is not a leaf,}\\
          \dY{\repX{\vp}}{\cQ(y)}
          & \text{if } \vq \text{ is a leaf and } \vp \text{ is not a leaf,}  \\
          \dY{\cP(x)}{\cQ(y)}
          & \text{otherwise (in this case $\rect_{\vp,\vq}$ is a cell)}.
        \end{cases}
    \end{align*}

    Note that for any pair of vertices $p_i\in P$, $q_j\in Q$ with curve coordinates $x_i\in\XP$, $y_j\in\YQ$ such that $(x_i,y_j)\in\rectY{\vp}{\vq}\in\cDecomp$, we have $\esX{x_i,y_j}=\dY{\repX{\vp}}{\repX{\vq}}$. So in the discrete case, we can consider the simplified elevation function to be constant on each rectangle.
\end{remark}

\subsection{Computing neighboring cells}

The associated discrete domain, for the sequences $P$ and $Q$, is the grid $\PQ = \XP \times \YQ$, where $\XP =\{x_1, \ldots, x_{\cardin{P}}\}$ and $\YQ = \{y_1, \ldots, y_{\cardin{Q}}\}$ are the coordinate sequences of $P$ and $Q$, respectively. A neat property of $\cDecomp$ is that every rectangle in it contains at least one point of $\PQ$. Thus, the associated (discrete) decomposition is
\begin{math}
    \Decomp = \Set{ \rect \cap \PQ}{\rect \in \cDecomp}.
\end{math}
Note that for algorithmic purposes, the two decompositions are essentially the same.

\begin{defn}
    Two rectangles $\rect, \rect' \in \cDecomp$ are \emphi{neighbors} if $\closureX{\rect} \cap \closureX{\rect'} \neq \emptyset$, where $\closureX{\rect}$ denotes the closure of $\rect$.  Two discrete rectangles $\rect,\rect'\in \Decomp$ are \emphi{neighbors} if their extensions are neighbors. Let $\Neighbors{\rect}$ denote the set of all the neighbors of a rectangle $\rect$.
\end{defn}

\begin{observation}
    \obslab{laminar}%
    The tree $\Tp$ (and also $\Tq$) defines a \emphi{laminar} family of intervals. That is, for any $\vp, \vp' \in \Tp$, we have that either $\suppX{\vp}$ and $\suppX{\vp'}$ are disjoint, or one of them contains the other. Any rectangle $\rect_u$ associated with a node $u \in \Tcont$ is the product of two intervals from these two laminar families.
    Thus, for any two rectangles $\rect_1, \rect_2 \in \cDecomp$, exactly one of the following is true
    \begin{compactenumi}
        \item $\closureX{\rect_1} \subseteq \closureX{\rect_2}$ or $\closureX{\rect_2} \subseteq \closureX{\rect_1}$,
        \item $\closureX{\rect_1}$ and $\closureX{\rect_2}$ are disjoint,

        \item $\closureX{\rect_1} \cap \closureX{\rect_2}$ is a segment, that is the side edge of one of the rectangles $\rect_1$ and $\rect_2$.

        \item $\closureX{\rect_1} \cap \closureX{\rect_2}$ is a point that is a common corner of both of them.
    \end{compactenumi}
\end{observation}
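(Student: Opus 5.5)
The plan is to first establish the laminarity claim for the trees, then transfer it to rectangles of the decomposition, and finally do a case analysis on how the two coordinate intervals interact.

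\emph{Laminarity of the trees.} For $\vp \in \Tp$ with $\InodeX{\vp} = (x_i,\ldots,x_j)$ we have $\suppX{\vp} = \IOCX{x_{i-1}, x_j}$. The index ranges of the subtrees of $\Tp$ form a laminar family: two subtrees are either nested (ancestor and descendant) or have disjoint leaf sets. Since the map from an index range $[i,j]$ to $\IOCX{x_{i-1},x_j}$ is monotone, nesting of ranges gives nesting of intervals. Consecutive but disjoint ranges $[i,j]$ and $[j+1,k]$ map to the half-open intervals $\IOCX{x_{i-1},x_j}$ and $\IOCX{x_j,x_k}$, which are disjoint. Here we use the assumption that consecutive vertices differ, so the $x_i$ are strictly increasing. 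The same argument applies to $\Tq$, and by definition $\rect_u = \suppX{\vp}\times\suppX{\vq}$ for $u=(\vp,\vq)$.

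\emph{Transfer to rectangles.} Take $\rect_1 = I_1\times J_1$ and $\rect_2 = I_2\times J_2$ in $\cDecomp$. Their closures are $\closureX{I_1}\times\closureX{J_1}$ and $\closureX{I_2}\times\closureX{J_2}$, and the closures of intervals from a laminar family of half-open intervals are again nested, or disjoint, or share exactly one endpoint. So in each coordinate there are three options for the closed intervals: nested, disjoint, or touching in a single point.

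\emph{Case analysis.} If either coordinate is disjoint, the closures are disjoint, which is (ii). If both coordinates touch at a point, the intersection is a single point, and that point is an endpoint in both coordinates for both rectangles, hence a common corner; this is (iv). If one coordinate touches at a point $a$ and the other is nested, say $\closureX{J_1}\subseteq\closureX{J_2}$, then the intersection is $\{a\}\times\closureX{J_1}$. This is a full side of $\closureX{\rect_1}$ and a segment (the intervals are nondegenerate), which is (iii).

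\emph{Main obstacle.} The delicate case is when both coordinates are nested. If they nest in the same direction we get (i) directly. If they nest in opposite directions, say $I_1\subseteq I_2$ and $J_2\subseteq J_1$, the rectangles cross. Here I would use that $\cDecomp$ is a disjoint cover (\lemref{wspd}(i)): the open interiors would intersect, contradicting disjointness. Because the intervals are half-open, I would argue this via interior points $(x,y)$ with $x$ in the interior of $I_1$ and $y$ in the interior of $J_2$. Since the intervals are nondegenerate, such a point exists and lies in both $\rect_1$ and $\rect_2$, which is impossible for distinct leaves. The same reasoning shows that case (i) can only occur with equality, i.e., $\rect_1=\rect_2$. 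Finally, to get "exactly one", I would note that the four outcomes are mutually exclusive for distinct rectangles, since they differ in the dimension or emptiness of the intersection, and for $\rect_1=\rect_2$ only (i) holds.
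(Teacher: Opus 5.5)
Your argument follows the reasoning the paper has in mind. The paper gives nothing beyond the sketch inside the statement: laminarity in each coordinate plus the product structure. You also correctly supply the ingredient that sketch leaves implicit, namely that $\cDecomp$ is a disjoint cover (\lemref{wspd}(i)). Laminarity alone does not rule out two rectangles that cross, or distinct rectangles with nested closures.

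\textbf{The gap: degenerate intervals.} You use twice the claim that all intervals are nondegenerate, and it is false. With $x_1=0$ and the convention $x_0=0^-$, the first leaf of $\Tp$ has extension $\IOCX{x_0,x_1}=\{0\}$, and likewise in $\Tq$. Such leaves really do enter $\cDecomp$.

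For instance, take $P=Q$. For every pair $(\vp,\vq)$ whose rectangle meets $[0,x_2]\times[0,y_2]$, both subcurves are initial pieces of the same curve. Hence the distances in conditions (II), (III), (IV) of \defref{separated} are at most $\dmX{\vq}$, $\dmX{\vp}$, and $\max(\dmX{\vp},\dmX{\vq})$, respectively. So such a pair is $\eps$-separated only when both nodes are leaves. Consequently $\cDecomp$ contains $\{(0,0)\}$, $\IOCX{0,x_2}\times\{0\}$, $\{0\}\times\IOCX{0,y_2}$ and $\IOCX{0,x_2}\times\IOCX{0,y_2}$.

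Now take $\rect_1=\IOCX{0,x_2}\times\{0\}$ and $\rect_2=\IOCX{0,x_2}\times\IOCX{0,y_2}$. Their closures are nested although the sets are disjoint. So your step ``choose a point lying in both rectangles'' has nothing to choose. The same breakdown affects your crossing argument whenever one of the intervals is $\{0\}$. In fact, for this pair (i) and (iii) hold simultaneously.

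\textbf{Consequence and repair.} The ``exactly one'' part of the statement is genuinely false for degenerate rectangles, so no argument can establish it there. The right repair is to state nondegeneracy as an explicit hypothesis, or to treat the two degenerate leaves separately; the paper tacitly does the former. Under that hypothesis your case analysis is complete and correct.
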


The following is an immediate consequence of the dual graph of the decomposition being planar, together with \thmref{number_pairs}.
\begin{lemma}
    \lemlab{number_edges}%
    We have $N = \sum_{\rect \in \cDecomp}\cardin{\Neighbors{\rect}} = O(|\cDecomp|)$.  If $\cP$ and $\cQ$ are $c$-packed, then $N = O(\frac{cn}{\eps})$, where $n=\cardin{P}+\cardin{Q}$.
\end{lemma}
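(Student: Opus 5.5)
The plan is to bound $N$ by a planar-graph argument on the rectangular subdivision $\cDecomp$, and then plug in \thmref{number_pairs} for the $c$-packed bound. Let $k = |\cDecomp|$. By \lemref{wspd}(i), the closures of the rectangles in $\cDecomp$ tile $\cPQ$ with disjoint interiors, and by \obslab{laminar}-style reasoning (\obsref{laminar}) two distinct rectangles of the decomposition meet either in a boundary segment or in a single point. (Containment is impossible for distinct leaves, since they have disjoint interiors.)

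First, I view the union of all rectangle boundaries as a planar straight-line graph $G$. Its vertices are the rectangle corners, together with the points where a corner of one rectangle lies in the interior of a side of another; its edges are the maximal boundary segments between consecutive vertices. Every vertex of $G$ is a corner of at least one rectangle, so $G$ has at most $4k$ vertices. By Euler's formula, $G$ then has $O(k)$ edges and $k+1$ faces.

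Next, I bound the neighbor pairs in two classes.
\begin{compactenumi}
    \item Pairs sharing a segment of positive length. These correspond to edges of the dual graph, which is planar (one node per rectangle, and one dual edge across each edge of $G$, with parallel edges merged). A simple planar graph on $k$ nodes has at most $3k-6$ edges, so there are $O(k)$ such pairs.
    \item Pairs meeting only at a point $z$. Here $z$ is a common corner of both rectangles, so it is a vertex of $G$. Around an axis-parallel tiling point, at most four rectangles can have $z$ as a corner, since each such rectangle occupies a full quadrant near $z$. This gives $O(1)$ pairs per vertex, and hence $O(k)$ pairs in total.
\end{compactenumi}

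Summing over both classes, $N = 2\cdot\#\{\text{neighbor pairs}\} = O(k) = O(|\cDecomp|)$. Finally, \thmref{number_pairs} gives $|\cDecomp| = O(cn/\eps)$ for $c$-packed curves, which yields $N = O(\frac{cn}{\eps})$.

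The main subtlety is the count of point-contacts in the second class. Degree in the dual graph alone does not bound it, because a large rectangle can have many small rectangles along one side. However, contacts that occur only at a point must happen at a common corner, and the quadrant argument caps these at $O(1)$ per vertex of $G$. The vertex count of $G$ is itself bounded by $4k$, so the charging closes.
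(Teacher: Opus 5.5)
Your proof is correct and is essentially the paper's argument: the paper derives the lemma in one line from the planarity of the dual graph together with \thmref{number_pairs}. Your separate $O(1)$-per-vertex count of contacts that occur only at a corner is a worthwhile refinement, because the neighbor relation includes diagonal contacts and the resulting graph is not planar in general (a grid of cells yields a king's graph, which is non-planar for large grids).

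One technicality is left implicit by both you and the paper. With the convention $x_0=0^-$, rectangles involving $p_1$ or $q_1$ degenerate to segments on the left or bottom side of $\cPQ$, so your quadrant, face-counting and ``no containment'' claims do not literally apply to them. Their contacts are overlaps between interval partitions of a line, however, so they number only $O(k)$ and the bound is unaffected.
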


\paragraph*{Computing the neighborhood for all rectangles.}

The task is to compute, for all $\rect \in \cDecomp$, and for all edges of $\rect$, a sorted list of its neighbors that share this edge.  We describe how to handle the vertical edges (a similar algorithm applies to the horizontal edges).

For a rectangle $\rect \in \cDecomp$, where $\rect = \IOCX{x_i,x_j} \times \IOCX{y_{i'},y_{j'}}$, consider the corresponding rectangle in the index space $\IRX{\cardin{P}} \times \IRX{\cardin{Q}}$, denoted by $\rect^{}_\ZZ = \IRY{i}{j} \times \IRY{i'}{j'}$.  The four corners of such a rectangle $\rect_\ZZ^{}$ are $(i, i'), (j, i'), (i, j')$ and $ (j,j')$.  Let $\LL$ be the list of all the vertices of all the rectangles in $\cDecomp$ (the list $\LL$ stores for each vertex the rectangle it came from).

Let $ n =\cardin{P} + \cardin{Q}$. The list $\LL$ can be sorted in $O( n + \cardin{\LL})$ time using radix sort. Indeed, the vertex $(i,j) \in \LL$ can be mapped to the integer $(n+1) i + j$, and these numbers all lie in $\IRX{\pth{n+1}^2}$, and thus radix sort is directly applicable. In the resulting order, all vertices sharing the same $x$-axis are grouped, sorted internally by increasing $y$-order (note that $\LL$ might contain the same vertex at most four times, associated with four different rectangles).

Consider the slice $\LL_i$ of the sorted list for all vertices sharing the same $x$-axis with value $i$. The algorithm scans the list $\LL_i$ with increasing $y$ value (i.e., the algorithm sweeps up on the vertical line $x=i$). At each step, the algorithm tracks the (at most four) rectangles on the left and right containing the current point $(i,y)$. The algorithm records the adjacency between these rectangles as appropriate. It continues this vertical scanning process by handling the next value stored in $\LL_i$ (recording any new adjacency discovered as the set of active rectangles changes). This takes $O(\cardin{\LL_i})$ time for $\LL_i$, and $O( \cardin{\LL})$ time overall.

The same algorithm is repeated to record horizontal adjacencies. At the end of this process, every rectangle has four \emph{sorted} lists of its neighboring rectangles, corresponding to the four edges of the rectangle.

\begin{corollary}%
    \corlab{neighbors}%
    If $\cP$ and $\cQ$ are $c$-packed and have $n$ vertices in total, then the sets~$\Neighbors{\rect}$ for all rectangles $\rect \in \cDecomp$, are computed in~$O\left(\frac{cn}{\eps}\right)$ time, by the algorithm described above.
\end{corollary}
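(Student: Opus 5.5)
The plan is to combine the size bounds already established with a linear-time accounting of the sweep described above. First, by \thmref{number_pairs}, the decomposition $\cDecomp$ (together with the trees $\Tp$, $\Tq$ and $\Tbsp$) is computed in $O(cn/\eps)$ time and has $|\cDecomp| = O(cn/\eps)$ rectangles. For each rectangle I record its index-space rectangle $\rect_\ZZ$. This is possible in $O(1)$ time per rectangle, since each node of $\Tp$ and $\Tq$ stores its interval endpoints by index. Hence the list $\LL$ of corners has size $|\LL| \le 4|\cDecomp| = O(cn/\eps)$ and is built in that time.

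Second, I would bound the sorting step. The keys $(n+1)i+j$ lie in $\IRX{(n+1)^2}$, so a two-pass radix sort with base $n+1$ runs in $O(n + |\LL|)$ time. Since $c \ge 1$ for any curve with a nonzero-length edge (a ball centered at an interior point of an edge already gives length $2r$ inside), and $\eps<1$, we have $n = O(cn/\eps)$, so this is $O(cn/\eps)$.

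Third, I would justify the correctness and cost of the sweep over each slice $\LL_i$. By \obslab{laminar}-type reasoning (\obsref{laminar}), the rectangles of $\cDecomp$ are interior disjoint and cover $\cPQ$. Any two rectangles sharing a vertical boundary segment on the line $x=i$ meet there along an interval whose endpoints are corners of at least one of them. So between consecutive entries of $\LL_i$, the rectangles immediately to the left and right of the line stay fixed. This means that adjacencies, including corner-only contacts, are discovered exactly at entries of $\LL_i$. Each entry causes $O(1)$ updates of the active left/right rectangles and records $O(1)$ adjacencies, and the adjacencies are emitted in increasing $y$-order, so the per-edge neighbor lists come out sorted. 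Summing over all $i$ gives $O(|\LL|)$ time, and the horizontal pass is symmetric. The total output size is $N = O(cn/\eps)$ by \lemref{number_edges}, consistent with this bound.

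The main obstacle is the third step: arguing rigorously that every neighbor pair is witnessed at some corner in the scan, including the degenerate corner-only contacts, and that the active set has constant size. I would handle this by noting that at any point on the line $x=i$ at most four rectangles meet, because the rectangles tile the plane. I would also note that a left/right rectangle can only change at a point where some rectangle's horizontal side ends, which is a corner in $\LL_i$. Together these give both completeness and the $O(1)$-per-entry cost.
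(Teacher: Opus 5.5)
Your proposal is correct and is essentially the paper's argument: you get $|\mathcal{L}| \le 4|\cDecomp| = O(cn/\eps)$ from \thmref{number_pairs}, radix sort the corner keys in $O(n+|\mathcal{L}|)$ time (your remark that $c$ is bounded below by a constant makes the $O(n)$ term harmless), and sweep each slice with $O(1)$ work per corner. One small precision: the rectangles that change only at entries of $\mathcal{L}_i$ should be the tracked left/right rectangles that have a vertical side on the line $x=i$ — where the line passes through rectangle interiors, the containing rectangle can change with no corner on the line, but such stretches contribute no vertical adjacency.
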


\begin{remark}
    \remlab{direct}%
    Treating the above neighboring relationship as a graph, one can orient the edges computed by the above algorithm so that they are always oriented to the right (across a vertical edge), to the top (if they cross a horizontal edge), and diagonally to the top-right if they are across a corner (we ignore the bottom-right corner edges in this case). This creates a natural \DAG on $\cDecomp$ which complies with the monotonicity constraints for the strong variants of the \Frechet distance.
\end{remark}

\section{Approximating the weak variants of the \Frechet distance}\seclab{frechet}

\subsection{Warm up: weak discrete \Frechet distance}%
\seclab{weak-discrete}%

As a reminder, the weak discrete \Frechet distance between the point sequences $P$ and $Q$ corresponds to the smallest pass (i.e., the highest elevation in the alignment) that any alignment of $P$ and $Q$ encounters while traversing the grid $\PQ$, see \secref{discrete_f}.  The algorithm of \secref{decomposing-free-space} computes a decomposition of the grid~$\PQ$ into $\OO(\frac{cn}{\eps})$ rectangles $\Decomp$, such that inside each (discrete) rectangle the elevation function is the same up to a factor of $1 \pm \eps$, see \remref{wspd_like} \itemref{its_a_constant}. Specifically, for $\rectY{\vp}{\vq} \in \Decomp$ the simplified elevation function is $\dY{\repX{\vp}}{\repX{\vq}}$. Construct the dual graph to the decomposition~$\Decomp$ --- this is an undirected graph where each rectangle in $\Decomp$ induces a vertex, and there are edges between neighboring rectangles. The weight of an edge $e = \{\rectY{\vp}{\vq},\rectY{\vp'}{\vq'}\}$ in this graph is $\max\{\dY{\repX{\vp}}{\repX{\vq}}, \dY{\repX{\vp'}}{\repX{\vq'}}\}$.  To compute the (weak) \Frechet distance, one has to compute the bottleneck shortest-path between the node corresponding to the bottom-left rectangle and the top-right rectangle in $\Decomp$. This can be done in linear time in the size of the undirected graph, by a classical prune-and-search algorithm, see \cite{kp-bspp-06}. \corref{neighbors} implies the following result.

\begin{theorem}
    \thmlab{w_discrete_frechet}%
    Given two $c$-packed sequences of points~$P$ and~$Q$ with $n$ vertices in total, and a parameter~$\eps \in (0,1)$, the above algorithm computes, in $\OO( \tfrac{c n}{\eps})$ time, a $(1+\eps)$-approximation to the discrete weak \Frechet distance between~$P$ and~$Q$.
\end{theorem}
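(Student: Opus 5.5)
The proof has two parts: a running-time bound, and a correctness argument comparing the bottleneck value in the dual graph with the true weak discrete \Frechet distance.

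\emph{Running time.} By \thmref{number_pairs} the decomposition $\Decomp$ has $\OO(cn/\eps)$ rectangles and is computed in $\OO(cn/\eps)$ time. By \corref{neighbors} and \lemref{number_edges}, the dual graph (with all adjacencies) has $\OO(cn/\eps)$ edges and is built in the same time. Each edge weight takes $O(1)$ time, since the representatives $\repX{\vp},\repX{\vq}$ are stored in the nodes. Bottleneck shortest path in an undirected graph runs in linear time by the prune-and-search algorithm of \cite{kp-bspp-06}. (Alternatively, a near-linear method such as computing a minimum spanning tree would also do if one only cared about the approximation.) This gives $\OO(cn/\eps)$ total.

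\emph{Correctness.} For a discrete rectangle $\rect=\rectY{\vp}{\vq}$ write $w(\rect)=\dY{\repX{\vp}}{\repX{\vq}}$. By \lemref{wspd}(ii), every grid point $(x_i,y_j)\in\rect$ satisfies
\[
(1-\eps)\dY{p_i}{q_j}\leq w(\rect)\leq(1+\eps)\dY{p_i}{q_j}.
\]
Let $\beta$ be the bottleneck value returned by the algorithm, and $d=\dFdY{P}{Q}$ in the weak sense.

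For the upper bound on $\beta$, take an optimal weak discrete alignment $(i_1,j_1),\dots,(i_k,j_k)$ and map each grid point to the unique rectangle containing it (\lemref{wspd}(i)). Consecutive grid points differ by at most one in each index. The key step is that two such points either lie in the same rectangle or in neighboring rectangles. To verify this, take the extension rectangles, which are half-open products of $\IOCX{x_{a-1},x_b}$-type intervals. The extensions tile $\cPQ$, so the closure of the extension of the rectangle containing $(x_i,y_j)$ contains the corner point $(x_i,y_j)$ itself. The two grid points $(x_i,y_j)$ and $(x_{i'},y_{j'})$ are corners of a common grid cell $[x_{\min},x_{\max}]\times[y_{\min},y_{\max}]$. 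That cell lies in the extension of the rectangle containing the top-right corner. Hence the closures of the two rectangles intersect, so they are equal or neighbors by definition. The mapped sequence is therefore a walk in the dual graph from the rectangle of $(1,1)$ to that of $(n,m)$, with every weight at most $(1+\eps)d$. So $\beta\leq(1+\eps)d$.

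For the reverse bound, take a bottleneck path $\rect_1,\dots,\rect_t$ of value $\beta$. Convert it back into a grid alignment. Within a single discrete rectangle (a product of two index intervals), any two grid points are joined by a king-move path staying inside it, with every elevation at most $w(\rect)/(1-\eps)$. Between neighbors $\rect_s,\rect_{s+1}$, the closures of their extensions meet, which forces a pair of grid points, one in each, at $\ell_\infty$ index distance at most one. Justifying this from the half-open extension structure is the main obstacle. I would argue it from the shared edge or corner: the last index of one interval and the first index of the adjacent one are consecutive. Concatenating gives an alignment of pass at most $\beta/(1-\eps)$. Thus $d\leq\beta/(1-\eps)$.

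Outputting $\beta/(1-\eps)$ then yields a value in $[d,\frac{1+\eps}{1-\eps}d]$. Running with $\eps/4$ in place of $\eps$ gives a $(1+\eps)$-approximation without changing the asymptotic running time.
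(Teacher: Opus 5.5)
Your algorithm and running-time analysis follow the paper's route: the same dual graph, the same edge weights, the linear-time bottleneck path, and the size bounds from \thmref{number_pairs}, \corref{neighbors} and \lemref{number_edges}. The paper leaves correctness implicit in \remref{wspd_like}. Your two-sided comparison, with the final rescaling to $\eps/4$, is a correct way to make it explicit. The reverse direction (bottleneck path $\Rightarrow$ alignment) is fine. The closures $[x_{a-1},x_b]\times[y_{a'-1},y_{b'}]$ and $[x_{c-1},x_d]\times[y_{c'-1},y_{d'}]$ meet exactly when the index ranges $\{a,\dots,b\}$ and $\{c,\dots,d\}$ overlap or are consecutive, and likewise in $y$.

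One step of the forward direction fails as written, though the repair is easy. You argue that consecutive alignment points lie in equal or neighboring rectangles because the closed grid cell they span lies in the closure of the rectangle containing its top-right corner. That only helps when one of the two points \emph{is} that corner, i.e., when one point dominates the other coordinatewise. The weak variant also permits anti-diagonal steps such as $(i,j)\to(i+1,j-1)$. There the top-right corner $(x_{i+1},y_j)$ is not on the alignment, so your argument only shows that both rectangles touch a third rectangle. That third rectangle's weight is not bounded by $(1+\eps)d$.

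The fix is to use the bottom-left corner $(x_{\min(i,i')},y_{\min(j,j')})$ instead. Suppose $i$ lies in a rectangle's index range $\{a,\dots,b\}$. Since $|i-i'|\le 1$, we get $a-1\le i-1\le \min(i,i')\le i\le b$, so $x_{\min(i,i')}\in[x_{a-1},x_b]$. The same holds for $i'$ in the other rectangle's range, and for the $y$-coordinates. Hence both closures contain this corner.

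For an anti-diagonal step, the two rectangles may touch only at this single top-left/bottom-right corner. This is exactly the kind of adjacency discarded in \remref{direct} for the strong case. It is kept in the weak dual graph because neighbors are defined by intersecting closures, so $\beta\le(1+\eps)d$ then goes through.
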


\subsection{Adding continuity in the weak setting}%
\seclab{weak-continuous}%

In the continuous case, the rectangles in $\cDecomp$ induce a decomposition of the domain $\cPQ$, see \secref{continuous}. On each rectangle, the simplified elevation function is a $(1+\eps)$-approximation to the true elevation function. Note that by design, the simplified elevation function is also convex within each rectangle. The task is to compute a path through $\cDecomp$ that has minimum pass, see \Eqref{pass}. We use the same dual graph construction as in \secref{weak-discrete}.  Two vertices $s, t$ are added to this graph, where $s$ (resp. $t$) is located at $(0,0)$ (resp. $(\lenX{P},\lenX{Q})$). The vertices $s$ and $t$ are each connected to the respective vertex that corresponds to the (unique) rectangle containing them. The weight of these two new edges is the elevation of the corner the respective vertices correspond to.

The edge weights of the other edges are defined as follows. Consider the boundary between two neighboring rectangles $\rect$ and $\rect'$, with their common boundary being a segment $s = \closureX{\rect} \cap \closureX{\rect'}$. This segment could be horizontal, vertical, or a single point.  Then, the weight of the corresponding edge $e$ in the graph between $\rect$ and $\rect'$ is
\[
    w(e)=\inf_{(x,y)\in s}\elevS(x,y).
\]
Observe that one can compute the weight in constant time per edge. As in the discrete case, computing the weak continuous \Frechet distance can be done by computing the bottleneck shortest path in this graph between $s$ and $t$.

\begin{theorem}
    \thmlab{w_cont_frechet}%
    Given two polygonal curves $\cP$ and $\cQ$ that are $c$-packed, and a parameter $\eps \in (0,1)$, one can compute a $(1+\eps)$-approximation to the continuous weak \Frechet distance between~$\cP$ and~$\cQ$, in $\OO( \tfrac{c n}{\eps})$ time, where $n$ is the total number of vertices of $P$ and $Q$.
\end{theorem}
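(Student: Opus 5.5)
We prove \thmref{w_cont_frechet} in three parts: (a) the weighted dual graph can be built in $\OO(cn/\eps)$ time; (b) the bottleneck value between $s$ and $t$ equals the minimum pass, over all (not necessarily monotone) alignments, of the simplified elevation $\elevS$ assembled rectangle-wise; (c) this value is within a factor $1\pm\eps$ of the weak \Frechet distance.

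\textbf{Construction and running time.} By \thmref{number_pairs} and \corref{neighbors}, the decomposition $\cDecomp$ and all neighbor lists are computed in $\OO(cn/\eps)$ time. By \lemref{number_edges}, the dual graph has $\OO(cn/\eps)$ edges. By \lemref{cont_fs_approx} and the remark following it, each $\elevS$ has the form $\normX{xu+yv+w}$. Restricted to a shared boundary segment $s$ (horizontal, vertical, or a point), it becomes the distance from a point moving linearly along a line to a fixed point. Its infimum over $s$ is therefore a closed-form minimization of a convex function over an interval, computable in $O(1)$ time. A slight ambiguity arises because the two neighboring rectangles have different simplified functions, and a crossing path lies in both closures. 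We therefore define the edge weight as $\max$ of the two infima, or better, as the infimum over $s$ of $\max(\elevS_\rect,\elevS_{\rect'})$. For functions of this form the latter is still an $O(1)$ computation (the two functions are convex, so their maximum is convex in one variable). The bottleneck path is then computed in linear time \cite{kp-bspp-06}.

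\textbf{Correctness.} Let $\tilde\gamma$ be the piecewise function on $\cPQ$ that equals $\elevS_\rect$ on each $\rect\in\cDecomp$. At boundary points, take the maximum over the closures containing the point. First, any alignment $\alignment$ passes through a sequence of rectangles in which consecutive ones are neighbors (using \obsref{laminar}, crossings occur through shared sides or corners). Each crossing point lies on the corresponding $s$, so every edge weight on the induced dual path is at most $\max_{\alignment}\tilde\gamma$. The edges from $s$ and $t$ are bounded by the corner elevations. Conversely, given a dual path, we realize it geometrically. Inside each rectangle, connect the entry point to the exit point, each chosen as a minimizer on its boundary segment, by a path staying within the sublevel set of $\elevS_\rect$ at the maximum of the two endpoint values. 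This works because $\elevS_\rect$ is convex on the convex set $\rect$, so the straight segment between the two points never exceeds the larger endpoint value. Hence the bottleneck value equals $\min_\alignment \max_{\alignment}\tilde\gamma$. Finally, \lemref{cont_fs_approx} gives $(1-\eps)\elev\le\tilde\gamma\le(1+\eps)\elev$ pointwise, including at boundary points, since there the inequality holds for each rectangle's function. Taking min-max over the same set of alignments preserves these bounds. So the computed value lies in $[(1-\eps)d,(1+\eps)d]$ for the weak distance $d$, and rescaling $\eps$ by a constant yields a $(1+\eps)$-approximation.

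\textbf{Main obstacle.} The delicate step is the realization argument: the infima on different sides of a rectangle must be joinable inside it without exceeding their maximum. Convexity of $\elevS$ on each rectangle handles this. Two further points need care. Infima might not be attained on half-open boundaries; taking closures resolves this, since each $\elevS$ is continuous. Corner-adjacent rectangles also need consistent handling, and including corner edges in the dual graph covers that case.
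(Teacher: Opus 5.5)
Your proof has the same skeleton as the paper's: the dual graph of $\cDecomp$ with $s,t$ attached, a linear-time bottleneck path, the dual path induced by an optimal alignment for the upper bound, and portals joined by straight segments (using convexity of each $\elevS_{\rect}$) for the lower bound. The genuine difference is the edge weight. The paper takes $w(e)=\inf_{s}\elevS$ for the single piecewise function $\elevS$, so on a shared side only one neighbour's function is seen. It absorbs the mismatch with the other neighbour's function at a portal into a factor $\frac{1+\eps'}{1-\eps'}$; this is its remark that $\elevS$ is only approximately convex on closures, and it is why it runs with $\eps'=\eps/3$. Your choice $w(e)=\inf_{s}\max(\elevS_{\rect},\elevS_{\rect'})$ costs a slightly fancier $O(1)$ computation: check the endpoints, the two individual minimizers, and the at most two roots of a quadratic. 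In return, a single portal bounds both neighbours at once, so each straight piece inside $\closureX{\rect}$ satisfies $\elevS_{\rect}\le B$ exactly and only a $1\pm\eps$ loss remains. This needs your second definition. With the maximum of the two infima there is in general no common minimizer, and you are back to the paper's extra factor.

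One intermediate claim is overstated. Under your max-over-closures convention, the bottleneck value $B$ need not equal $\inf_{\alignment}\max_{\alignment}\tilde\gamma$. Take a diagonal edge through a corner where four rectangles meet. Any alignment realizing that step passes through or near the corner. There it picks up a third rectangle's function, whose value can exceed $w(e)$. So in general only $B\le\inf_{\alignment}\max_{\alignment}\tilde\gamma$ holds. Nothing breaks, because you never need the reverse inequality for $\tilde\gamma$. On each piece of the realized alignment inside $\closureX{\rect}$ you have $\elevS_{\rect}\le B$. Hence $\elev\le B/(1-\eps)$ there, by \lemref{cont_fs_approx} extended to the closure by continuity. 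This gives $(1-\eps)d\le B\le(1+\eps)d$ directly.
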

\begin{proof}
    First, run the above algorithm with $\eps'=\eps/3$.  The running time is bounded by $\OO( \tfrac{c n}{\eps})$ by the same argument as in the discrete case, as the weight is computed in constant time per edge.  As for correctness, consider an optimal alignment $\alignment^*$ through the domain $\cPQ$, where the optimality is with respect to the elevation function $\elev$.  Let $S_{\alignment^*}$ be the $s$-$t$ path in the graph by tracking (in order) the rectangles $\alignment^*$ visits (this path might contain cycles, which are shortcut to keep it simple).  For an edge $e \in S_{\alignment^*}$, the alignment $\alignment^*$ encounters at least height $w(e)$ when traversing the corresponding (dual) boundary edge between the two rectangles (using the simplified elevation function). Also, let $\aprxPath$ denote a (bottleneck) shortest $s$-$t$-path in the graph (as a reminder, the weights of edges are computed using the approximate elevation function $\elevS$).

    Denote by $\passY{\Path}{w}=\max_{e\in \Path}w(e)$ the bottleneck weight of a path $\Path$ in the graph, and let $\passY{\alignment}{f} = \max_{ p \in \pi} f(p)$ denote the maximum value of an elevation function $f$ on an alignment~$\alignment$ (i.e., its the pass of $\alignment$ under $f$). With the observation from above, we have
    \[
        \passY{\aprxPath}{w}%
        \leq%
        \passY{S_{\alignment^*}}{w} \leq \passY{\alignment^*}{\elevS} \leq (1+\eps')\passY{\alignment^*}{\elev}\leq(1+\eps)\passY{\alignment^*}{\elev}.
    \]

    For each edge $e$ in the graph, the \emph{portal}, on the corresponding boundary segment $s_e$ between the two corresponding rectangles in $\cPQ$, is a point on which $w(e)$ is attained, i.e.\ $\elevS(x,y)=w(e)$ (i.e., it is a point realizing the minimum of $\elevS$ on $\closureX{s_e}$). The portal for edges adjacent to the start $s$ is the point $(0,0)$, and the portal for edges adjacent to the target $t$ is the upper-right point of $\cPQ$.  A minor technicality is that there might not be a point $(x,y)\in s$ with $\elevS(x,y)=w(e)$, as this point might be realized by the endpoint of the segment that is outside this (partially) open set. For the sake of clarity of exposition, we ignore this technicality, as the argument can easily be extended to handle this.

    With the portals, the path $\aprxPath$ can be reinterpreted as an alignment $\alignment_{\aprxPath}$ by connecting consecutive portals on the path with straight segments. We have
    \[(1-\eps)\passY{\alignment^*}{\elev}%
        \leq%
        (1-3\eps')\passY{\alignment_{\aprxPath}^{}}{\elev}%
        \leq%
        \frac{(1-\eps')^2}{(1+\eps')}\passY{\alignment^{}_{\aprxPath}}{\elev} \leq%
        \frac{(1-\eps')}{(1+\eps')} \passY{\alignment^{}_{\aprxPath}}{\elevS}%
        \leq%
        \passY{\aprxPath}{w}.\]

    Indeed, the simplified elevation function $\elevS$ is a $(1+\eps')$-approximation of $\elev$, which is a continuous function. Moreover, it is convex within each rectangle and (by virtue of $\gamma$ being continuous) is approximately convex within the closure of each rectangle.
\end{proof}

\section{Approximating the strong variants of the Fréchet distance}
\seclab{strong}

Next, we describe the algorithms for the strong variants of the \Frechet distance, starting with the discrete setting. To keep the presentation concise and consistent, the text relies on the continuous decomposition $\cDecomp=\Set{\rectY{\vp}{\vq}}{ (\vp,\vq) \text{ leaf of }\Tcont }$, which provides a disjoint cover of the domain $\cPQ$ (see \secref{decomposing-free-space}). While this description targets the continuous setting, the simplified elevation functions of these rectangles are constant in the discrete case (i.e., when restricted to the relevant grid points in the rectangle). To unify the presentation, we canonically extend the constant simplified elevation function of any discrete rectangle $\rect\in\Decomp$ to the support $\suppX{\rect}$ of the rectangle, i.e.\ the corresponding continuous rectangle in $\cDecomp$. %

\paragraph{Notation.}
We introduce some more notation to unify the description of the discrete and continuous variants.  For $x,x'\in \cXP = [0, \lenX{P}]$, and the interval $I = [x,x']$, let $\XP[x,x'] = (x_i,\ldots,x_j)$ denote the subsequence of $\XP$ that lies in $I$. The corresponding subsequence of $P$ is $P[x,x'] = (p_i,\ldots,p_j)$.  The continuous counterpart, the subcurve of $\cP$ from $P(x)$ to $P(x')$ is denoted by $\cP[x,x']{\Bigr.}$. Observe that the endpoints of $P[x,x']$ might be in the interior of $[x,x']$, if $x$ or $x'$ are not in $\XP$.  In the following, $\EitherX{\cdot}$ is used to denote an entity that can be either discrete or continuous. For example, $\EP$ is either $P$ or~$\cP$, depending on the context.

\paragraph{Cost function.}
Given two point sequences~$P$ and~$Q$. The task at hand is to approximate, for all $x \in \EXP, y\in \EYQ$, the cost function
\begin{equation*}
    \psi(x,y)
    =%
    \dFY{\EP[0,x]}{\EQ[0,y]}.
\end{equation*}
It turns out that it is enough to approximate this function on the boundaries of the rectangles of the decomposition $\cDecomp$, as was done in the weak case.

\subsection{The main algorithm}
\seclab{mainAlgorithm}

\subsubsection{Algorithm overview}
The idea is to process the rectangles one by one in a topologically sorted order and, in each step, compute the cost function on the boundary of the current rectangle. Initially, the algorithm approximates the cost function on the bottom and left sides of $\cPQ$.  Let $\rect_1, \ldots, \rect_t$ be a topological ordering of the rectangles of $\Decomp$.  We maintain the cost function along the \emph{propagation front} that separates the processed rectangles from the unprocessed rectangles, see \figref{sweep-boundary}.

\begin{figure}
    \centering%
    \includegraphics[page=1]{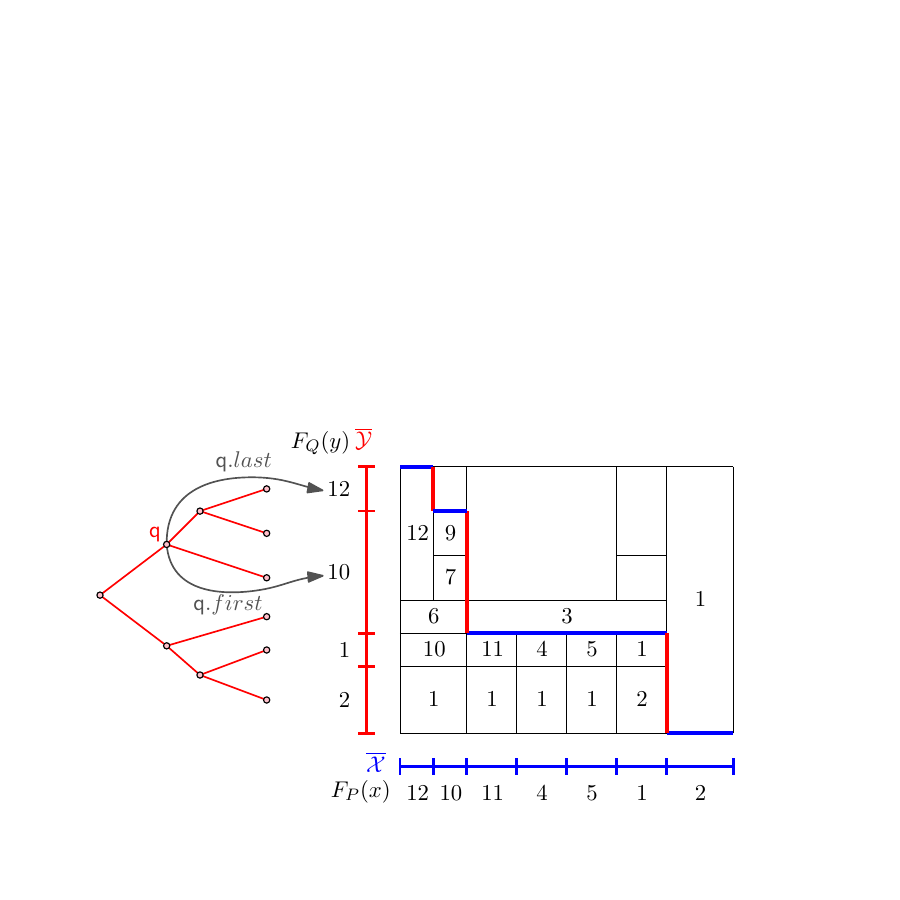}
    \caption{The figure shows the propagation front within the domain of the elevation function. On the axes of~$\cPQ=\cXP\times\cYQ$, the corresponding functions~$F_P$ and~$F_Q$ are shown on the intervals~$\cXP$ and~$\cYQ$. On the left, the figure shows (a subtree of) the tree $\Tq$ and an example of the pointers $\last{\vq}$ and $\first{\vq}$ pointing to the respective part of the function $F_Q$.}
    \figlab{sweep-boundary}
\end{figure}

In the beginning of $i$\th iteration, the algorithm has access to some representation of the computed cost function on the bottom and left edges of $\rect_i$, and the (constant complexity) approximate elevation function in $\rect_i$. Using the two, carefully, as they are not of constant size, the algorithm computes the cost function on the top and right edges of $\rect_i$. The algorithm continues in this process till all the rectangles are handled.  Throughout this process, the algorithm stores the cost function along the propagation front as a piecewise function using a doubly linked list. For technical reasons, this is done by storing the function on the horizontal pieces and the vertical pieces separately, resulting in two lists.  In particular, the trees~$\Tp$ and~$\Tq$ from \secref{decomposing-free-space} are used to store pointers that will be used to navigate from a rectangle boundary to the relevant part of the two lists. We will update these pointers in a lazy fashion.

\subsubsection{A central observation}

The following observation states a basic property that we use to update the propagation front when processing a rectangle.

\begin{observation}%
    \obslab{concat}%
    For two point sequences~$P$ and~$Q$, let $x_0, x\in \EXP$, and $y_0,y\in \EYQ$, with $x_0\leq x$ and $y_0\leq y$. Then, for the strong (discrete or continuous) \Frechet distance, it holds that
    \begin{align*}
        \dFY{\EP[0,x]}{\EQ[0,y]}
        =%
        \min_{(x',y')\in  U\cup W}
        \max\Bigl(%
        \dFY{\bigl.\smash{\EP[0,x']}}{\smash{\EQ[0,y']}}, \,
        \dFY{\bigl.\smash{\EP[x',x]}}{\smash{\EQ[y',y]}}
        \Bigr),
    \end{align*}
    where $U=[x_0,x]\times y_0$ and $W=x_0\times[y_0,y]$, see \figref{concat}.
\end{observation}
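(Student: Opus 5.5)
We prove the identity by showing the two inequalities separately, working in the parametric space $\cPQ$ (or the grid $\PQ$ in the discrete case). The key fact is that the set $L = U\cup W$ is the lower-left boundary of the rectangle $R=[x_0,x]\times[y_0,y]$. Every monotone alignment from $(0,0)$ to $(x,y)$ must cross this boundary. Conversely, a monotone alignment to a point of $L$ can be glued to a monotone alignment from that point to $(x,y)$.

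\textbf{The easy direction ($\leq$).} Fix $(x',y')\in U\cup W$. Take an optimal monotone alignment $\alignment_1$ for $\EP[0,x']$ versus $\EQ[0,y']$. Take an optimal monotone alignment $\alignment_2$ for $\EP[x',x]$ versus $\EQ[y',y]$; this exists because $x'\leq x$ and $y'\leq y$. Concatenate them at $(x',y')$. The result is a monotone alignment for $\EP[0,x]$ versus $\EQ[0,y]$, and its pass is the maximum of the two passes. Minimizing over $(x',y')$ gives the inequality.

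In the discrete case, the concatenation shares the pair $(x',y')$, so every step still has $\ell_\infty$-length $1$ and is monotone.

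\textbf{The main direction ($\geq$).} Take an optimal monotone alignment $\alignment$ for $\EP[0,x]$ versus $\EQ[0,y]$. It starts at $(0,0)$ and ends at $(x,y)\in R$.

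If $(0,0)\in L$, which happens when $x_0=0$ or $y_0=0$, choose $(x',y')=(0,0)$. Otherwise, let $(x',y')$ be the first point of $\alignment$ lying in the closed rectangle $R$. In the continuous case this point exists because $\alignment$ is a compact connected curve and $R$ is closed.

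By monotonicity, the portion of $\alignment$ before $(x',y')$ stays outside $R$ but remains in $[0,x]\times[0,y]$. Continuity then forces $x'=x_0$ or $y'=y_0$, so $(x',y')\in L$.

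Now split $\alignment$ at $(x',y')$. The prefix is a monotone alignment of $\EP[0,x']$ with $\EQ[0,y']$. The suffix is a monotone alignment of $\EP[x',x]$ with $\EQ[y',y]$. Each has pass at most $\passX{\alignment}$, which gives the inequality.

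\textbf{The expected obstacle: the discrete case.} Here $\alignment$ moves by unit steps in index space, possibly diagonally. The first grid pair with $x_i\geq x_0$ and $y_j\geq y_0$ might then jump over $L$, with both coordinates strictly larger.

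To handle this, I would read the discrete statement through the continuous parameters: $U$ and $W$ range over points of $\EXP,\EYQ$ on the boundary lines, and $x_0,y_0$ are grid coordinates. Let $(x_i,y_j)$ be the last pair of $\alignment$ before it enters $R$ and $(x_{i'},y_{j'})$ the first pair inside $R$. Since $\alignment$ is monotone, a diagonal step gives $x_i<x_0\leq x_{i'}$ with $i'=i+1$. Hence $x_{i'}=x_0$ exactly whenever $x_0\in\XP$, and the same holds for $y_0$ when $y_0\in\YQ$. So the first pair inside $R$ lies on $L$ after all.

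This step requires $x_0,y_0$ to be grid values, which is the case in the paper's usage, since rectangle corners in $\Decomp$ come from $\XP\times\YQ$. I would state this assumption explicitly. The rest is routine.
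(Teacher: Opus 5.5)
Your argument is correct, and it is precisely the ``every monotone alignment must cross the lower-left boundary of $[x_0,x]\times[y_0,y]$, and alignments can be concatenated'' reasoning that the paper leaves implicit (it states the observation without proof, justified only by the figure). The grid-value assumption you flag for the discrete case is already part of the hypothesis, since there $x_0\in\XP$ and $y_0\in\YQ$. One small slip: $(0,0)\in U\cup W$ holds only when $x_0=0$ \emph{and} $y_0=0$, not ``or''; your case split is governed by the condition $(0,0)\in U\cup W$ itself, so the remaining cases (e.g.\ $x_0=0<y_0$) still fall into your first-entry argument and are handled correctly.
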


In particular, when considering how to process rectangle $\rect_i$, we observe the following: (1) the constant elevation of $\rect_i$ is a lower bound for the cost function on the upper and right boundaries of $\rect_i$; (2) beyond this lower bound, the minimum of the cost function along the left boundary is an upper bound on the cost function along the top boundary (a similar observation holds for the bottom and right boundary); (3) between these two bounds, the cost function along the top boundary is decreasing from left to right (and similarly along the right boundary, from bottom to top), since the path can cross inside $\rect_i$ to avoid the cost from the bottom (resp., left) boundary; refer to \figref{ex_steps_1_3} for an example of the cost function before and after processing a rectangle. The algorithm for updating the cost function is described in detail further below.

\begin{figure}
    \centering%
    \includegraphics{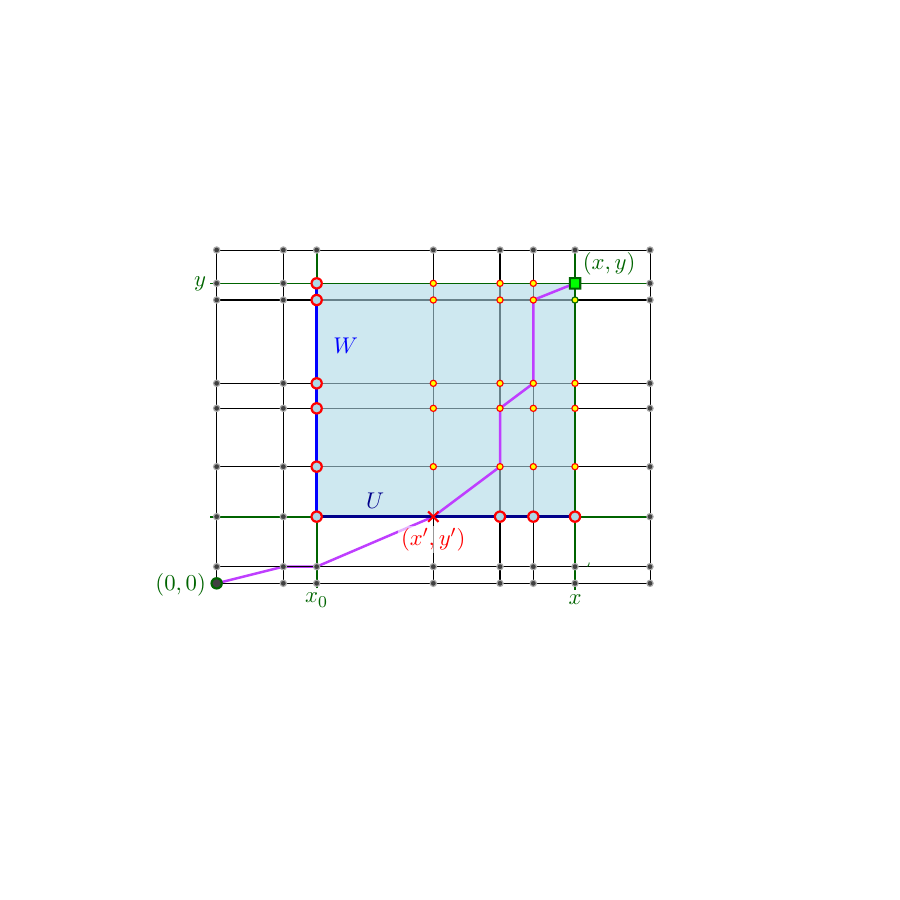}
    \caption{Illustration of \obsref{concat} in the discrete case.}
    \figlab{concat}
\end{figure}

\subsubsection{The algorithm in detail}

\paragraph{Data structure.} We use two lists, where the list $L_P$ stores the cost function of the horizontal pieces of the propagation front and $L_Q$ stores the cost function of the vertical pieces of the propagation front.  An element $a$ of~$L_P$ (resp.\ $L_Q$) stores a simplified elevation function~$f_a: I_a\rightarrow \Re_{\geq 0}$ defined on an interval~$\IVX{a}\subseteq\cXP$ (resp.\ $\IVX{a}\subseteq\cYQ$) The intervals~$I_a$ with~$a\in L_P$ (resp.~$a\in L_Q$) are disjoint, and their union is~$\cXP$ (resp.~$\cYQ$).

For every node~$\vp$ in the tree~$\Tp$ (resp.~$\Tq$), we store two pointers $\first{\vp}$ and $\last{\vp}$ that either point to an element in the list~$L_P$ (resp.~$L_Q$) or to an element that is no longer part of the list~$L_P$ (resp.~$L_Q$).  If the pointer~$\first{\vp}$ (resp. $\last{\vp}$) points to an element~$a$ in one of the lists, then the start (resp.\ end) of the interval~$\InodeX{\vp}$ corresponding to~$\vp$ lies in $\IVX{a}$.  These pointers are used to locate the start/end of the interval corresponding to~$\vp$ in the list.

We denote the cost function stored for the horizontal front as~$F_P:\cXP\rightarrow\Re_{\geq 0}$ where $F_P(x)=f_a(x)$ with $x\in \IVX{a}$.  Analogously, we denote $F_Q:\cYQ\rightarrow\Re_{\geq 0}$ for the vertical front. In addition, for every corner point~$(x,y)$ of any closure of a rectangle that shares a boundary with the propagation front, we store in $\distY{x}{y}$ a $(1+\eps)$-approximation to~$\dFdY{\EitherX{P}[0, x]}{\EitherX{Q}[0,y]}$. Observe that later in the construction it holds that~$\cP(x)$ and~$\cQ(y)$ are vertices in both the discrete and the continuous setting.

\paragraph*{Initialization.}

We initialize the cost function on the bottom and left edges of the domain $\cPQ$ as follows. We initialize~$L_P$ such that $F_P(x)=\max\{\esX{x', 0}\mid x'\in [0, x]\}$, where $\esX{\cdot, \cdot}$ is the simplified elevation function computed in \secref{decomposing-free-space}. In the discrete case, $F_P(x)$ is a piecewise constant function, and for each constant piece we add a list element to~$L_P$. In the continuous case, $F_P(x)$ might be decomposed into constant pieces and pieces of the form~$\|\cP(x)-u\|$. Then, we add one list element for each of these pieces.

\paragraph*{Processing a rectangle.}  We process a rectangle $\rectY{\vp}{\vq}=\IOCX{x_0, x_1} \times \IOCX{y_0, y_1}$ in three steps. In each step, we execute the algorithm first for~$\EP$ and then for~$\EQ$.
Step~1 and Step~2 are always executed. For~$\EP$ (resp.~$\EQ$), Step 3 is only executed if the number of upper (resp.\ right) neighboring rectangles of~$\rectY{\vp}{\vq}$ is greater than~$1$.  In the following, we describe the steps for~$\EP$. The steps for~$\EQ$ are symmetric. See \figref{ex_steps_1_3} for an example.

\subparagraph{Step 1.}  We start by deleting elements from the list $L_P$ such that the function~$F_P$ restricted to the interval~$\IOCX{x_0, x_1}$ is descending.  Let $\vp_1, \vp_2, \dots, \vp_k$ be the nodes in $\Tp$ corresponding to the lower neighboring rectangles of $\rectY{\vp}{\vq}$ in order.  We delete elements from the list such that afterwards~$F_P$ restricted to~$\IOCX{x_0, x_1}$ realizes the following function $f_1(x)=\min\{F_P(x')\mid x_0\leq x'\leq x\}$.  This can be done efficiently by using the pointers of~$\vp_i$, since in the discrete case $F_P$ restricted to $\InodeX{\vp_i}$ is descending, and in the continuous case, it is a valley function, which we will handle in more detail in \secref{strong-continuous}.  Further, we update the pointers of $\vp_1, \dots, \vp_k$ to the new first and last entries of the list that correspond to their interval.  Then, if $k> 1$, we consolidate the pointers from the nodes $\vp_1, \dots, \vp_k$ of tree~$\Tp$ upwards to the node~$\vp$ along the paths towards the root. More specifically, we use the following update rule. Let $w$ be a node visited in this process, and let its left child be $w_1$, and its right child be $w_2$. We set $\first{w}$ to $\first{w_1}$ and $\last{w}$ to~$\last{w_2}$.

\subparagraph*{Step 2.}
Define~$d$ as the minimum of the value of~$\last{\vq}$ (before Step~2) and $\distY{x_0}{y_0}$.  We modify the list such that~$F_P$ restricted to~$\IOCX{x_0, x_1}$ is~$f_2(x)=\min\{d, f_1(x)\}$. Note that we only have to cut off elements from the beginning of the list, starting at $\first{\vp}$ and possibly adding one new element.  Then, update the pointers of $\vp$ accordingly.  Next, we modify the list again such that~$F_P$ restricted to~$\IOCX{x_0, x_1}$ is~$f_3(x)=\max\{f_2(x), \esX{x, y_1}\}$. Note that if $\esX{x, y_1}$ restricted to~$\rectY{\vp}{\vq}$ is constant, we only have to cut off elements from the end of the list ending at $\last{\vp}$ and possibly add one new element. The case that $\esX{x,y_1}$ is non-constant is handled in \secref{strong-continuous}. For every~$x$, such that $(x, y_1)\in \rectY{\vp}{\vq}$ is a corner point of a rectangle in~${\cDecomp}$, we set~$\distY{x}{y_1}=f_3(x)$.

\subparagraph*{Step 3.}  The third step pushes the pointers down along the paths from the node $\vp$ to the nodes $\vp'_1, \dots, \vp'_{k'}$ of~$\Tp$ corresponding to the rectangles on the top boundary of~$\rectY{\vp}{\vq}$. Here, we update pointers to elements deleted from the list~$L_P$ as follows. Let~$v$ be a node visited in this process and let $v'$ be its parent. If the element~$\first{v}$ (resp.~$\last{v}$) is deleted of~$L_P$, we set $\first{v}$ (resp.~$\last{v}$) to the element of~$L_P$ whose interval contains the start (resp.\ end) of the interval corresponding to~$v$. By construction, this is either~$\first{v'}$ or~$\last{v'}$.

\begin{figure}
    \phantom{}\hfill%
    \includegraphics[page=2]{figs/sweep_boundary}%
    \hfill%
    \includegraphics[page=3]{figs/sweep_boundary} \hfill\phantom{}%
    \caption{This is the same example as in Figure~\figref{sweep-boundary} after the rectangle with elevation $3$ has been processed. On the right, the blue function shows $F_P$. The blue dashed pieces are removed when processing the rectangle with elevation $3$. The orange change is due to Step $1$, the red changes are due to Step $2$.}
    \figlab{ex_steps_1_3}
\end{figure}

\subsection{Analysis for the discrete Fréchet distance}

\begin{observation}
    \obslab{pointers_to_list} During the algorithm, the following remains true.
    \begin{compactenumi}
        \item The intervals~$\IVX{a}$ of the list elements~$a$ of $L_P$ (resp. $L_Q$) are disjoint.
        \item For every node~$v\in \Tp$ (resp. $v\in \Tq$), the start of the interval~$\InodeX{v}$ lies in $\IVX{\first{v}}$ and the end of the interval~$\InodeX{v}$ lies in $\IVX{\last{v}}$.
        \item Let $\rectY{\vp}{\vq}$ and $\rectY{\vp'}{\vq'}$ be neighboring rectangles with~$\InodeX{\vp}\subseteq \InodeX{\vp'}$ (resp.~$\InodeX{\vq}\subseteq \InodeX{\vq'}$). If their common boundary
        is part of the propagation front, then the pointers~$\first{\vp}$ and~$\last{\vp}$ (resp. $\first{\vq}$ and $\last{\vq}$) are contained in the list~$L_P$ (resp.~$L_Q$).
    \end{compactenumi}
\end{observation}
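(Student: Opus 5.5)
The plan is to prove the three invariants simultaneously by induction over the iterations of the algorithm, i.e., over the rectangles $\rect_1,\ldots,\rect_t$ processed in topological order. The base case is after initialization. There, $L_P$ and $L_Q$ are built as partitions of $\cXP$ and $\cYQ$ into disjoint pieces, which gives (i). We set the pointers of every node of $\Tp$ and $\Tq$ by one scan of the list together with a traversal of the tree (for each node, the elements containing the start and end of $\InodeX{v}$), which gives (ii). Invariant (iii) then follows from (ii), since initially no element has been deleted. For the inductive step, fix the rectangle $\rectY{\vp}{\vq}$ being processed. We check that Steps~1--3, and in particular the pointer updates, restore the invariants for every node whose rectangle now borders the propagation front.

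For (i), every list operation in Steps~1 and~2 is one of three kinds: deleting a contiguous run of elements, truncating the interval of a boundary element, or inserting one new element whose interval is exactly the gap created. So after each operation the intervals still partition $\cXP$ (resp.\ $\cYQ$), and disjointness is preserved. This part is routine.

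For (ii), the subtlety is that pointers may point to deleted elements. I would therefore prove (ii) in its intended sense: the invariant holds for every node whose pointers lie in the current list, while a pointer to a deleted element still records a stale interval containing the endpoint. The real content is then (iii), together with (ii) for the nodes that matter.

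\begin{compactitem}
\item \emph{Step~1.} Before processing, the lower neighbors $\vp_1,\ldots,\vp_k$ share a boundary with the front, so by induction their pointers are valid. Step~1 resets them to the new first and last elements covering their intervals, so they remain valid. The upward consolidation then gives valid pointers to every node on the paths up to $\vp$. This holds because the $\InodeX{\vp_i}$ partition $\InodeX{\vp}$ in order, so the start of $\InodeX{w}$ is the start of $\InodeX{w_1}$ and its end is the end of $\InodeX{w_2}$.
\item \emph{Step~2.} Step~2 only modifies elements between $\first{\vp}$ and $\last{\vp}$ and updates these two pointers. So after Step~2, $\vp$ has valid pointers.
\item \emph{Step~3.} Step~3 pushes pointers down to the upper neighbors $\vp'_1,\ldots,\vp'_{k'}$. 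For a visited node $v$ with parent $v'$, the start of $\InodeX{v}$ either equals the start of $\InodeX{v'}$ or lies strictly inside $\InodeX{v'}$. Likewise, the end of $\InodeX{v}$ either equals the end of $\InodeX{v'}$ or lies inside $\InodeX{v'}$. A deleted element's interval is now covered by a single surviving element, namely $\first{v'}$ or $\last{v'}$, which explains the rule "this is either $\first{v'}$ or $\last{v'}$". When $k'=1$ the upper neighbor's node is $\vp$ itself, so Step~3 is not needed.
\end{compactitem}

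It remains to argue that nodes not touched in this iteration keep valid pointers whenever their boundary lies on the front. A node whose rectangle borders the front but is disjoint from $\IOCX{x_0,x_1}$ is unaffected, because all list modifications are confined to elements meeting $\IOCX{x_0,x_1}$.

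The main obstacle is precisely this locality claim. Elements partially overlapping $\IOCX{x_0,x_1}$, such as the element at $\first{\vp}$ that extends to the left, may be truncated or replaced. Such an element might be pointed to by an outside neighbor's $\last{}$ pointer. I would handle this by showing that modifications never delete an element whose interval extends outside $\IOCX{x_0,x_1}$. Such an element is only split, and the outside part keeps the original list element, which preserves the pointer. I would verify this against the descriptions of Steps~1 and~2; this requires the new-element insertion to create the piece inside the rectangle rather than outside it.
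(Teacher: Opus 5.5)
Your overall scheme, induction over the processed rectangles checking that Steps~1--3 restore the invariants, is the natural one; the paper states this observation without proof. However, the two places where the real content lies are not settled, and the fix you propose for the first one contradicts your own relaxed reading of (ii).

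\textbf{Splitting at $x_0,x_1$ breaks (ii).} Suppose an element $a$ straddles $x_0$ and, as you propose, $a$ survives as its part left of $x_0$. If $k>1$, then after Step~1, $\first{\vp_1}$ and the first-pointers of the nodes on the consolidation path from $\vp_1$ up to $\vp$ all point to $a$. The cut in Step~2 resets only $\first{\vp}$. Every other such node now holds a pointer to a \emph{live} element whose interval no longer contains its start, so even your relaxed (ii) fails. Step~3 repairs only pointers to \emph{deleted} elements. So if $\vp_1$ is also the node of an upper neighbour, its wrong pointer survives: (iii) then holds only formally, and the next Steps~1--2 above start cutting at $a$, left of $x_0$.

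The missing idea is that no such split is ever needed. Maintain the extra invariant that every element of $L_P$ lies inside $\suppX{\vp'}$, where $\rectY{\vp'}{\vq'}$ is the rectangle whose top edge carries that element on the front (initially, the bottom-row rectangle whose bottom edge carries it). This holds at initialization only if the prefix maximum is stored with at least one element per bottom-row rectangle. Maximal constant pieces may straddle rectangles, so you must stipulate this. The invariant is preserved because everything deleted, extended or created while processing $\rectY{\vp}{\vq}$ lies in $\IOCX{x_0,x_1}$. In the discrete case the operations are then only deletions, extensions and insertions, so live pointers stay correct. In the continuous case, elements inside $\IOCX{x_0,x_1}$ are genuinely truncated and the same tension reappears, so Step~3 must test containment, not just deletion.

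\textbf{The Step~3 claim is asserted, not proved.} The claim that the element now containing the start of $\InodeX{v}$ is $\first{v'}$ or $\last{v'}$ is the heart of (iii), and it is not a statement about the current iteration. A node visited in Step~3 may be a strict descendant of a lower neighbour $\vp_i$ whose pointers went stale many iterations earlier, since Step~3 is skipped whenever a rectangle has a single upper neighbour. You need an invariant such as: if $\first{v}$ is stale, then the start of $\InodeX{v}$ lies in the same element as the start or the end of $\InodeX{\pX{v}}$, and symmetrically for $\last{v}$. It follows from three facts:
\begin{compactenuma}
\item In the discrete case, element intervals only merge.
\item Every deleted region is a prefix of some $\suppX{\vp_i}$ (Step~1) or a prefix or suffix of $\suppX{\vp}$ (Step~2). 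If such a region contains an endpoint of a strict descendant $v$ of that node, it also contains the start, resp.\ end, of $\InodeX{\pX{v}}$.
\item The only exceptions are strict ancestors of $\vp$ sharing an endpoint with $\vp$. Each such $v$ is recomputed by the Step~1 consolidation of the first later rectangle, in the column through that endpoint, whose node contains $\suppX{v}$; its lower neighbour there has a strictly smaller node. No Step~3 can visit $v$ before that.
\end{compactenuma}
Finally, a smaller slip: if $k=1$ (or $k'=1$), the unique neighbour's node may be a strict ancestor of $\vp$. In that case (iii) guarantees the pointers of $\vp$, not of $\vp_1$, and those are what Steps~1 and~2 must use.
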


\begin{lemma}
    \lemlab{strong_correctness} During the algorithm in the discrete case, the following is true. Let~$(x,y)$ be a point on the propagation front. It holds that~$\dFdY{P[0, x]}{Q[0, y]}\in [(1-4\eps)d, (1+4\eps)d]$, where $d$ is defined as follows:
    \begin{compactenumi}
        \item If~$(x,y)$ is a corner point of a rectangle in~$\Decomp$, then define $d=\distY{x}{y}$.
        \item If~$(x,y)$ lies on the interior of a horizontal piece, then define~$d=F_P(x)$.
        \item Otherwise $(x,y)$ lies on the interior of a vertical piece, then define~$d=F_Q(y)$.
    \end{compactenumi}
\end{lemma}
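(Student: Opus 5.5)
We would prove the statement by induction over the processing order $\rect_1,\ldots,\rect_t$. The invariant is the one in the statement: every point of the current propagation front carries a value ($D$, $F_P$ or $F_Q$) that is within a factor $1\pm4\eps$ of the true discrete cost $\psi(x,y)=\dFdY{P[0,x]}{Q[0,y]}$.

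\textbf{Base case.} The front consists of the bottom and left sides of $\cPQ$. On the bottom side the true cost is $\psi(x,0)=\max_{x'\le x}\elev(x',0)$, since the only monotone alignment to such a point walks along that side. The initialization sets $F_P(x)=\max_{x'\le x}\esX{x',0}$. By \remref{wspd_like} (equivalently \lemref{wspd}(ii)), each $\esX{x',0}$ lies in $[(1-\eps)\elev,(1+\eps)\elev]$ on grid points. A maximum of $(1\pm\eps)$-approximations is again a $(1\pm\eps)$-approximation, so the base case holds with room to spare.

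\textbf{Inductive step.} We process $\rect=\rectY{\vp}{\vq}=\IOCX{x_0,x_1}\times\IOCX{y_0,y_1}$ and verify the new values on the top side; the right side is symmetric. Fix a grid point $x\in\IOCX{x_0,x_1}$ and apply \obsref{concat} with the corner $(x_0,y_0)$. For the entry points we use the induction hypothesis:
\begin{itemize}
\item bottom-side entry points $(x',y_0)$ with $x_0\le x'\le x$ lie on the old front, carrying the value $F_P(x')$;
\item left-side entry points $(x_0,y')$ lie on the old front, carrying $F_Q(y')$ or $D$ at corners.
\end{itemize}
The remaining term $\dFdY{P[x',x]}{Q[y',y_1]}$ is the discrete \Frechet distance of a subgrid lying inside $\rect$ (plus its entry row or column). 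All grid points of $\rect$ have elevation within $1\pm\eps$ of the constant $c_\rect=\dY{\repX{\vp}}{\repX{\vq}}$. Hence this term lies in $[(1-\eps)c_\rect,(1+\eps)c_\rect]$, and the monotone path inside $\rect$ can go from any entry point to any top point to the right of it.

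The cost at $(x,y_1)$ is then, up to a factor $1\pm\eps$ on each argument, the maximum of $c_\rect$ and the minimum of two quantities: the minimum of $F_P$ over $[x_0,x]$, and the minimum over the left side. Steps 1 and 2 compute exactly these objects:
\begin{itemize}
\item $f_1$ is the prefix minimum of $F_P$;
\item $d$ is the minimum over the left side, namely the value at $\last{\vq}$ after the symmetric Step 1 for $Q$, which is the minimum of the descending function, combined with $D(x_0,y_0)$;
\item $f_3=\max\{\min(d,f_1),\esX{x,y_1}\}$.
\end{itemize}
The operations $\min$ and $\max$ preserve multiplicative approximation factors, so the new value approximates $\psi(x,y_1)$.

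\textbf{Error accumulation.} The error must not compound across rectangles. A naive composition would multiply the factors per rectangle, so we avoid stating the bound relative to the previous front. Instead we argue that every stored value equals the exact bottleneck value of some monotone grid path, where each grid point $(x_i,y_j)$ is weighted by $\esX{x_i,y_j}$ (the constant of its rectangle). In other words, the algorithm computes exactly the discrete \Frechet distance under the weight function $\elevS$.

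This is the key claim and the main obstacle. It needs a check that the DAG propagation together with \obsref{concat} is exact for $\elevS$ on grid points, not merely approximate. In particular we must check:
\begin{itemize}
\item diagonal moves across rectangle corners, handled via the $D$ values and the corner edges of \remref{direct};
\item that Step 2 uses $d$ correctly, which requires the left-side function to be descending after Step 1, so that $\last{\vq}$ holds its minimum;
\item that the lazy pointers indeed reference live list elements, which is \obsref{pointers_to_list}.
\end{itemize}

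Once exactness for $\elevS$ is established, the final step is a single comparison. Pointwise $(1-\eps)\elev\le\elevS\le(1+\eps)\elev$, and the bottleneck value of the same optimal path is monotone in the weights. So the true and $\elevS$-distances differ by a factor $1\pm\eps$, giving $\psi\in[(1+\eps)^{-1}d,(1-\eps)^{-1}d]\subseteq[(1-4\eps)d,(1+4\eps)d]$ (using $\eps$ small; for larger $\eps$ rescale). We expect the slack to $4\eps$ to absorb constants in this inversion.
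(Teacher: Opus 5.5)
Your argument is correct, but after setting up the same induction as the paper you finish by a different route.

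\textbf{The paper's route.} It keeps the approximate statement itself as the induction hypothesis. By \obsref{concat}, the true cost at a new front point is the $\max$ of two things: the $\min$ of the old true costs over the entry points, and a quantity within a $1\pm\eps$ factor of the rectangle's constant elevation. The algorithm stores the $\max$ of that constant and the $\min$ of the old stored values. Since $\max$ and $\min$ preserve a common multiplicative factor, the same $1\pm4\eps$ passes through every rectangle unchanged.

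So the compounding you worry about does not occur, once the entry point's own elevation is absorbed via $\elev(e)\le\psi(e)$. Your own remark that $\min$ and $\max$ preserve multiplicative factors already closes the direct induction, and that is essentially the paper's proof.

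\textbf{Your route.} You prove that the algorithm computes \emph{exactly} the discrete bottleneck value for the vertex weights $\elevS$, which equal the constant $\dY{\repX{\vp}}{\repX{\vq}}$ on the grid points of each rectangle. You then compare with $\elev$ only once, at the end. The exactness claim is no harder than the paper's step ``by the algorithm, $F_P'(x)=\ldots$''. It is the same identity with $\elevS$ in place of $\elev$. It also uses the fact that from any entry point $e$ a monotone path can step directly onto grid points of $\rect$. Hence the inner $\elevS$-cost is exactly $\max\{\elevS(e),\dY{\repX{\vp}}{\repX{\vq}}\}$, and $\elevS(e)$ is absorbed by the old front value.

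\textbf{What each buys.} Your route cleanly separates algorithmic correctness from approximation, and it gives a sharper constant. The proof of \lemref{wspd}~(ii) (Eq.~(\ref{equation:yo}) and its analogues) actually gives $(1-\eps/2)\elevS\le\elev\le(1+\eps/2)\elevS$ on grid points. So your final comparison yields $\psi\in[(1-\eps/2)d,(1+\eps/2)d]$ directly. This avoids the inversion of $1\pm\eps$, which only lands inside $1\pm4\eps$ for $\eps\le 3/4$. The paper's form states the invariant directly in terms of true distances. That is the version it reuses, with changed constants, in \lemref{sqrt2_approx} for the continuous case. Your exactness step, by contrast, relies on $\elevS$ being constant on the grid points of each rectangle.
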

\begin{proof}
    We prove this lemma by induction on the number of processed rectangles. Observe that after initialization, the lemma holds by \lemref{cont_fs_approx}.  Denote with $F_P$ the stored function before the rectangle~$\rectY{\vp}{\vq}$ gets processed and~$F_P'$ after~$\rectY{\vp}{\vq}$ is processed. Denote with $(x_0, y_0)$ the bottom left corner of the closure of~$\rectY{\vp}{\vq}$. Further define $U=[x_0,x]\times y_0$ and $W=x_0\times[y_0,y]$. By~\obsref{concat}, it holds that
    \begin{align*}
        \dFdY{P[0,x]}{Q[0,y]}=\min_{(x',y')\in  U\cup W}\max(\dFdY{P[0,x']}{Q[0,y']},\dFdY{P[x',x]}{Q[y',y]}).
    \end{align*}
    Let~$(x',y')$ be a point on the boundary of $\rectY{\vp}{\vq}$ with $x'\leq x$ and $y'\leq y$. By convexity inside a cell and \lemref{cont_fs_approx}, it holds that
    \begin{align*}
        \max\{\elevY{x''}{y''}\mid (x'', y'')\in [x',x]\times [y', y]\} %
        \leq \frac{1}{(1-\eps)}\max\left(\frac{(1+\eps)}{(1-\eps)}\esX{x',y'}, \esX{x, y}\right)
    \end{align*}
    Hence, $\max\{\esX{x',y'}, \esX{x, y}\}$ is a $(1+4\eps)$-approximation to~$\dFdY{P[x', x]}{Q[y', y]}$.

    Further, it holds that $\dFdY{P[0, x']}{Q[0, y']}\geq (1-\eps)\esX{x',y'}$. By the algorithm, it holds that.
    \[
        F_P'(x)=\max\{\esX{x,y}, \min\left(\{F_P(x')\mid x'\in [x_0, x]\}\cup \{F_Q(y')\mid y'\in [y_0, y]\}\right)\},
    \]
    since $\min\{F_Q(y')\mid y'\in [y_0, y]\}$ is the minium of the value of~$\last{\vq}$ (before Step~2) and $\distY{x_0}{y_0}$.  Hence, if the lemma statement holds before processing $\rectY{\vp}{\vq}$, then it also holds after processing it.
\end{proof}

\begin{theorem}
    \thmlab{s_discrete_frechet}%
    Let $P$ and $Q$ be $c$-packed sequences with a total of~$n$ vertices. For any $\varepsilon\in(0,1)$, there exists a $(1+\varepsilon)$-approximation algorithm for the strong discrete \Frechet distance of $P$ and $Q$ that runs in $\OO\left(\frac{cn}{\varepsilon}\right)$ time.
\end{theorem}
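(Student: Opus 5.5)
The proof combines correctness, which follows almost directly from \lemref{strong_correctness}, with an amortized running-time analysis.

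\textbf{Correctness.} I run the algorithm of \secref{mainAlgorithm} with $\eps' = \eps/c_0$ for a suitable constant $c_0$ (say $c_0=8$), so that the $(1\pm 4\eps')$ guarantee of \lemref{strong_correctness} becomes a $(1+\eps)$-approximation. After the last rectangle $\rect_t$ is processed, its top-right corner $(\lenX{P},\lenX{Q})$ lies on the propagation front and is a corner of a rectangle of $\Decomp$. By \lemref{strong_correctness}(i), $\distY{\lenX{P}}{\lenX{Q}}$ is within a factor $1\pm 4\eps'$ of $\dFdY{P}{Q}$. Returning $\distY{\lenX{P}}{\lenX{Q}}/(1-4\eps')$, or simply the stored value with rescaled $\eps$, therefore gives the claimed approximation. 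Since the discrete simplified elevation is constant on each rectangle, no continuous valley-function issues arise.

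\textbf{Preprocessing time.} By \thmref{number_pairs}, the decomposition $\cDecomp$ (hence $\Decomp$) has $O(cn/\eps)$ rectangles and is built in that time. By \corref{neighbors}, all neighbor lists, including their sorted order along each edge, are computed in $O(cn/\eps)$ time. A topological order of the \DAG of \remref{direct} is found in time linear in its size, which is $O(cn/\eps)$ by \lemref{number_edges}. Initialization of $L_P$ and $L_Q$ takes $O(n)$ time plus the number of bottom/left rectangles.

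\textbf{Time per rectangle.} This amortized accounting is the main step. Split the work on a rectangle $\rectY{\vp}{\vq}$ into three parts.
\begin{compactenumi}
    \item List deletions. Every list element is created once and deleted at most once. Each rectangle creates $O(1)$ new elements in Step~2, plus the initialization elements. So the total number of deletions is $O(cn/\eps + n)$. Scanning for deletions in Steps~1 and~2 starts from the pointers $\first{\cdot}$ and $\last{\cdot}$ of the relevant nodes. These pointers are valid when needed by \obsref{pointers_to_list}(iii), so each scan costs $O(1)$ plus the number of deleted elements.
    \item Upward consolidation in Step~1 and downward pushing in Step~3. Here I must bound the number of tree nodes visited. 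The paths go from $\vp_1,\dots,\vp_k$ up to $\vp$, and from $\vp$ down to $\vp'_1,\dots,\vp'_{k'}$. The union of these paths is the subtree of $\Tp$ spanned between $\vp$ and those descendants. Its size is $O(k)$ plus the total length of unary chains; $\Tp$ has no unary nodes, but the chains can still be long.
    \item Bounding the visited nodes. I would charge each visited node $w$ (strictly between $\vp$ and some $\vp_i$) to the pair $(w,\vq')$ of the neighbouring bottom rectangle, or more precisely to an internal node of $\Tbsp$ on the path producing that rectangle. The refinement that generated the bottom neighbours $\vp_i$ from $\vp$ must have split through $w$, so $w$ corresponds to an internal node of $\Tcont$ lying on the root-to-leaf path of that neighbour.
\end{compactenumi}

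The main obstacle is showing that each internal node of $\Tbsp$ is charged only $O(1)$ times overall. Each node $w$ is consolidated when its interval is completely under the front below $\rectY{\vp}{\vq}$, and pushed down when the interval is above it. The laminar structure (\obsref{laminar}) together with the monotone sweep should imply that each pair (node of $\Tp$, horizontal strip) is crossed $O(1)$ times. I would try to make this precise by mapping each visited $w$ to the unique node of $\Tbsp$ whose rectangle has $x$-extent $\suppX{w}$ and whose $y$-extent contains $y_0$ (respectively $y_1$); such a node exists because the bottom neighbours refine $\suppX{\vp}$ through $w$. Given this, the total work is $O(\cardin{\Tbsp}) = O(cn/\eps)$ by \lemref{cont_time_decomp}.
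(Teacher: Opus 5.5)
\textbf{Gap: the cost of the tree traversals in Steps~1 and~3 is never bounded.} Your correctness argument, the preprocessing bounds, and the amortization of list deletions agree with the paper's proof. The gap is the step you yourself flag as the main obstacle: bounding the nodes of $\Tp$ visited by the upward consolidation in Step~1 and the downward push in Step~3. You never prove this. The charging to $\Tbsp$ is left at ``should imply'' and ``I would try''.

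As formulated, that charging is also not well defined. Several nested nodes $(w,\vq'')$ of $\Tbsp$ with $x$-extent $\suppX{w}$ can have $y$-extent containing $y_0$, so there is no ``unique'' one. It could be repaired by taking the deepest such node on the root-to-leaf path of the neighbour; its $y$-extent then ends exactly at $y_0$, so each node of $\Tbsp$ is charged once. But this is an unnecessary detour. It is prompted by a false premise: the union of the paths from $\vp$ to $\vp_1,\dots,\vp_k$ contains no unary chains at all.

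\textbf{The missing idea is a simple counting argument.} Suppose $k>1$. The lower neighbours tile the bottom edge of $\rectY{\vp}{\vq}$, and by laminarity (\obsref{laminar}) each $\suppX{\vp_i}$ is contained in $\suppX{\vp}$. Hence $\suppX{\vp_1},\dots,\suppX{\vp_k}$ partition $\suppX{\vp}$.

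Let $w$ be a visited node other than the $\vp_i$, so $w$ is a strict ancestor of some $\vp_i$, and let $w'$ be a child of $w$. No $\vp_j$ can be a strict ancestor of $w'$. Such a $\vp_j$ would be $w$ or above, hence $\suppX{\vp_j}\supsetneq\suppX{\vp_i}$, contradicting disjointness. Since $\suppX{w'}$ must be covered, laminarity forces some $\vp_j$ to lie in the subtree of $w'$ (possibly $w'$ itself), so $w'$ is visited as well.

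Thus the visited nodes form a full binary tree whose leaves are exactly $\vp_1,\dots,\vp_k$, and it has $2k-1$ nodes. The same holds in Step~3 for the upper neighbours $\vp'_1,\dots,\vp'_{k'}$. Each rectangle therefore pays $\OO(\cardin{\Neighbors{\rectY{\vp}{\vq}}})$ for the tree traversals. Summing via \lemref{number_edges} gives $\OO(cn/\eps)$, which is precisely how the paper bounds these steps.
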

\begin{proof}
    The correctness follows from \lemref{strong_correctness} and running the algorithm with $\eps'=\eps/4$.

    It remains to prove the running time. We process the rectangles in a topological order such that at the time a rectangle~$\rect$ is processed, all neighboring rectangles below and to the left have already been processed. Such a topological ordering exists and can be computed in~$\OO(|\cDecomp|)=\OO(\frac{cn}{\eps})$ time by~\cite{k-tsln-62} and~\thmref{number_pairs}.  The initialization runs in~$\OO(\frac{cn}{\eps})$ time as well. Now consider Step~1 for~$P$, when processing a rectangle~$\rectY{\vp}{\vq}$. By \obsref{pointers_to_list}, $F_P$ restricted to $\InodeX{\vp_i}$ is monotone decreasing for all~$i$ and $\first{\vp_i}$ and $\last{\vp_i}$ point to the start and the end of~$\InodeX{\vp_i}$ in the list~$L_P$. Hence, the updates to the list can be done in~$\OO(r_1)$ time, where~$r_1$ is the number of deleted list elements in Step~1. The rest of Step~1 needs~$\OO(\Neighbors{\vp})$ time, since this bounds the size of the visited nodes of~$T_P$. Step~$2$ runs in~$\OO(r_2)$ time, where $r_2$ is the number of deleted list elements in Step~$2$. Similar to the last part of Step~1, Step~3 runs in~$\OO(\Neighbors{\vp})$ time. Since every element gets deleted at most once and we only add a constant number of pieces for each rectangle, together with \thmref{number_pairs} and~\lemref{number_edges}, it follows that the total running time is in $\OO\left(\frac{cn}{\varepsilon}\right)$.
\end{proof}

We can improve the running time of \thmref{s_discrete_frechet} in the following way. We first compute a constant factor approximation in~$\OO(cn)$ time using \thmref{s_discrete_frechet} (e.g., for $\eps=1/2$) Then, to get a $(1+\eps)$-approximation, we use \thmref{binary_search} and the stable $(1+\eps)$-approximate decision algorithm by Bringmann and \Kunnemann~\cite{bk-iafdc-17}, that runs in $\OO(\frac{cn}{\sqrt{\eps}}\log \tfrac{1}{\eps} )$ time for the strong discrete \Frechet distance.

\begin{corollary}
    \corlab{s_discrete_frechet_faster}%
    Let~$P$ and~$Q$ be $c$-packed curves with $n$ vertices total. For any $\varepsilon\in(0,1)$, there exists a $(1+\varepsilon)$-approximation algorithm for the strong discrete \Frechet distance of $P$ and $Q$ in $\OO(\frac{cn}{\sqrt{\varepsilon}} \log \tfrac{1}{\eps} )$ time.
\end{corollary}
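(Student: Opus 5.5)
The corollary follows by combining three ingredients already set up in the paper: a linear-time constant-factor approximation, a constant-spread search framework, and the Bringmann–\Kunnemann decider.

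First, run the algorithm of \thmref{s_discrete_frechet} with a fixed constant, say $\eps_0 = 1/2$. By that theorem this takes $\OO(cn/\eps_0) = \OO(cn)$ time. It returns a value $\alpha$ with $\dFdY{P}{Q} \in [\alpha/C, C\alpha]$ for some absolute constant $C$. So the optimum is now confined to an interval of constant spread.

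Second, invoke the search framework of \secref{faster} (\thmref{binary_search}) on this interval. The decider we plug in is the stable $(1+\eps)$-approximate decision procedure of Bringmann and \Kunnemann~\cite{bk-iafdc-17} for the strong discrete \Frechet distance on $c$-packed curves. It runs in $T(n,\eps) = \OO(\frac{cn}{\sqrt{\eps}}\log\frac{1}{\eps})$ time. A naive binary search over the constant-spread interval down to relative precision $\eps$ would need $\OO(\log\frac{1}{\eps})$ calls, each at full precision $\eps$. That would cost an extra $\log\frac1\eps$ factor.

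The role of \thmref{binary_search} is to avoid this extra factor. Its search runs the decider with a coarse precision in early rounds and refines the precision geometrically as the interval shrinks. The costs then form a geometric-type sum: round $i$ uses precision $\eps_i \approx 2^{-i}$, so it costs about $cn\,2^{i/2}\log(2^i)$. Summing over rounds up to $\eps_i=\eps$ gives $\OO(T(n,\eps))$, dominated by the last round.

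The main thing to check is that the Bringmann–\Kunnemann decider satisfies the hypotheses \thmref{binary_search} places on a decider. These are the ``stable'' approximate-decision semantics: the decider answers correctly outside the window $[\alpha, (1+\eps)\alpha]$, and may answer either way inside it. A second hypothesis is that its running time is polynomial in $1/\eps$, so the geometric summation applies. I expect both to hold directly from the statement in~\cite{bk-iafdc-17}.

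Adding the $\OO(cn)$ preprocessing to the $\OO(\frac{cn}{\sqrt\eps}\log\frac1\eps)$ search gives the claimed $\OO(\frac{cn}{\sqrt{\eps}}\log\frac{1}{\eps})$ bound for a $(1+\eps)$-approximation.
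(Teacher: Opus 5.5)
Your proposal is correct and follows the paper's argument: a constant-factor approximation from \thmref{s_discrete_frechet} with $\eps=1/2$ in $O(cn)$ time, followed by \thmref{binary_search} with the Bringmann--\Kunnemann decider. Because the starting interval has constant spread, the $T(n,2)\log\log\frac{\beta}{\alpha}$ term is $O(cn)$, and the rest is $O(T(n,1+\eps)) = O(\frac{cn}{\sqrt{\eps}}\log\frac{1}{\eps})$. One terminological correction: in the paper, ``stable'' (\defref{stable}) is only the running-time condition, and the required output is the three-way answer of \defref{decider} (an interval of spread $1+\tau$ containing the optimum, or a correct strict comparison with $r$), which is stronger than the ``may answer either way inside the window'' behaviour you describe; the paper, like you, simply asserts that the Bringmann--\Kunnemann procedure meets these hypotheses.
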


\subsection{The full setting: monotone and continuous}\seclab{strong-continuous}
The algorithm of \secref{mainAlgorithm} is correct for the (strong continuous) \Frechet distance as well. However, its running time is too high for our purpose. Therefore, we modify the algorithm to obtain a $(\sqrt{2}+\eps)$-approximation for the (strong continuous) \Frechet distance in $\OO(cn/\eps)$ time. In particular, we slightly modify the simplified elevation function to ensure that the running time of maintaining the cost function does not increase.

\paragraph*{Challenges.} The challenge in the continuous case is the following. It can happen that the simplified elevation function for a rectangle $\rectY{\vp}{\vq}$ is not constant.  In Step~2, the algorithm computes the intersection points of the simplified elevation function of a rectangle and the so far computed function~$F_P$ (resp.~$F_Q$).  To be able to charge this computation time to the rectangles processed so far, we bounded the number of intersection points of these two functions. However, if the simplified elevation function is of the form $\normX{\cP(x)-u}$ or $\normX{\cQ(y)-u}$, there could be many intersection points (see \figref{Continuous_without_root2} for an example).

\begin{figure}
    \centering%
    \includegraphics[page=2]{figs/continuous_without_root_2}%
    \caption{The so far computed function~$F_P(x)$ and the simplified elevation function~$\esX{x,y}$ can have multiple intersection points.}
    \figlab{Continuous_without_root2}
\end{figure}

\paragraph{Idea.} In order to avoid this, we start by constructing $\sqrt{2}$-approximations for the non-constant simplified elevation functions.  Also, we observe that in the case that the simplified elevation function for a rectangle $\rectY{\vp}{\vq}$ is not constant, $\vp$ or~$\vq$ must be a single edge and hence a leaf of~$\Tp$ or~$\Tq$.
Later, we show that using the approximated functions, we can compute the upper envelope of~$F_P(x)$ and the approximated simplified elevation functions efficiently.

Note that the algorithm uses the simplified elevation function only on the boundary of rectangles. Hence, it is enough to approximate the simplified elevation functions on the boundaries of the rectangles.
\begin{lemma}
    \lemlab{aprx_linear}%
    If the simplified elevation function~$\esX{x,y}=\normX{\cP(x)-u}$ (resp.~$\normX{\cQ(y)-u}$) on a boundary of $\rectY{\vp}{\vq}$ for some~$u\in \Re^d$, then it can be $\sqrt{2}$-approximated on this boundary by a function~$\elev'(x,y) = |x-a|+b$ (resp.~$\elev'(x,y) = |y-a|+b$) for some~$a,b\in \Re$.
\end{lemma}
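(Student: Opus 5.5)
The plan is to reduce the statement to a one-dimensional fact about the distance from a point moving at unit speed along a segment to a fixed point. Since the simplified elevation function is non-constant, either $\vp$ or $\vq$ is a leaf; say $\vp$ is a leaf and $\esX{x,y}=\normX{\cP(x)-u}$ (the other case is symmetric). On the boundary of $\rectY{\vp}{\vq}$ the coordinate $x$ ranges over a subinterval of $\suppX{\vp}$, which is the parameter interval of the single segment $\segX{\vp}$. Because $\cP$ is parameterized by arc length, we can write $\cP(x)=p+(x-x_0)e$ on this interval, with $e$ a unit vector.

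Next, let $a$ be the parameter of the orthogonal projection of $u$ onto the line supporting $\segX{\vp}$, so that $a=x_0+\langle u-p,e\rangle$, and set $b=\dsY{u}{\mathrm{line}}$, the distance from $u$ to that line. By Pythagoras,
\[
    \normX{\cP(x)-u}=\sqrt{(x-a)^2+b^2}.
\]
We then choose $\elev'(x,y)=|x-a|+b$, which has the required form and depends only on $x$. The approximation follows from the elementary inequality, valid for all $s,t\ge 0$,
\[
    \sqrt{s^2+t^2}\le s+t\le\sqrt{2}\sqrt{s^2+t^2},
\]
whose upper bound is Cauchy--Schwarz (equivalently $2st\le s^2+t^2$). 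Taking $s=|x-a|$ and $t=b$ gives $\esX{x,y}\le\elev'(x,y)\le\sqrt{2}\,\esX{x,y}$ for every $x$, so in particular on the boundary. One can equally rescale to $\elev'/\sqrt[4]{2}$ to get a symmetric factor, but the lemma only asks for a $\sqrt{2}$-approximation and the unscaled form keeps $\elev'\ge\esX{}$, which is convenient when taking upper envelopes.

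The only step needing care is that $a$ may lie outside the current interval of $x$. This causes no trouble: the identity and the inequality hold for all real $x$, so $\elev'$ is still valid there and simply becomes linear on the interval. I expect no real obstacle here. The substantive point is just noticing that arc-length parameterization makes the segment-to-point distance exactly a hyperbola $\sqrt{(x-a)^2+b^2}$, which the V-shaped function $|x-a|+b$ approximates within $\sqrt{2}$.
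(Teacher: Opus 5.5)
Your proof is correct and follows essentially the same route as the paper. The paper translates and rotates so that $u$ is the origin and $\cP(x)=(b,a+x)$, then applies $\|\cdot\|_2\le\|\cdot\|_1\le\sqrt{2}\,\|\cdot\|_2$; this is your Pythagoras step combined with $\sqrt{s^2+t^2}\le s+t\le\sqrt{2}\sqrt{s^2+t^2}$, including the resulting bound $\esX{x,y}\le\elev'(x,y)\le\sqrt{2}\,\esX{x,y}$.
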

\begin{proof}
    By translation and rotation, we can assume $u$ is at the origin, and $\cP(x) = (b, a+x)$ (all other dimensions are zero and can be ignored), for some real numbers $a,b$. Thus, $\esX{x,y} = \normX{\cP(x)}_2 \leq \normX{\cP(x)}_1 = |a+x| + |b| \leq \sqrt{2} \normX{\cP(x)}_2 = \sqrt{2}\, \esX{x,y}$, as $\normX{\cdot}_1 \leq \sqrt{d}\normX{\cdot}_2$ in $\Re^d$.%
\end{proof}

Now, we run the algorithm of \secref{mainAlgorithm} with the {$\sqrt{2}$-approximations} for the non-constant simplified elevation functions of~\lemref{aprx_linear}.

\begin{figure}[ht]
    \centering%
    \includegraphics{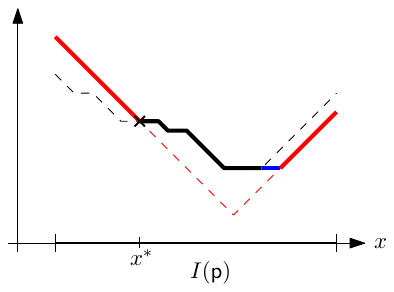}
    \caption{The black function was the value function before processing $\rectY{\vp}{\vq}$. The red function is the $\sqrt{2}$-approximate simplified elevation function~$\elev'(\cdot, \cdot)$ for~$\rectY{\vp}{\vq}$. The new function is drawn thicker.}
    \figlab{funct_strongContinuous}
\end{figure}

The following observation is crucial in the runtime analysis and can be proven easily via induction.
\begin{observation}
    \obslab{edge_function_property}%
    Let~$\vp$ be a leaf of~$\Tp$. Then, it holds that~$F_P$ restricted to the interval $\InodeX{\vp}$ is a piecewise linear valley function such that the slope of the pieces is in $\{-1, 0\}$ except for the last piece whose slope is~$0$ or~$+1$.
\end{observation}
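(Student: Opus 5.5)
We argue by induction on the number of processed rectangles, and show that the invariant holds for every leaf $\vp$ of $\Tp$ at every moment. The invariant says that $F_P$ restricted to $\InodeX{\vp}$ (more precisely, to its extension $\suppX{\vp}$, which is the single edge $\segX{\vp}$) is a continuous piecewise linear valley function. All of its pieces have slope in $\{-1,0\}$, except possibly the last, which has slope $0$ or $+1$. Note that a valley function with these slopes is simply a non-increasing staircase-with-ramps followed by at most one increasing ramp.

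\textbf{Base case.} At initialization, $F_P(x)=\max\{\gamma'(x',0) \mid x'\le x\}$, where $\gamma'$ is the modified simplified elevation used in the continuous algorithm. This is either a constant or, by \lemref{aprx_linear}, of the form $|x'-a|+b$. The running maximum is non-decreasing, so it is not descending in general. On a single edge $\suppX{\vp}$ the prefix maximum is the maximum of a constant $M$ (coming from earlier edges) and the running max of $|x-a|+b$ restricted to the edge. That running max is constant until $x$ passes the point where $x-a+b$ exceeds the earlier values, and has slope $+1$ afterwards. Thus on the edge the function is constant followed by at most one slope-$+1$ piece, which is a valid valley function of the required type.

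\textbf{Inductive step.} Consider processing a rectangle $\rectY{\vp'}{\vq'}$ whose top boundary meets $\suppX{\vp}$. By the laminar structure (\obsref{laminar}), either $\suppX{\vp}\subseteq\suppX{\vp'}$ or $\vp'=\vp$ is the leaf itself, since leaves are minimal. We track the three operations in turn.

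\begin{compactenumi}
    \item Step~1 replaces $F_P$ on $\IOCX{x_0,x_1}$ by its prefix minimum $f_1$. The prefix minimum of a function whose pieces have slopes in $\{-1,0,+1\}$ has slopes in $\{-1,0\}$ only, since increasing parts become flat. It is therefore non-increasing, in particular a valley function.
    \item Taking $f_2=\min\{d,f_1\}$ with a constant $d$ preserves being non-increasing with slopes in $\{-1,0\}$.
    \item $f_3=\max\{f_2,\gamma'(\cdot,y_1)\}$ is the critical step. If $\gamma'$ is constant, the max of a non-increasing $\{-1,0\}$-sloped function and a constant is again such a function. Otherwise, by the Idea paragraph, $\vp'$ must be a leaf, so $\vp'=\vp$, and $\gamma'(x,y_1)=|x-a|+b$.
\end{compactenumi}

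For the non-constant case in item (iii), we split at $a$. For $x\le a$, the function $\gamma'$ has slope $-1$, and the max of two non-increasing functions with slopes in $\{-1,0\}$ stays non-increasing with slopes in $\{-1,0\}$. For $x\ge a$, the function $\gamma'$ has slope $+1$ and $f_2$ is non-increasing. Their difference $\gamma'-f_2$ is then strictly increasing, so they cross at most once: $f_2$ dominates up to a point $x^*$ and $\gamma'$ dominates after it. The result is non-increasing up to $x^*$ and then a single slope-$+1$ piece, which gives the claimed form.

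\textbf{Remaining details.} Updates made while processing rectangles that do not touch $\suppX{\vp}$ leave $F_P$ on that interval unchanged. The symmetric statement holds for $F_Q$.

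\textbf{Main obstacle.} The one subtle point is item (iii): we must ensure the single-crossing argument applies, which requires that $f_2$ is already monotone (non-increasing) on the whole edge before the max is taken. This is exactly what Step~1's prefix-minimum guarantees, even though the pre-existing last piece may have had slope $+1$.
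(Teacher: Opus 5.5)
Your proof is correct and is the induction over processed rectangles that the paper only alludes to (``can be proven easily via induction''). The base case is the prefix maximum of $\gamma'(\cdot,0)$ on a single edge. The inductive step uses Step~1's prefix minimum, the $\min$ with the constant $d$, and your single-crossing argument for $\max\{f_2,\gamma'(\cdot,y_1)\}$, which is non-constant only when the processed rectangle lies in the column of the leaf $\vp$ itself.

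One small remark: what you actually establish is ``non-increasing with slopes in $\{-1,0\}$, followed by at most one slope-$+1$ piece''. This allows the last piece to have slope $-1$, for instance when $\gamma'(\cdot,y_1)$ is decreasing on the whole edge and dominates $f_2$. So the statement holds in this form rather than under a literal reading of ``last piece has slope $0$ or $+1$'', and this form is all that \lemref{sqrt2_approx} uses.
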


\begin{lemma}
    \lemlab{sqrt2_approx}%
    Let $\cP$ and $\cQ$ be $c$-packed. For any $\varepsilon>0$, there exists a $(\sqrt{2}+\eps)$-approximation algorithm for the strong continuous \Frechet distance of $\cP$ and $\cQ$ in $\OO\left(\frac{cn}{\varepsilon}\right)$ time.
\end{lemma}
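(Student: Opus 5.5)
The plan is to run the algorithm of \secref{mainAlgorithm} on the decomposition of \thmref{number_pairs}, computed with parameter $\eps' = \Theta(\eps)$. Every non-constant simplified elevation function on a rectangle boundary is replaced by the function $\elev'$ of \lemref{aprx_linear}. The proof then splits into correctness (approximation factor) and running time.

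\textbf{Correctness.} I would rerun the induction of \lemref{strong_correctness} with a modified elevation. Define $\elev''$ on each rectangle boundary as $\esX{\cdot}$ where that function is constant, and as $\elev'$ otherwise. By \lemref{cont_fs_approx} and \lemref{aprx_linear}, $\elev(x,y)(1-\eps') \le \elev''(x,y) \le \sqrt{2}(1+\eps')\elev(x,y)$ on all boundaries. The induction uses only two facts:
\begin{compactenumi}
\item \obsref{concat};
\item the pass of a path crossing a rectangle is controlled by the elevation at its boundary entry and exit points.
\end{compactenumi}
Fact (ii) holds because the true elevation is convex inside a cell, and approximately convex inside a rectangle by \lemref{cont_fs_approx}. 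Hence the stored values satisfy $(1-O(\eps'))\,d_F \le d \le \sqrt2(1+O(\eps'))\,d_F$ at every point of the propagation front. Choosing $\eps'$ a suitable constant fraction of $\eps$ then gives a $(\sqrt2+\eps)$-approximation at the top-right corner.

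\textbf{Running time.} As in \thmref{s_discrete_frechet}, the topological order, the initialization and the pointer walks in Steps~1 and~3 cost $O(|\cDecomp| + N) = O(cn/\eps)$ by \lemref{number_edges}. It remains to charge the list operations so that each rectangle creates only $O(1)$ new list elements and every other operation deletes an element. Constant pieces are handled exactly as in the discrete case. A non-constant $\elev'$ occurs only if $\vp$ (resp.\ $\vq$) is a leaf, and then $\InodeX{\vp}$ is a single edge interval. By \obsref{edge_function_property}, $F_P$ restricted to that interval is a valley function with slopes in $\{-1,0\}$, except for a last piece of slope $0$ or $+1$.

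\begin{compactitem}
\item Step~1 (prefix minimum) deletes the final increasing piece.
\item Step~2 first truncates at $d$, deleting elements from the front.
\item Step~2 then takes the upper envelope with $|x-a|+b$. The descending part of $\elev'$ has slope $-1$, so it is parallel to the slope-$(-1)$ pieces of $F_P$; the constant pieces are flat. Therefore $\max(F_P,\elev')$ agrees with $\elev'$ on a prefix, then with $F_P$, then with $\elev'$ on a suffix, the last part being the slope-$+1$ branch.
\end{compactitem}

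In each case the overwritten pieces form contiguous runs at the two ends of the list segment. They can be found by scanning from $\first{\vp}$ and $\last{\vp}$, deleting as we go, and then inserting $O(1)$ new pieces. This also re-establishes \obsref{edge_function_property}, which is the induction that makes the observation hold.

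\textbf{Main obstacle.} The delicate step is proving this ``$O(1)$ crossings'' structure. The $\sqrt2$-approximation exists precisely to eliminate the many intersections of \figref{Continuous_without_root2}, so I must verify that after the $\min$ with $d$ and the $\max$ with $\elev'$, the region where $F_P$ survives is a single contiguous interval. I would argue it directly: $F_P - \elev'$ is non-decreasing wherever $\elev'$ has slope $-1$ (since $F_P$ has slope $-1$ or $0$ there), and non-increasing on the slope-$+1$ branch until $F_P$'s last piece. Hence the set where $F_P \ge \elev'$ is an interval. With that established, every scan step deletes an element, each element is deleted at most once, and the total running time is $O(cn/\eps)$.
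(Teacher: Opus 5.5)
Your proposal is correct and follows the paper's proof essentially step for step. You run the propagation algorithm with the $\sqrt{2}$-approximate functions of \lemref{aprx_linear}, reuse the induction of \lemref{strong_correctness} with the weakened factor for correctness, and for the running time charge all list work in Steps~1--3 to deleted elements via \obsref{edge_function_property}; your unimodality argument for $F_P-\elev'$ (non-decreasing on the slope $-1$ branch, strictly decreasing on the slope $+1$ branch) is just a cleaner phrasing of the paper's $x^*$/slope argument that the upper envelope has at most two crossings.
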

\begin{proof}
    We use the algorithm of \secref{mainAlgorithm} but with the $\sqrt{2}$-approximation of the simplified elevation function on the non-constant pieces of $F_P$. Note that \lemref{strong_correctness} and its proof still stand in the continuous case using the $\sqrt{2}$-approximations of the non-constant simplified elevation functions from \lemref{aprx_linear}. The only thing that changes is that we have $\dFY{\cP[0,x]}{\cQ[0,y]}\in[(\sqrt{2}-4\eps)d,(\sqrt{2}+4\eps)d]$ in the statement of the lemma.  So, the correctness follows from this adapted version of \lemref{strong_correctness} together with \lemref{aprx_linear}.

    It remains to prove the bound on the running time.  Consider Step~1 for~$\cP$, when processing a rectangle~$\rectY{\vp}{\vq}$ Let $\vp_1,\ldots,\vp_k$ be the nodes in $\Tp$ corresponding to the lower neighboring rectangles of $\rectY{\vp}{\vq}$. Then, by \obsref{edge_function_property}, it holds that~$F_P$ restricted on $\IVX{\vp_i}$ for $i=1, \dots, k$ is monotone decreasing or a valley function. Hence, using the pointers, we can perform Step~$1$ in $\OO(k+r_1)$ time, where $r_1$ is the number of deleted elements from the list in Step~1.  If $\vp$ is not an edge, Step~$2$ for~$\cP$ takes linear time in the number of deleted elements from the list, since $f_1$ is monotone decreasing. If~$\vp$ is an edge, then~$f_2$ has the properties of \obsref{edge_function_property} (see \figref{funct_strongContinuous}).  Let $x^*$ be the smallest value such that $f_2(x^*)=\elev'(x^*, y_1)$.  By construction of~$f_2$, for all~$x\geq x^*$ it holds that $f_2(x)\geq f_2(x^*) +(x^*-x)$. Hence, there cannot be more than two intersection points of~$f_2$ with~$\elev'(x,y_1)$ since the slope of~$\elev'$ is first~$-1$ and then~$+1$. Further, these two points can be found in linear time in the number of deleted list elements. This bounds the running time of Step~2.  Step~3 for~$\cP$ uses~$\OO(k')$ time, where~$k'$ is the number of rectangles to the top of~$\rectY{\vp}{\vq}$.  Since every element gets deleted at most once, and we only add a constant number of pieces for each rectangle, together with \lemref{number_edges}, the total running time is in $\OO\left(\frac{cn}{\varepsilon}\right)$.
\end{proof}

To get a $(1+\eps)$-approximation algorithm for the strong continuous \Frechet distance, we first compute a constant factor approximation in~$\OO(cn)$ time using \lemref{sqrt2_approx}. Then, the next theorem follows by \thmref{binary_search} and the stable $(1+\eps)$-approximate decision algorithm by Bringmann and \Kunnemann~\cite{bk-iafdc-17}, that runs in $\OO((cn/\sqrt{\eps})\log^2(1/\eps))$ time.

\begin{theorem}
    \thmlab{s_cont_frechet}%
    Let $\cP$ and $\cQ$ be $c$-packed with a total of~$n$ vertices.  For any $\varepsilon>0$, there exists a $(1+\varepsilon)$-approximation algorithm for the strong continuous \Frechet distance of $\cP$ and $\cQ$ in $\OO((cn/\sqrt{\eps})\log^2(1/\eps))$ time.
\end{theorem}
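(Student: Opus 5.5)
The plan is a two-stage bootstrap: a cheap constant-factor estimate, followed by a refinement over an interval of constant spread.

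\textbf{Stage one.} Run the algorithm of \lemref{sqrt2_approx} with a fixed constant, say $\eps_0 = 1/4$. In $\OO(cn)$ time this yields a value $\alpha$ with
\[
    \dFY{\cP}{\cQ} \in \bigl[\alpha/(\sqrt{2}+\eps_0),\ (\sqrt{2}+\eps_0)\alpha\bigr].
\]
The optimum therefore lies in an interval $[\alpha_0, C\alpha_0]$ with $C = (\sqrt{2}+\eps_0)^2 = O(1)$. This step costs only $\OO(cn)$, which is dominated by the final bound.

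\textbf{Stage two.} Invoke \thmref{binary_search}, the search framework of \secref{faster}, on this constant-spread interval. The oracle is the stable $(1+\eps)$-approximate decider of Bringmann and \Kunnemann \cite{bk-iafdc-17} for the strong continuous \Frechet distance on $c$-packed curves, which runs in $T(n,\eps) = \OO((cn/\sqrt{\eps})\log^2(1/\eps))$ time. The interval is covered by a geometric grid $\alpha_0(1+\eps)^i$ with only $O(\log_{1+\eps} C) = O(1/\eps)$ candidates, so a binary search needs $O(\log(1/\eps))$ decider calls.

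Done naively, this would give an extra $\log(1/\eps)$ factor. The point of \thmref{binary_search} is that the calls can be arranged so that the total cost is $O(T(n,\eps))$. One way is to run the early rounds with a coarse accuracy $\eps_i$ that halves each round, so the costs $\propto 1/\sqrt{\eps_i}$ form a geometric series dominated by the last term. Stability of the decider is what keeps coarse answers consistent when the search window is shrunk.

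\textbf{Correctness.} The final window has ratio $1+O(\eps)$ and contains $\dFY{\cP}{\cQ}$. Rescaling $\eps$ by a constant then gives the $(1+\eps)$ guarantee.

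\textbf{Main obstacle.} The step I expect to need the most care is checking that the hypotheses of \thmref{binary_search} are met. In particular, the Bringmann--\Kunnemann decider must have the required stability/monotonicity property and a running-time dependence on $\eps$ for which the geometric summation holds, namely $T(n,\eps/2) \geq \beta\, T(n,\eps)$ for some $\beta > 1$. Given those, the theorem is just the composition of the two stages.
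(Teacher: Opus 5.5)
Your proposal matches the paper's proof. It takes the constant-factor estimate from \lemref{sqrt2_approx} in $\OO(cn)$ time, then applies \thmref{binary_search} on the resulting constant-spread interval with the stable Bringmann--\Kunnemann decider, for a total of $\OO(T(n,1+\eps) + T(n,2)) = \OO((cn/\sqrt{\eps})\log^2(1/\eps))$. One small correction: in \defref{stable}, stability is purely a running-time condition, namely the geometric growth you correctly isolate in your last paragraph. The correctness of coarse answers comes instead from the three-outcome form of \defref{decider}.
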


Observe that if the set $\cDecomp$ consists of all cells, then we do not have to simplify the elevation functions as done in \secref{decomposing-free-space}. So, if we run the above algorithm on $\cDecomp$, we get a $\sqrt{2}$-approximation in $\OO(\cardin{\cDecomp})$ time.  In 1-dimension, we do not even have to approximate the elevation functions of edge-vertex and vertex-edge rectangles, and hence, the algorithm is exact in this scenario. If the complexity of $\cP$ is~$n$ and the complexity of $\cQ$ is~$m$, the size of $\cDecomp$ is~$\OO(nm)$. Hence, we get the following corollary.

\begin{corollary}
    \corlab{1dexact}%
    Let $\cP$ be a curve of complexity $n$ and $\cQ$ be a curve of complexity $m$.  There exists a $\sqrt{2}$-approximation algorithm for the strong continuous \Frechet distance for general curves in~$\Re^d$ that runs in $\OO(nm)$ time. In 1D, this algorithm is exact.
\end{corollary}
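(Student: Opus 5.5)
The plan is to run the propagation algorithm of \secref{mainAlgorithm}, in its continuous version from \secref{strong-continuous}, on the trivial decomposition $\cDecomp$ consisting of all $(n-1)(m-1)$ cells of $\cPQ$, rather than on the decomposition of \thmref{number_pairs}. Every cell is a product of an edge of $\cP$ and an edge of $\cQ$, so its elevation function $\elev(x,y)=\normX{\cP(x)-\cQ(y)}$ already has the simplified form of \defref{simp_elev}. No $(1\pm\eps)$ loss from \lemref{cont_fs_approx} is incurred, and we may formally take $\eps=0$ in the bookkeeping of \lemref{strong_correctness}.

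The first step is to note what the algorithm needs on the boundary of a cell. On a horizontal side $y=y_j$, the point $\cQ(y_j)=q_j$ is a vertex, so the elevation restricted to that side is $\normX{\cP(x)-q_j}$. This is exactly the form treated by \lemref{aprx_linear}, and we replace it by $|x-a|+b$, a $\sqrt{2}$-approximation. The vertical sides are handled symmetrically. The corner values $\distY{x}{y}$ are exact vertex–vertex distances.

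Correctness then follows by rerunning the induction of \lemref{strong_correctness} with \obsref{concat}. The new ingredient is convexity of $\elev$ inside a cell, which gives that the maximum of $\elev$ on any monotone segment from $(x',y')$ to $(x,y)$ inside the cell is the larger of its endpoint values. Hence the one-cell \Frechet cost is exactly $\max(\elev(x',y'),\elev(x,y))$, and the only error comes from the boundary replacement, giving a factor of $\sqrt{2}$. We should check that the errors do not compound along the propagation. They do not, because the stored values are maxima and minima of the replaced functions, each lying in $[\elev,\sqrt{2}\elev]$, and $\max$ and $\min$ preserve such a multiplicative sandwich.

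For the running time, we repeat the proof of \lemref{sqrt2_approx}. Every cell has both $\vp$ and $\vq$ leaves, so each has $O(1)$ neighbours per side, and Steps~1 and~3 take constant time plus the number of deleted list elements. \obsref{edge_function_property} holds on every edge interval, so Step~2 finds at most two intersection points of $f_2$ with the $\pm1$-slope function $\elev'$. Each cell therefore adds $O(1)$ pieces, and every piece is deleted at most once, giving $O(nm)$ in total. The topological order on the grid is trivial.

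In 1D, $\cP(x)$ moves linearly on a line, so $\normX{\cP(x)-q_j}=|x-a|$ exactly, with slopes $\pm1$. The approximation of \lemref{aprx_linear} is then the identity and the algorithm is exact.

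The main obstacle I expect is the correctness step. We must verify that using $\elev'$ only on cell boundaries, with exact convexity inside, still gives a valid upper bound, meaning a realizable monotone path with pass at most $\sqrt{2}$ times the value. The lower bound must also hold; this follows since $\elev\le\elev'$ on the boundary.
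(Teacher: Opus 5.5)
Your proposal is correct and follows the paper's own argument: run the propagation algorithm on the decomposition into all $O(nm)$ cells, where the elevation needs no simplification, lose only the factor $\sqrt{2}$ from \lemref{aprx_linear} on the vertex--edge cell boundaries, and use that in 1D $\normX{\cP(x)-q_j}=|x-a|$, so nothing is lost and the algorithm is exact. Your max/min sandwich argument for why the errors do not compound is a clean justification, and it already settles the ``obstacle'' you raise at the end; one small slip is that $\distY{x}{y}$ stores the approximate prefix cost $\dFY{\cP[0,x]}{\cQ[0,y]}$, not a vertex--vertex distance, but since it is produced by the same max/min operations it satisfies the same $[\psi,\sqrt{2}\,\psi]$ sandwich, so nothing breaks.
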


\section{Faster binary search via an approximate decider}
\seclab{faster}

In this section, we discuss different ways to turn a low-quality approximation into a $(1+\eps)$-approximation. Here, one is provided with a decision procedure (i.e., a decider), and the task is to use a few calls to this decider to boost the approximation to the desired quality.  This is done in such a way that the binary search incurs only a constant factor overhead in total, under some mild conditions on the running time of the decider.

The basic ideas used to derive \thmref{binary_search} below are not new, and the main purpose here is to present the basic settings and state the result so that it can be used in a plugin fashion. Surprisingly, it seems this result is not explicitly stated in previous work.

We have an input $P$ of size $n$, and the task is to approximate $\vOpt = f(P)$, for a function $f$, where this value is guaranteed to lie in a given interval $[\alpha, \beta]$, with $\beta > \alpha > 0$. Assume we are given an approximate decider.

\begin{defn}
    \deflab{decider}%
    Let $\tau > 0$ be a parameter. Given a set $P$ of size $n$, and a (non-negative) function $f$, a procedure $\decider$ is a \emphi{$(1+\tau)$-decider} for $\vOpt = f(P)$, if $\decider(P, r, 1+\tau)$ returns one of the following:
    \begin{compactenumi}
        \item $\vOpt \in [\alpha, (1+\tau)\alpha]$, where $\alpha$ is some real number,
        \item $\vOpt < r$, or
        \item $\vOpt > r$.
    \end{compactenumi}
\end{defn}
In the following, the decider running time is denoted by $T(n,1+\tau)$.

\begin{defn}
    For an interval $I = [\alpha, \beta]$, with $\beta \geq \alpha > 0$, its \emphi{spread} is the quantity $\spreadX{I} = \beta/\alpha$.
\end{defn}

Note, that if the algorithm computed an interval $I$ that contains $\vOpt$, and $\spreadX{I} \leq 1+\eps$, then this yields a $(1+\eps)$-approximation to $\vOpt$.

\paragraph*{A first attempt.}
To get a $(1+\eps)$-approximation using the decider, for $\eps \in (0,1)$, given an interval $[\alpha,\beta]$, the natural approach is to divide the interval with a middle value $(\alpha+\beta)/2$, call the decider and then recurse either left or right, repeating the process till one is left with an interval with spread at most $1+\eps$ which must contain the value $\vOpt$. This requires $ O( \log \frac{\beta}{\eps \alpha} ) = O( \log \tfrac{\beta}{\alpha} + \log \tfrac{1}{\eps} )$ calls to the decider, and the overall running time would be $O\bigl( T(n,1+\eps) ( \log \tfrac{\beta}{\alpha} + \log \tfrac{1}{\eps} ) \bigr)$.

\paragraph*{A second attempt.}
Consider all the approximate answers the binary search might return: $\alpha, (1+\eps)\alpha, \ldots, \pth{1+\eps}^i\alpha, \ldots, \beta$.  If the given interval is $[\alpha,\beta]$, then $\vOpt$ can be rounded to be one of the values
\begin{equation*}
    \alpha_i
    =
    \alpha \pth{1+\eps}^i,
    \qquad \text{for}\qquad i=1,\ldots,
    M = O(\log_{1+\eps}\tfrac{\beta}{\alpha} )
    =
    O( \tfrac{1}{\eps} \log \tfrac{\beta}{\alpha} ).
\end{equation*}
Thus, we can perform a direct binary search on these $M$ values.
This would require $\log M$ calls to the decider, resulting in running time $O\bigl( T(n,1+\eps) \log M \bigr) = O(T(n,1+\eps) ( \log \log \tfrac{\beta}{\alpha} + \log \tfrac{1}{\eps} ) \bigr)$.

\medskip%

One can do even better if the decider running time deteriorates gracefully, as the parameter $\eps$ decreases.

\begin{defn}
    \deflab{stable}%
    A decider is \emphi{stable}, if for any $\tau > \eps > 0$, its running time $T(n,1+\eps)$ for computing $(1+\eps)$-approximation, complies with the inequality
    \begin{math}
        \tfrac{T(n,1+\eps) }{T(n,1+\tau)} =%
        O\bigl( {( \tfrac{\tau}{\eps(\tau+1)} )}^c \bigr),
    \end{math}%
    where $c>0$ is some constant.
\end{defn}

Informally, if the decider is stable, running it with a large constant approximation factor is the same as running it with $2$ (in particular, $\tau$ can be much larger than $1$). For $\eps \in (0,1)$, the decider running time for $(1+\eps)$-approximations deteriorates at least geometrically with $\eps$.

\begin{theorem}
    \thmlab{binary_search}%
    Let $P$ be a set of size $n$ and let $\vOpt=f(P)$ be an unknown quantity for a (non-negative) function $f$. Suppose we are given a parameter $\eps \in (0,1)$, an interval $ [\alpha,\beta] \ni \vOpt$, and a \defrefY{stable}{stable} $(1+\tau)$-decider for $\vOpt$ that runs in $T(n,1+\tau)$ time and works for any $\tau > 0$. Then, one can compute an interval $I$, containing $\vOpt$, such that $\spreadX{I} \leq 1+\eps$ (i.e., a $(1+\eps)$-approximation to $\vOpt$). The running time of this algorithm is $O\bigl( T(n,1+\eps) + T(n,2) \log \log \tfrac{\beta}{\alpha} \bigr)$
\end{theorem}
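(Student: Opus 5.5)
The plan has two phases: a geometric-scale binary search with a coarse decider shrinks the spread to a constant, and then a sequence of refinements whose accuracies decrease geometrically yields the target accuracy.

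\emph{Phase one (reducing the spread to a constant).} I maintain an interval $[\alpha',\beta']\ni\vOpt$ with $\beta'/\alpha' > 4$. I query $\decider$ with $\tau=1$ at the geometric midpoint $r=\sqrt{\alpha'\beta'}$ and branch on the answer.
\begin{compactenumi}
    \item If the decider returns $\vOpt\in[\alpha,2\alpha]$, the interval $[\alpha,2\alpha]$ has spread $2$ and contains $\vOpt$, so phase one stops.
    \item If it returns $\vOpt<r$, I replace $\beta'$ by $r$.
    \item If it returns $\vOpt>r$, I replace $\alpha'$ by $r$.
\end{compactenumi}
Each query replaces $\log(\beta'/\alpha')$ by half its value. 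After $O(\log\log\frac{\beta}{\alpha})$ calls, each costing $T(n,2)$, the spread is at most $4$.

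\emph{Phase two (refinement).} Suppose the current interval $J\ni\vOpt$ has spread $1+\tau$. I want an interval containing $\vOpt$ with spread about $1+\tau/4$, using $O(1)$ calls to a $(1+\tau/8)$-decider. To do this, I split $J$ geometrically into $O(1)$ pieces, each of spread at most $1+\tau/8$, and binary search over the piece boundaries. Each query at a boundary $r$ either localizes $\vOpt$ to one side of $r$, or returns an interval of spread $1+\tau/8$ containing $\vOpt$, in which case I stop early. I iterate with $\tau_0=3$ and $\tau_{k+1}=\tau_k/4$, stopping once $\tau_k\le\eps$. In the final round I call the decider with parameter exactly $\eps$, which guarantees spread at most $1+\eps$.

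\emph{Running time.} Round $k$ costs $O(T(n,1+\tau_k/8))$. I apply \defref{stable} with $\tau=1$ to each round, and once more with $\tau=\tau_k/8$ against the final accuracy $\eps$. This gives
\[
\frac{T(n,1+\tau_k/8)}{T(n,1+\eps)}
=O\bigl((\eps(1+\tau_k/8)/(\tau_k/8))^{-c}\bigr)^{-1}.
\]
Read the other way, the per-round cost grows geometrically in $k$, because $\tau_k$ decreases geometrically and stability gives polynomial growth of $T$ in $1/\tau$. The sum is therefore dominated by its last term, so the total for phase two is $O(T(n,1+\eps))$. There are two degenerate ends to handle. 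First, early rounds have $\tau_k\ge 1$ and each costs $O(T(n,2))$, and there are only $O(1)$ of them. Second, the last round uses $\eps$ rather than $\tau_k/8$, which changes its cost by at most a constant factor.

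\emph{Main obstacle.} The delicate step is the geometric-sum argument. The stability ratio $\bigl(\frac{\tau}{\eps(\tau+1)}\bigr)^c$ must be shown to force $T(n,1+\tau_k)$ to grow at least geometrically as $k$ increases toward the final round. This is precisely the content the paper attaches to the notion of a stable decider: for small $\tau$ the ratio is about $(\tau/\eps)^c$. I would also check the two endpoint cases carefully. One is the spread after an early stop, which must not exceed the current target. The other is the final round, where the decider's parameter must be exactly $\eps$.
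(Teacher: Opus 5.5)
Your proposal is correct and follows essentially the paper's argument. You use geometric-midpoint queries with a coarse decider for $O(\log\log\frac{\beta}{\alpha})$ rounds, then a sequence of accuracies $\tau_k$ that decrease geometrically, whose costs you bound via stability read as $T(n,1+\tau_k)=O\bigl((\eps(1+\tau_k)/\tau_k)^c\bigr)T(n,1+\eps)$ and sum to $O(T(n,1+\eps))$; this is the same reading the paper's proof uses, though it is the reverse of the literal $O(\cdot)$ in \defref{stable}. The paper merely streamlines this into one loop that makes a single call per iteration, at the geometric midpoint with accuracy $\sqrt{c_i}$ where $c_i$ is the current spread, so each call square-roots the spread; your split into a fixed $2$-decider phase followed by $O(1)$ calls per round at accuracy $\tau/8$ costs only a constant factor more, and also avoids having to bound $T(n,\sqrt{c_i})$ by $T(n,2)$ when $c_i$ is large.
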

\begin{proof}
    The algorithm is iterative, maintaining an interval $I_i = [\alpha_i,\beta_i]$ containing $\vOpt$, with $\alpha_1 = \alpha$ and $\beta_1=\beta$. Let $c_i = \spreadX{I_i} = \beta_i / \alpha_i$. If $c_i \leq 1+\eps$, then we are done. Otherwise, in the $i$\th iteration, let $r_i = \alpha_i \sqrt{c_i}$. The algorithm now calls the provided procedure \decider{}$(r_i, \sqrt{c_i})$:
    \begin{compactenumI}
        \item If the decider returns an interval $K_i$ containing $\vOpt$, then $\spreadX{K_i} \leq \sqrt{c_i}$. The algorithm sets $I_{i+1}=K_i$ and continues to the next iteration.

        \item If \decider returned that ``$\vOpt < r_i$'', then the search continues in the interval $I_{i+1} = [\alpha_i, r_{i}]$.

        \item Otherwise, it continues in the interval $I_{i+1} = [r_{i},\beta_i]$.
    \end{compactenumI}

    \smallskip

    Note that $I_{i+1}$ always contains $\vOpt$ by construction.  Clearly, for all $i$, we have that $c_{i+1} \leq \sqrt{c_i}$. The algorithm stops when the spread of the current interval is in the range $J_1 = \IOCX{1,1+\eps}$.  The spread in the previous iteration was in the interval $J_2 = \IOCX{1+\eps, \pth{1+\eps}^2 }$.  Thus, the spread in the $t$\th iteration before stopping is in the interval $J_t = \IOCX{b(t), b(t+1)}$, where $b(t) = \pth{1+\eps}^{2^{t-1}}$.

    This squaring behavior implies that the number of iterations when the spread is larger than $2$ is at most $\OO( \log \log \tfrac{\beta}{\alpha} )$. Each such iteration takes $\OO( T(n,2))$ by the assumption on the stability of the decider.

    To analyze the iterations when the spread of the current interval is smaller than $2$, observe that $1+\eps \geq \exp(\eps/2)$ since $\eps \in (0,1)$, and as such $b(t) \geq \exp( \tfrac{\eps}{2} \cdot 2^{t-1})$. Thus, for $t=1 + \log_2 \tfrac{2}{\eps}$, $b( t ) \geq e>2$ and hence, the number of iterations with a small spread is at most $t$. Thus, overall, the algorithm performs $\OO( \log \frac{1}{\eps} + \log \log \tfrac{\beta}{\alpha} )$ iterations.

    Assume the iterations of the algorithm where the spread was smaller than $2$ were iterations $m, \ldots, M$, where $M$ is the number of the last iteration.  Let $\tau_i = c_i-1$, for $i=m,\ldots, M$. Observe that $\tau_M = \Theta(\eps)$, and the values of $\tau_m, \ldots, \tau_M$ behave like a geometric series, in the sense that $\tau_{i+1} \leq \tau_{i} / 2$. This can be seen as follows, using $\sqrt{1+x} \leq 1+x/2$, for all $x \geq 0$.
    \[\tau_{i+1}+1=c_{i+1}\leq\sqrt{c_i}=\sqrt{1+\tau_i}\leq1+\tau_i/2.\]
    Thus, the running time of these iterations, by the stability of the decider, is
    \begin{equation*}
        \sum\nolimits_{i=m}^M O\bigl(T(n, 1+\tau_i)\bigr)
        =
        O\Bigl( \sum\nolimits_{i=m}^M
        {\left(\tfrac{\eps(\tau_i+1) }{\tau_i} \right)}^c\,
        T(n,1+\eps)\Bigr)
        =%
        O\bigl( T(n,1+\eps) \bigr).
    \end{equation*}
    Thus, the overall running time is $\OO( T(n,1+\eps) + T(n,2) \log \log \tfrac{\beta}{\alpha})$.
\end{proof}

\section{Postscript: Approximating the marching \Frechet distance}
\seclab{marching}

We sketch here a simpler algorithm for the special case of approximating the marching \Frechet distance (either strong or weak). As a reminder, in the marching variant, two consecutive configurations in the alignment are generated by moving the location along a sequence by one position forward (or backward in the weak variant).

The basic idea is to use a sequence of simplifications of decreasing complexity so that the complexities form a geometric series with overall linear size and time. This algorithm is a contrast to our main algorithms, where the \Frechet distance might lie in a significantly larger interval, and simplification by itself is not enough, as the complexity of the curves does not decrease.

We only sketch the algorithm as many of the ideas used here are present in previous work \cite{dhw-afdrc-12}, and in our main algorithms.

\subsection{The ``known'' algorithm}

\paragraph*{Simplification.}

For a parameter $\delta >0$, a \emphw{$\delta$-marking} of sequence $P$ is created by scanning the sequence, and marking the first vertex encountered distance larger than $\delta$ from the last marked vertex (the first and last vertices of $P$ are always pre-marked).  The subsequence formed by this $\delta$-marking is the \emphi{$\delta$-simplification} of $P$, denoted by $\simpY{\delta}{P}$. See \cite{dhw-afdrc-12}.  It is easy to verify that for any two sequences $P$ and $Q$, we have (for the marching \Frechet distance) that
\begin{math}
    \cardin{ \dFY{P}{Q} - \dFY{\simpY{\delta}{P}}{\simpY{\delta}{Q}} } %
    \leq%
    2 \delta.
\end{math}

\paragraph*{Bounded spread.}

The \emphi{gap} of a sequence $P$ is the length of the longest edge of $P$, that is
\begin{math}
    \gapX{P} = \max_{i =1}^{\cardin{P}-1} \dY{p_i}{p_{i+1}}.
\end{math}
As the alignment ``must'' walk across the longest gap (in either curve), and one can bound the total length of the curves by multiplying the gap by the number of edges, we get the following.

\begin{lemma}
    \lemlab{goody}%
    Given two discrete curves $P = p_1, \ldots, p_\alpha$ and $Q= q_1, \ldots, q_\beta$ with a total of $n$ vertices, one can compute in linear time a number
    \begin{equation*}
        L = \max\Bigl(
        \frac{\gapX{P}}{2},
        \frac{\gapX{Q}}{2},
        \dY{p_1}{q_1},
        \dY{p_{\alpha}}{q_{\beta}}
        \Bigr),
    \end{equation*}
    such that for the marching \Frechet distance (either weak or strong), we have $L \leq \dFY{P}{Q} \leq 2 n L$.
\end{lemma}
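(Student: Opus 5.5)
The proof splits into three parts: computing $L$, the lower bound $L \leq \dFY{P}{Q}$, and the upper bound $\dFY{P}{Q} \leq 2nL$. Computing $L$ is immediate. A single scan over the edges of $P$ and of $Q$ gives $\gapX{P}$ and $\gapX{Q}$ in linear time, and the two endpoint distances take $O(1)$ time.

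For the lower bound, every marching alignment (weak or strong) starts at $(1,1)$ and ends at $(\alpha,\beta)$. Hence its pass is at least $\max(\dY{p_1}{q_1}, \dY{p_\alpha}{q_\beta})$. For the gap term, let $p_ip_{i+1}$ be a longest edge of $P$. In the weak case I first argue that any alignment must contain a move between some configuration $(i,j)$ and $(i+1,j)$. The $P$-index starts at $1$, ends at $\alpha$, and changes by at most one per step, and every step that changes it keeps the $Q$-index fixed, because marching moves are horizontal or vertical only. So the index must cross from $i$ to $i+1$ at some step, and that step has the form $(i,j)\to(i+1,j)$. By the triangle inequality,
\[
\dY{p_i}{p_{i+1}} \leq \dY{p_i}{q_j}+\dY{q_j}{p_{i+1}} \leq 2\max\bigl(\dY{p_i}{q_j},\dY{p_{i+1}}{q_j}\bigr),
\]
so the pass is at least $\gapX{P}/2$. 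The same argument applied to $Q$ gives $\gapX{Q}/2$. This is the step that actually uses the marching restriction: a diagonal move could pass the $P$-edge and the $Q$-edge simultaneously and would break this argument.

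For the upper bound, I will exhibit one explicit monotone marching alignment. Its pass then bounds the strong distance, and therefore also the weak distance, which is never larger. The alignment first advances along $P$ from $(1,1)$ to $(\alpha,1)$ and then advances along $Q$ to $(\alpha,\beta)$. To bound $\dY{p_i}{q_1}$, I walk along $P$ from $p_1$ to $p_i$, which gives
\[
\dY{p_i}{q_1} \leq \dY{p_1}{q_1} + (i-1)\gapX{P} \leq L + 2(i-1)L.
\]
Similarly, walking along $Q$ backwards from $q_\beta$ to $q_j$ gives
\[
\dY{p_\alpha}{q_j} \leq \dY{p_\alpha}{q_\beta} + (\beta-j)\gapX{Q} \leq L + 2(\beta-j)L.
\]
The worst cases are $i=\alpha$ and $j=1$, giving bounds of $(2\alpha-1)L$ and $(2\beta-1)L$. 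Both are at most $2nL$ because $\alpha,\beta \leq n$.

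The only point needing care is the lower-bound argument that a longest edge must be traversed by a purely horizontal move. With that established, the rest is routine triangle-inequality bookkeeping, and the constant $2n$ has enough slack.
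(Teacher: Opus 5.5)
Your proof is correct and follows the route the paper only sketches in the sentence before the lemma: a marching alignment must cross the longest edge with a purely horizontal (or vertical) step, which gives the $\mathrm{gap}/2$ lower bound, and each curve's length is at most its number of edges times its gap, which gives the upper bound. Your explicit staircase alignment through $(\alpha,1)$ and the triangle-inequality bookkeeping make that sketch precise, and your remark that diagonal moves would break the lower bound correctly explains why the lemma is restricted to the marching variant.
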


Intuitively, the above implies that for the \fseq discrete \Frechet distance problem, both curves are densely sampled, and the range of distances we have to search over is already relatively small. This makes the problem significantly easier, at least approximately.

\paragraph*{An approximate decider for the marching \Frechet distance.}

We are given two $c$-packed sequences of points $P$ and $Q$ in $\Re^d$. Additional input parameters $\rad, \eps > 0$ are provided. The task at hand is to decide if $\dFmY{P}{Q} < \rad$ or $\dFmY{P}{Q} \geq \rad$. The basic idea is to simplify the curves to resolution $\eps \rad/4$.  Next, one uses \BFS (or a similar graph search algorithm) to explore all the feasible configurations in the parametric discrete space for the two simplified curves, ignoring configurations with elevations exceeding $(1+\eps)\rad$. This approach is standard by now---see \citeau{dhw-afdrc-12} for a similar algorithm, and we thus get the following (note that this works for either the strong or weak variants).

\begin{lemma}
    \lemlab{decider_s_f}
    For the \fseq \Frechet distance for two $c$-packed curves of total complexity $n$, and parameters $\eps > 0$ and $\rad > 0$, one can in $O\bigl( c n(1+\tfrac{1}{\eps})\bigr)$ time, return one of the following values:
    \begin{compactenumI}
        \item ``$ < \rad$'' $\implies$ $\dFY{P}{Q} \leq (1+\eps)\rad$,

        \item ``$ > \rad$'' $\implies$ $\dFY{P}{Q} > \rad$, or

        \item return an interval $\dFY{P}{Q} \in [\alpha, (1+\eps)\alpha]$, where $\alpha$ is some real number.
    \end{compactenumI}
\end{lemma}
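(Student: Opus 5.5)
The plan is to simplify both curves at resolution $\mu = \eps\rad/4$, run a graph search over the low-elevation configurations of the simplified curves, and read off one of the three answers from the outcome. Set $P' = \simpY{\mu}{P}$ and $Q' = \simpY{\mu}{Q}$; computing these takes linear time by a single scan. The simplification bound stated above gives $\cardin{\dFmY{P}{Q} - \dFmY{P'}{Q'}} \leq 2\mu = \eps\rad/2$ for both the strong and weak marching variants.

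On the grid of configurations $P' \times Q'$ we run \BFS from $(1,1)$. For the strong variant the moves are one step forward in one sequence; for the weak variant backward steps are also allowed. We only visit configurations $(i,j)$ with $\dY{p'_i}{q'_j} \leq (1+\eps/2)\rad$. If $(|P'|,|Q'|)$ is reached, an alignment of $P'$ and $Q'$ with pass at most $(1+\eps/2)\rad$ exists, so $\dFmY{P}{Q} \leq (1+\eps)\rad$ and we return ``$<\rad$''. Otherwise $\dFmY{P'}{Q'} > (1+\eps/2)\rad$, so $\dFmY{P}{Q} > \rad$ and we return ``$>\rad$''. In this version the third outcome is never needed. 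If a strict reading of (I) is wanted, with a gap on both sides, one can run the test at two thresholds $\rad$ and $(1+\eps)\rad$ and return an interval when the answers disagree; this costs only constant overhead.

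The main obstacle is bounding the number of visited configurations by $O(cn(1+1/\eps))$, using packedness as in \cite{dhw-afdrc-12}.

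First, consecutive vertices of $P'$ are at distance more than $\mu$, except possibly the last pair. Hence for any center $z$ and radius $r \geq \mu$, the number of vertices of $P'$ inside $\ballY{z}{r}$ is $O(1 + \lenX{\cP\cap \ballY{z}{2r}}/\mu) = O(cr/\mu)$. This holds because between consecutive marked vertices inside the ball, the original curve travels a distance of at least $\mu$ within $\ballY{z}{r+\mu}$. The subcurves used here are the portions of $\cP$ between consecutive marked vertices, so they are disjoint.

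Second, charge each visited configuration $(i,j)$ to whichever of $p'_i$ and $q'_j$ comes later in the scan order. Fix a vertex $p'_i$. Every $q'_j$ charged to it lies in $\ballY{p'_i}{(1+\eps/2)\rad}$, so there are $O(c\rad/\mu) = O(c/\eps)$ of them. The symmetric statement holds for each $q'_j$.

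Summing over all vertices bounds the visited configurations and the \BFS edges by $O(cn/\eps)$. Adding the linear simplification cost gives the claimed $O(cn(1+1/\eps))$ running time.

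The one delicate point is making sure the packedness bound is applied to the original curves $\cP$ and $\cQ$ and not to the simplifications. The simplified curves need not be $c$-packed, but the vertex-counting argument above only uses the packedness of the originals.
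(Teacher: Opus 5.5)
Your proposal is correct and is essentially the paper's argument: simplify both curves at resolution $\eps\rad/4$ and run BFS over the feasible configurations of the simplified curves. Your threshold $(1+\eps/2)\rad$ (instead of the paper's $(1+\eps)\rad$) makes the implications in (I) and (II) hold exactly, and your packing count uses only the packedness of the original curves, supplying the analysis the paper merely attributes to Driemel et al.

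Two small slips do not affect the lemma as stated. First, for $\eps>1$ the per-vertex count is $O(c\,r/\mu)=O(c(1+1/\eps))$ with $r=(1+\eps/2)\rad$, not $O(c/\eps)$; the former is what your final bound actually uses. Second, in your optional aside, testing at $\rad$ and $(1+\eps)\rad$ can never certify that the distance is below $\rad$. The lower test must instead use a threshold $t$ with $(1+\eps')t<\rad$, e.g.\ $t=\rad/(1+\eps')^2$ for $\eps'$ a small constant fraction of $\eps$.
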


The \fseq \Frechet distance is contained in the interval $[L, n2L]$. By \lemref{goody}, one can perform a binary search over this interval to get an $(1+\eps)$-approximation. Specifically, plugging \lemref{decider_s_f} into \thmref{binary_search} yields the following.

\begin{lemma}
    Given two $c$-packed discrete curves $P$ and $Q$ of total complexity $n$, and a parameter $\eps > 0$, the above algorithm $(1+\eps)$-approximates, in $\OO( \tfrac{c}{\eps} n + cn \log \log n)$ time, the marching \Frechet distance (either strong or weak) between the two curves.
\end{lemma}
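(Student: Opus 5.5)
The plan is to bracket the optimum with \lemref{goody}, feed the decider of \lemref{decider_s_f} into \thmref{binary_search} with $\alpha = L$ and $\beta = 2nL$, and then check that the running times add up to the claimed bound.

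First, compute $L$ in linear time as in \lemref{goody}. This gives $\dFY{P}{Q} \in [L, 2nL]$, so the spread of the initial interval is $\beta/\alpha = 2n$.

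Next, turn \lemref{decider_s_f} into a $(1+\tau)$-decider in the sense of \defref{decider}. Outcome (III) already has the required form. Outcome (II) gives $\dFY{P}{Q} > \rad$ directly. Outcome (I) only guarantees $\dFY{P}{Q} \leq (1+\tau)\rad$, not $< \rad$. I would handle this the standard way: call the procedure at the shifted value $\rad' = \rad/(1+\tau)$. Then answer (I) yields $\dFY{P}{Q} \leq \rad$. Answer (II) yields $\dFY{P}{Q} > \rad/(1+\tau)$, which is still weaker than $> \rad$.

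The cleanest fix is to note that the proof of \thmref{binary_search} tolerates this slack. In each iteration the search interval is replaced by $[\alpha_i, r_i]$ or by $[r_i/(1+\tau), \beta_i]$. With $1+\tau = c_i^{1/4}$ instead of $\sqrt{c_i}$, the spread still drops to at most $c_i^{3/4}$. That is still a polynomial contraction of the spread, so the $O(\log\log)$ bound on the number of large-spread iterations and the geometric-series behaviour of the $\tau_i$ survive with changed constants. Making this adaptation precise, rather than citing \thmref{binary_search} verbatim, is the main obstacle. Alternatively, one can shrink the spread by a constant factor per step, which is equally routine.

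Finally, account for the time. The running time is $T(n, 1+\tau) = O(cn(1 + 1/\tau))$, and this is stable per \defref{stable}: the ratio of $T(n,1+\eps)$ to $T(n,1+\tau)$ is $O\bigl(\tfrac{1+1/\eps}{1+1/\tau}\bigr) = O\bigl(\tfrac{\tau}{\eps(\tau+1)}\bigr)$ up to constants, so the condition holds with exponent $1$. Then \thmref{binary_search} gives total time $O\bigl(T(n,1+\eps) + T(n,2)\log\log(2n)\bigr) = O\bigl(\tfrac{cn}{\eps} + cn\log\log n\bigr)$, plus $O(n)$ for computing $L$. Correctness follows because the final interval contains $\dFY{P}{Q}$ and has spread at most $1+\eps$. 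Since both \lemref{goody} and \lemref{decider_s_f} hold for the strong and the weak marching variants, the argument covers both.
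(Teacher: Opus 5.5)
Your proposal is correct and follows the paper's route: bracket the distance in $[L, 2nL]$ via \lemref{goody}, then plug the stable decider of \lemref{decider_s_f} into \thmref{binary_search}, which gives $O\bigl(T(n,1+\eps) + T(n,2)\log\log(2n)\bigr) = O(\tfrac{cn}{\eps} + cn\log\log n)$. You also notice, which the paper's one-line argument glosses over, that outcome (I) only certifies $\dFY{P}{Q} \leq (1+\tau)\rad$; your patch (calling at $r_i/(1+\tau)$ with $1+\tau = c_i^{1/4}$, which still contracts the spread to $c_i^{3/4}$) is sound, but your parenthetical alternative of shrinking the spread by a constant \emph{factor} per step would cost $\Theta(\log n)$ decider calls, since it is the exponent of the spread, not the spread itself, that must shrink geometrically.
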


\subsection{A linear-time constant-approximation algorithm}

The input is two curves $P$ and $Q$ that are $c$-packed with a total complexity of~$n$.  The algorithm sets $P_0 = P$ and $Q_0 =Q$. The algorithm starts with the lower bound $L$ on the marching \Frechet distance from \lemref{goody}, and let $\rad_1= L$. The algorithm sets $\delta_1 = \epsA \rad_1$, where $\epsA = 1/32$.

In the $i$\th iteration, for $i>0$, the algorithm computes $P_i = \simpY{\delta_i}{P_{i-1}}$ and $Q_i = \simpY{\delta_i}{Q_{i-1}}$. It then uses the decider procedure on $P_i$ and $Q_i$ to decide if (approximately) $\dFY{P_i}{Q_i} \leq 2\rad_i$, by running the decider with parameter $1+\epsA$. If the procedure returns that this is indeed the case, then the marching \Frechet distance between $P$ and $Q$ is (say) at most $4\rad_i$ (see \lemref{correct} below for details), and at least $\rad_{i-1}$. In this case, the algorithm starts the second stage, of converting this into a $(1+\eps)$-approximation, as described below. Otherwise, the algorithm sets $\rad_{i+1} = 2 \rad_i$ and $\delta_{i+1} =2 \delta_i$, and continues to the next iteration.

\paragraph*{Analysis.}

In every iteration, the algorithm continues only if the marching \Frechet distance is significantly bigger than the current candidate distance. Thus, the algorithm is allowed to double the simplification radius. As such, the simplification radii form a geometric series that is bounded by the true marching \Frechet distance. A careful tracking of the constants involved leads to the following.

\begin{lemma}
    \lemlab{correct}%
    The algorithm returns a value $\rad_i$, which is a constant approximation. That is, the marching \Frechet distance is in the interval
    \begin{math} [\tfrac{3}{8} \rad_{i},\tfrac{5}{4} \rad_i].
    \end{math}
\end{lemma}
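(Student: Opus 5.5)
The plan is to track the accumulated simplification error across iterations and combine it with the guarantees of the decider from \lemref{decider_s_f}, run with parameter $1+\epsA$, where $\epsA = 1/32$. We write $d(\cdot,\cdot)$ for the marching \Frechet distance.

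\emph{Step 1: cumulative simplification error.} Since $P_i = \simpY{\delta_i}{P_{i-1}}$ and $Q_i = \simpY{\delta_i}{Q_{i-1}}$, the simplification inequality $\cardin{d(P_{j-1},Q_{j-1}) - d(P_j,Q_j)} \leq 2\delta_j$ holds at each level $j$. Telescoping over $j=1,\ldots,i$ gives
\begin{equation*}
    \cardin{d(P,Q) - d(P_i,Q_i)} \leq 2\sum\nolimits_{j\le i}\delta_j \leq 4\delta_i = 4\epsA\rad_i = \rad_i/8,
\end{equation*}
because $\delta_j = \epsA\rad_1 2^{j-1}$ is a geometric series with ratio~$2$. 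I intend to use this bound both at the final iteration $i$ and at the preceding iteration $i-1$, where it reads $\rad_{i-1}/8 = \rad_i/16$.

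\emph{Step 2: upper bound.} Suppose the algorithm stops at iteration $i$ because the decider answered positively. By \lemref{decider_s_f}, this gives $d(P_i,Q_i)\le (1+\epsA)\rad_i$; here I read the decider's threshold as $\rad_i$. Step~1 then yields $d(P,Q) \le (1+\tfrac1{32}+\tfrac18)\rad_i < \tfrac54\rad_i$. If instead the decider returns an interval of spread $1+\epsA$, the algorithm simply has a constant approximation directly, so that case is harmless. The one point to settle is which threshold the decider is called with: the text says ``$\le 2\rad_i$'', and with that reading the constant $5/4$ is not reached. I would make the threshold consistent with the stated constants, i.e., call the decider at $\rad_i$. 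Equivalently, the constants in the statement follow after renaming $\rad_i$ accordingly. I expect this constant-bookkeeping to be the only delicate point.

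\emph{Step 3: lower bound.} If $i=1$, then $\rad_1 = L \le d(P,Q)$ by \lemref{goody}, which is stronger than needed. Otherwise iteration $i-1$ ended with the answer ``$>\rad_{i-1}$'', so $d(P_{i-1},Q_{i-1}) > \rad_{i-1} = \rad_i/2$. Step~1 applied at level $i-1$ then gives
\begin{equation*}
    d(P,Q) \ge \rad_i/2 - \rad_i/16 = \tfrac{7}{16}\rad_i \ge \tfrac38\rad_i.
\end{equation*}
Together with Step~2, this places $d(P,Q)$ in $[\tfrac38\rad_i,\tfrac54\rad_i]$. The proof also needs that the loop terminates. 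This holds because $\rad_i$ doubles each round; once $\rad_i \ge 2d(P,Q)$, Step~1 gives $d(P_i,Q_i)\le d(P,Q)+\rad_i/8 < \rad_i$, and the decider must answer positively at that point (or return an interval).
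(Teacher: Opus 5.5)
Your proof is correct and is the constant-tracking the paper only alludes to; the paper gives no written proof beyond the geometric-series remark. Your diagnosis of the threshold is also right: with the literal call at $2\rad_i$ one gets $d(P,Q)\in[\tfrac{15}{16}\rad_i,\tfrac{35}{16}\rad_i]$. So the constants $[\tfrac38,\tfrac54]$ hold relative to the threshold $2\rad_i$ actually passed to the decider, exactly as your renaming remark says.

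The one loose end is the interval outcome. It is not ``directly'' a constant approximation of $d(P,Q)$, because of the additive $\rad_i/8$ error, and the decider may return an interval even when $d(P_i,Q_i)\gg\rad_i$. The simple fix is to convert it into a comparison answer. If $\alpha\le\rad_i$, it certifies $d(P_i,Q_i)\le(1+\epsA)\rad_i$; if $\alpha>\rad_i$, it certifies $d(P_i,Q_i)>\rad_i$. After that, your Steps~2 and~3 and your termination argument apply verbatim, and the returned value is indeed $\rad_i$.
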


\begin{lemma}
    The running time of the algorithm is $O\bigl(cn \bigr)$.
\end{lemma}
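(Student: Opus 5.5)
The plan is to bound the cost of iteration $i$ by $O(c(\cardin{P_i}+\cardin{Q_i}))$, and then show that $\sum_i \cardin{P_i}+\cardin{Q_i} = O(n)$. The second stage, converting the constant approximation into a $(1+\eps)$-approximation, is accounted for separately.

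\textbf{Cost of one iteration.} Computing $P_i=\simpY{\delta_i}{P_{i-1}}$ and $Q_i$ is a single scan, so it takes $O(\cardin{P_{i-1}}+\cardin{Q_{i-1}})$ time. The decider of \lemref{decider_s_f} is run on $P_i,Q_i$ with the constant parameter $\epsA$ and threshold $2\rad_i$. By \lemref{decider_s_f} it costs $O(c'(\cardin{P_i}+\cardin{Q_i}))$, where $c'$ is the packedness of the simplified curves.

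The point to check is that simplification does not destroy packedness. The standard fact from \cite{dhw-afdrc-12} is that a $\delta$-simplification of a $c$-packed curve is $O(c)$-packed. A subtlety is that $P_i$ is a simplification of $P_{i-1}$, not of $P$ directly. However, the marked vertices of $P_i$ are pairwise separated by $>\delta_i$ along consecutive marks, and every vertex of $P$ lies within $\sum_{j\le i}\delta_j\le 2\delta_i$ of $P_i$. So I would argue directly that $P_i$ is $O(c)$-packed: each edge of $P_i$ shortcuts a subcurve of $P$ of at least comparable length, and we compare lengths inside a ball enlarged by $O(\delta_i)$. I expect this iterated-simplification packedness argument to be the main technical obstacle, since the constants must not blow up with $i$. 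The geometric growth $\delta_j=2^{j-1}\delta_1$ is what prevents accumulation.

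\textbf{Sizes of the simplified curves.} Consecutive vertices of $P_i$ are at distance $>\delta_i$, so $\cardin{P_i}\le 1+\lenX{\cP}/\delta_i$. This alone is not enough, because $\lenX{\cP}$ is unrelated to $\delta_1$. Instead I use $\cardin{P_i}\le\cardin{P}$ for early iterations, together with a geometric-decay argument. Every edge of $P$ has length at most $\gapX{P}\le 2L=2\rad_1$. Hence, while $\delta_i$ is comparable to $\epsA\rad_i$, a ball of radius $\delta_i$ around a marked vertex captures a stretch of $P$.

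By $c$-packedness, the path from one marked vertex to the next has length between $\delta_i$ and $\delta_i+\gapX{P}$. More usefully, I would show $\cardin{P_{i+1}}\le \cardin{P_i}/2 + O(1)$ in the regime where doubling happens. A cleaner route is to charge: each vertex of $P_i$ that survives in $P_{i+1}$ is at distance $>2\delta_i$ from the previous survivor, so between two survivors at least one vertex of $P_i$ is deleted, since consecutive vertices of $P_i$ are within roughly $\delta_i+$gap. Here the gap term is $\le 2\rad_1\le 2\rad_i$, and must be controlled against $\delta_i=\epsA\rad_i$. This is where I would first try to show that gaps of $P_i$ are at most $O(\rad_i)$ while the algorithm continues, because the distance is larger than $\rad_i$ and \lemref{goody} lets edges be bounded by twice the distance.

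If this halving only holds up to an additive constant, the sizes still form a geometric series bounded by $O(n)$. Summing gives $O(c n)$ over all iterations.

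\textbf{Termination and second stage.} By \lemref{correct}, the loop stops once $\rad_i$ is within a constant of the true distance, so no iterations are wasted beyond the point where the sizes bottom out at $O(1)$. The final constant-spread bracket then needs only $O(1)$ decider calls at constant precision to verify. The lemma concerns the constant-approximation stage, so its total cost is $O(cn)$.
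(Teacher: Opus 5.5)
Your proposal has a genuine gap, and it starts from a false step. You discard the bound $\cardin{P_i}\le O(1)+\lenX{\cP}/\delta_i$ because ``$\lenX{\cP}$ is unrelated to $\delta_1$.'' It is related, and you even record the needed fact a few lines later: every edge of $P$ has length at most $\gapX{P}\le 2L$. Summing over the edges gives $\lenX{\cP}\le(\cardin{P}-1)\gapX{P}\le 2nL$, while $\delta_i=2^{i-1}\epsA L=2^{i-1}L/32$. Consecutive vertices of $P_i$ (except possibly the pre-marked last one) are more than $\delta_i$ apart. So the subcurves of $\cP$ between them are interior-disjoint and each is longer than $\delta_i$, hence $\cardin{P_i}\le 4+\lenX{\cP}/\delta_i\le 4+64n/2^{i-1}$. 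This is a geometric series, so $\sum_i\cardin{P_i}=O(n)$, and similarly for $Q$. The additive constants contribute only the number of iterations, which is $O(\log n)$: the candidate distance in iteration $i$ is $2^{i-1}L$, the final candidate is at most $\tfrac{8}{3}\dFY{P}{Q}$ by \lemref{correct}, and $\dFY{P}{Q}\le 2nL$ by \lemref{goody}. Combined with the $O(c(\cardin{P_i}+\cardin{Q_i}))$ cost per iteration (plus a linear scan to simplify), this is the paper's whole proof, and it is the idea your proposal is missing.

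The halving argument you substitute is not proved, and the justification you sketch does not go through. To conclude that some vertex of $P_i$ is deleted between any two consecutive survivors in $P_{i+1}$, you need all consecutive vertices of $P_i$ to be within $\delta_{i+1}=2\delta_i$ of each other. But consecutive vertices of $P_i$ are only within $\delta_i+\gapX{P_{i-1}}$ (not $\delta_i+\gapX{P}$), and the gaps accumulate over the rounds. The natural bound is $\gapX{P_i}\le\gapX{P}+\sum_{j\le i}\delta_j=\gapX{P}+\delta_{i+1}-\delta_1$, which exceeds $\delta_{i+1}$ whenever $\gapX{P}>\delta_1$, and $\gapX{P}$ may be as large as $2L=64\delta_1$. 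Your remedy, bounding the gaps by a constant times the current candidate distance ``because the distance is larger than'' it, runs in the wrong direction: a lower bound on the distance yields no upper bound on edge lengths. A bound of order $2^{i-1}L=32\delta_i$ would be far too weak anyway. Lastly, the step you flag as the main obstacle, $O(c)$-packedness of the iterated simplifications, is unnecessary. The paper treats the per-iteration cost as immediate. It can be justified by charging the pairs examined by the decider to the length of the original $c$-packed curves $\cP,\cQ$ inside balls of radius $O(2^{i-1}L)$, since consecutive vertices of $P_i$ and $Q_i$ are more than $\delta_i=\Theta(2^{i-1}L)$ apart.
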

\begin{proof}
    Clearly, the $i$\th iteration takes $O\bigl( c(|P_i| + |Q_i| ) \bigr)$ time.  The total length of $\lenX{P_0} \leq (n-1) \gapX{P} \leq 2n \rad_0$ and similarly $\lenX{Q_0} \leq 2n \rad_0$. Furthermore, the distance between any two consecutive vertices of $P_0$ (or $Q_0$) is at most $\rad_0$. The curve $P_i$ is the result of a sequence of simplifications of $P_0$. Still, any pair of consecutive vertices is at a distance of at least $\delta_i$ from each other in the original curve. Thus, we have
    \[
        \cardin{P_i} \leq 4 + \frac{\lenX{P}}{\delta_i} \leq 4 + \frac{\lenX{P}}{2^{i-1}\gapX{P} /64 } = O(1 + n/2^{i-1} ).
    \]
    It readily follows that $\sum_i |P_i| = O( n)$ and $\sum_i |Q_i| = O(n)$, establishing the running time.
\end{proof}

\paragraph*{$(1+\eps)$-approximation.}

To boost the constant approximation, we again plug \lemref{decider_s_f} into \thmref{binary_search} to search over this constant spread interval.

\begin{theorem}
    \thmlab{marching}%
    Given two $c$-packed discrete curves $P$ and $Q$ of total complexity $n$, and a parameter $\eps > 0$, the above algorithm $(1+\eps)$-approximates, in $O\bigl( c(1 + \tfrac{1}{\eps}) n \bigr) $ time, the marching \Frechet distance (either weak or strong) between $P$ and $Q$.
\end{theorem}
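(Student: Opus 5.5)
The proof combines the two stages already described. First we run the linear-time constant-approximation algorithm, which by \lemref{correct} returns a value $\rad_i$ with the marching \Frechet distance in $[\tfrac{3}{8}\rad_i, \tfrac{5}{4}\rad_i]$. This is an interval $[\alpha,\beta]$ containing $\vOpt$ with spread $\beta/\alpha = 10/3$, a constant. Its running time is $O(cn)$ by the preceding running-time lemma.

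Next we apply \thmref{binary_search} to this interval, using as decider the procedure of \lemref{decider_s_f}. Two things need checking before it can be plugged in.

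The first is that \lemref{decider_s_f} is a $(1+\tau)$-decider in the sense of \defref{decider}. Its first outcome, ``$<\rad$'', only guarantees $\dFY{P}{Q}\le(1+\tau)\rad$, not $\dFY{P}{Q}<\rad$. I would fix this by calling the decider at $\rad/(1+\tau)$. Outcome ``$<$'' then gives $\vOpt \le \rad$, and outcome ``$>$'' gives $\vOpt > \rad/(1+\tau)$. In the latter case we continue in $[\rad/(1+\tau),\beta_i]$ rather than $[\rad,\beta_i]$. In the proof of \thmref{binary_search}, with $1+\tau=\sqrt{c_i}$, the new spread is then at most $c_i^{3/4}$ instead of $\sqrt{c_i}$ (equivalently, choose $\tau$ a small power of $c_i$). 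This still shrinks $\log c_i$ geometrically, so the iteration count analysis and the geometric behavior of $\tau_i$ survive with changed constants. I expect this bookkeeping to be the only real obstacle; the rest is routine.

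The second is stability. The decider runs in time $T(n,1+\tau)=O(cn(1+1/\tau))$. For $\tau>\eps$ the ratio $T(n,1+\eps)/T(n,1+\tau)$ is $O\bigl(\tfrac{1+1/\eps}{1+1/\tau}\bigr)=O\bigl(\tfrac{\tau(1+\eps)}{\eps(1+\tau)}\bigr)=O\bigl(\tfrac{\tau}{\eps(\tau+1)}\bigr)$, since $\eps<1$. So the decider is stable in the sense of \defref{stable} with exponent $1$.

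\thmref{binary_search} then gives total time $O\bigl(T(n,1+\eps)+T(n,2)\log\log\tfrac{10}{3}\bigr)=O\bigl(cn(1+\tfrac1\eps)\bigr)+O(cn)$. Adding the $O(cn)$ of the first stage yields the bound $O\bigl(c(1+\tfrac1\eps)n\bigr)$. Correctness follows because the final interval contains $\vOpt$ and has spread at most $1+\eps$. Everything applies verbatim to both the weak and the strong variants, since \lemref{goody}, \lemref{decider_s_f} and \lemref{correct} all hold for both.
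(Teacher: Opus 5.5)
Your proposal is correct and follows the paper's argument: get the spread-$10/3$ interval from \lemref{correct} in $O(cn)$ time, then plug \lemref{decider_s_f} into \thmref{binary_search}; your stability check (exponent $1$) and your patch for the one-sided $(1+\tau)$ slack in outcome ``$<\rad$'' supply details the paper leaves implicit. One slip in that patch: with $1+\tau=\sqrt{c_i}$ and query $r_i=\alpha_i\sqrt{c_i}$, the decider is called at $r_i/(1+\tau)=\alpha_i$, so the ``$>$'' branch makes no progress. Either move the query to $\alpha_i c_i^{3/4}$, or keep it at $r_i$ and take $1+\tau=c_i^{1/4}$ (both call the decider at $\alpha_i c_i^{1/4}$); each gives spread at most $c_i^{3/4}$, after which your geometric-decay argument goes through.
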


\section{Conclusions}
\seclab{conclusions}

We presented several improved algorithms for approximating the \Frechet distance between continuous and discrete $c$-packed curves. In particular, our running time is $O(n)$ instead of $O(n \log n)$ if $c,\tfrac{1}{\eps} \in O(1)$.  The previous algorithm of \citeau{dhw-afdrc-12}, and follow-up work, retained the extra $\log$ factor because they performed a logarithmic number of calls to a linear time decider. The decider time was, in turn, only linear because the input curves were $c$-packed. A natural approach to get a linear-time algorithm is a gradation-type sequence of simplifications, as done in the simpler discrete algorithm for the marching distance; see the algorithm in \secref{marching}.

However, this approach does not work for the more prevalent variants of the \Frechet distance, as the complexity of the free space, which is the sublevel set of the elevation function used by the decision algorithm, does not become simpler quickly enough. Instead, the main insight is that one can compute an approximate elevation function, reminiscent of a bipartite well-separated pair decomposition, once and for all in linear time. This approximate elevation function can now be used to perform the search for a path realizing the \Frechet distance. The challenge is that the natural exploration algorithms use a heap, which would mitigate the benefit so far. Furthermore, the new structure is no longer simply a grid, but rather a BSP (binary space partition) of the parametric space.  To handle this, the algorithm uses specific properties of the elevation function to carefully propagate the information across the space as it performs an exploration for the pass (i.e., the bottleneck distance). The process becomes more challenging as one needs to respect the monotonicity of the desired alignment.  An interesting side result is that the strong continuous \Frechet distance between two 1-dimensional curves ($c$-packed or otherwise) can be computed exactly in $O(n^2)$ time (\corref{1dexact}).  The algorithm is quite simple and does not use the data structure of \secref{decomposing-free-space}. Instead, it uses the elevation functions of the cells directly.

\bigskip

In summary, we consider the result in this paper to be quite surprising --- that one can, even approximately, get a linear-time algorithm stands in contrast to the numerous known algorithms for \Frechet distance and its variants, all requiring superlinear running time (and usually at least quadratic).

\paragraph*{It's deja v{}u all over again.}

Our techniques are reminiscent of ideas found in the literature. The parametric space decomposition is similar to the one used by \cite{ypfa-seaad-16, afpy-adtwe-16}, except that they get an extra log factor in their analysis, which our tighter analysis avoids.  A similar decomposition is implicitly used by \citeau{bkn-wdfpa-21}---but this is an applied paper, and they only solve the decision problem, so while in practice their algorithm is (very) fast, the theoretical performance is unclear.

\citeau{bblmm-cfdrl-16}, for the case of polyhedral distance functions,
removes the extra $\log$ factor (the algorithm is still quadratic in this
case) using the nicer behavior of the distance function being
propagated. This is somewhat similar conceptually to the approximation
simplified elevation function used by the algorithm here.

\printbibliography

\end{document}